\def\BF{\mathbb{F}}
\def\bv{\mathbf{v}}
\def\bu{\mathbf{u}}
\def\bH{\mathbf{H}}
\def\bO{\mathbf{O}}
\def\bI{\mathbf{I}}
\def\bA{\mathbf{A}}
\def\bP{\mathbf{P}}
\def\bQ{\mathbf{Q}}
\def\bM{\mathbf{M}}
\def\bB{\mathbf{B}}
\def\bV{\mathbf{V}}
\def\bU{\mathbf{U}}
\def\cF{\mathcal{F}}
\def\cD{\mathcal{D}}
\def\cS{\mathcal{S}}
\def\cC{\mathcal{C}}
\def\cH{\mathcal{H}}
\def\cP{\mathcal{P}}
\def\cQ{\mathcal{Q}}
\def\cX{\mathcal{X}}
\def\cX{\mathcal{X}}
\def\cF{\mathcal{F}}
\def\cG{\mathcal{G}}
\def\tcC{\Tilde{\mathcal{C}}}
\def\tcS{\tilde{\mathcal{S}}}
\def\tcF{\Tilde{\mathcal{F}}}
\def\tbv{\Tilde{\mathbf{v}}}
\def\tbu{\Tilde{\mathbf{u}}} 
\def\obv{\bar{\mathbf{v}}}
\def\obu{\bar{\mathbf{u}}}
\def\sL{\mathscr{L}}
\def\rank{\mathrm{rank}}
\def\rs{\mathrm{rs}}
\def\wt{\mathrm{wt}}
\def\RREF{\mathrm{RREF}}
\def\EF{\mathrm{EF}}
\newtheorem{construction}{Construction}
\journalname{}
\begin{document}
	
	\title{Generalized bilateral multilevel construction for constant dimension codes from parallel mixed dimension construction \thanks{This research is supported by the National Key Research and Development Program of China (Grant No. 2022YFA1005000), the National Natural Science Foundation of China (Grant Nos. 12141108, 62371259, and 12411540221), the Fundamental Research Funds for the Central Universities of China (Nankai University), and the Nankai Zhide Foundation.}
	}
	
	
	\author{Han Li  \and
		Fang-Wei Fu
	}
	

	\institute{Han Li\at
		Chern Institute of Mathematics and LPMC, Nankai University, Tianjin, 300071, China\\
		\email{hli@mail.nankai.edu.cn}             \\
		\and
		Fang-Wei Fu\at
		Chern Institute of Mathematics and LPMC, Nankai University, Tianjin, 300071, China\\
		\email{fwfu@nankai.edu.cn}}
	\date{Received: date / Accepted: date}

	\maketitle
	\begin{abstract}
 Constant dimension codes (CDCs), as special subspace codes, have received extensive attention due to their
applications in random network coding. The basic problem of CDCs is to determine the maximal possible size $A_q(n,d,\{k\})$  for given parameters $q, n, d$, and $k$. This paper introduces criteria for choosing appropriate bilateral identifying vectors compatible with the parallel mixed dimension construction (Des. Codes Cryptogr. 93(1):227--241, 2025). We then utilize the generalized bilateral multilevel construction (Des. Codes Cryptogr. 93(1):197--225, 2025) to improve the parallel mixed dimension construction efficiently. Many new CDCs that are better than the previously best-known codes are constructed. 
	\keywords{Constant dimension codes \and Generalized bilateral multilevel construction \and Mixed dimension construction \and Parallel construction}
	\end{abstract}

\section{Introduction}
Subspace codes have important implications for error correction in random linear network coding by K\"{o}etter and  Kschischang \cite{Network}. For any prime power $q \geq 2$, let $\BF_q$ be the finite field with $q$ elements, and $\BF_q^n$ be the $n$-dimensional vector space over $\BF_q$.  The set of all $\BF_q$-subspaces of $\BF_q^n$ is denoted by $\cP_q(n)$. Given a nonnegative integer $ k \leq n$, the set of all $k$-dimensional subspaces in $\cP_q(n)$ is known as the Grassmannian $\cG_q(n,k)$. The  cardinality of $\cG_q(n,k)$ is given by the $q$-ary Gaussian binomial coefficient \[\begin{bmatrix}n\\ k     \end{bmatrix}_{q}= \prod \limits_{i = 0}^{k - 1}\frac{q^n - q^i}{q^k-q^i}.\]
It is obvious that $\cP_q(n) = \cup_{k=0}^n \cG_q(n,k)$, which forms a metric space \cite{Network} equipped with the subspace metric $d_{S}(U,V) \triangleq \dim(U + V) - \dim(U \cap V)$,  where $\dim(\cdot)$ denotes the dimension of a vector space over $\mathbb{F}_{q}$.
A subspace code $\cC$ is a nonempty subset of $\cP_q(n)$. Elements in $\cC$ are called codewords.  A subspace code $\cC$ 
 is called an $(n,d)_q$ subspace code if the subspace distance between each pair of distinct codewords in $\cC$ is at least $d$. The dimension distribution of $\cC$ is denoted by $\eta_0(\cC)$, $\eta_1(\cC)$, $\dots$, $\eta_n(\cC)$, where $\eta_i(\cC)$ is defined as 
\[
|\{X \in \cC: \dim(X) = i\}|.
\] Here, $|\cdot|$ denotes the cardinality of the set. Since a code typically has many values satisfying $\eta_i(\cC)=0$, these values are often omitted from the list. Consider a set $T\subseteq \{0,1, \dots, n\}$, then we say that $\cC$ is an $(n,d,T)_q$  subspace code if $\eta_i(\cC) > 0$ for $i \in T$ and $\eta_i(\cC)=0$ for $i \notin T$. If $|T| >1$, this code is called a mixed dimension code (MDC). If $|T| =1$, this code is called a constant dimension code (CDC). Additionally, an $(n, d, T)_q$ subspace code with $M$ codewords can be denoted as an $(n, M, d, T)_q$ subspace code. Let $A_q(n,d,\{k\})$ denote the maximum possible size of an $(n,d,\{k\})_q$ CDC. An $(n,d,\{k\})_q$ CDC with $A_q(n,d,\{k\})$ codewords is said to be optimal. One of the main problems in the study of CDCs is to determine the exact value of $A_q(n,d,\{k\})$. The homepage http://subspacecodes.uni-bayreuth.de/  lists the currently best-known lower and upper bounds on $A_q(n,d,\{k\})$ for $q \leq 9$, $n \leq 19$, all $d$, and all $k$ \cite{Table}.

Rank-metric codes, particularly maximum rank distance (MRD) codes, are usually used to construct CDCs. The lifting construction is an effective method for constructing CDCs, and lifted MRD codes can result in asymptotically optimal CDCs \cite{Lifting}. By introducing identifying vectors and a new family of Ferrers diagram rank-metric codes, Etzion and Silberstein \cite{Multilevel} proposed the multilevel construction to generalize the lifting construction. However, how
to choose identifying vectors effectively remains an open problem. Using the idea of pending dots and graph matching, large subspace codes in $\cG_q(n,k)$ with minimum subspace distance $4$ or $2k-2$ were constructed in \cite{penddot2,penddot}. In \cite{Linkage}, Gluesing-Luerssen and
Troha presented the linkage construction by linking two CDCs with small length to obtain CDCs with improved  lower bounds. Xu and Chen \cite{Parallel} established the so-called parallel linkage construction by modifying the linkage construction and obtained new lower bounds for $A_q(2k,d,\{k\})$ when $k \geq d$, which was the first significant improvement
 of the linkage construction. Later, He \cite{TwoParal} constructed CDCs using two parallel linkage constructions which generalized the result of \cite{Parallel}. There are primarily two methods for constructing CDCs based on the parallel linkage construction in \cite{TwoParal}. The first method is to combine the parallel linkage construction with the multilevel construction to
construct larger CDCs  \cite{ParControl,MulLink,ParMul,ParLinkMul}. The second method is the block
inserting construction,  which flexibly inserts CDCs constructed from matrix blocks of small CDCs and rank-metric codes into  the CDCs obtained from
the parallel linkage construction \cite{BlockInsert1,BlockInsert2,Insert0,GenInsert}. For more constructions of CDCs and lower bounds for CDCs, one can refer to \cite{Coset,GerLinkage,LinkMul}. Recently, the most notable work is the mixed dimension construction \cite{MixDD}, which is based on mixed dimension/distance subspace codes and  rank-metric codes. Subsequently, He et al. \cite{ParllelMixed} applied the parallel construction to the mixed dimension construction, and obtained some new lower bounds for CDCs.

By  defining inverse identifying vectors
and Ferrers diagram rank-metric codes with given ranks, Liu and Ji \cite{DoubleMul} proposed the inverse multilevel construction, which can be
seen as a generalization of the parallel construction. By combining the multilevel construction and the
inverse multilevel construction, they introduced the double multilevel construction, which helps to reduce the complexity of selecting identifying vectors. Yu et al. \cite{Bilateral} generalized the concepts of identifying and inverse identifying vectors by introducing bilateral identifying vectors. By using bilateral Ferrers diagram rank-metric codes, they proposed the bilateral multilevel construction, which can effectively alleviate the problem of selecting identifying vectors. The generalized bilateral multilevel construction \cite{GBilateral} generalizes the bilateral multilevel construction and can be combined with the double multilevel construction to produce new CDCs.

This paper is devoted to constructing CDCs of larger sizes. Inspired by the ideas in \cite{ParllelMixed,GBilateral}, we propose a generalized bilateral multilevel construction based on the parallel mixed dimension construction and obtain some new lower bounds for CDCs. Section \ref{Sec-Pre} revisits some fundamental concepts and known results on rank-metric codes, Ferrers diagram rank-metric codes, the double multilevel construction, the generalized bilateral multilevel construction, and the mixed dimension construction. Section \ref{Sec-main} presents criteria for selecting bilateral identifying vectors and shows how to combine the generalized bilateral multilevel construction with the parallel mixed dimension construction in Theorems \ref{T-BPM-CDC} and \ref{T-SPar-CDC}.   By applying Theorem \ref{T-BPM-CDC}, we choose some bilateral identifying vectors to construct CDCs. Similarly, by applying Theorem \ref{T-SPar-CDC}, we propose a simpler method for selecting bilateral identifying vectors to construct CDCs. In Section \ref{Sec-Result}, we obtain some new lower bounds for CDCs and compare them with
the previous works. In Section \ref{Sec-Conclusion}, we conclude the paper.

\section{Preliminaries} \label{Sec-Pre}

In this section, we review some basic definitions and known results which are necessary for obtaining our results.
\subsection{Rank-metric codes}
Let $\BF_q^{m \times n}$ denote the set of all $m \times n$ matrices over $\BF_q$. For any two matrices $\bA$ and $\bB$ in $\BF_q^{m \times n}$, the rank distance $d_R(\bA, \bB)$ is defined as $\rank(\bA-\bB)$. An $(m \times n,d)_q$ rank-metric code (RMC) $\cC$ is a subset of $\BF_q^{m \times n}$ with a minimum rank distance
\[
d= \min_{\bA,\bB \in \cC, \bA \neq 
\bB} \{ d_R (\bA,\bB) \}.
\]
Analogous to Hamming metric codes, the size of a rank-metric code $\cC$ has a Singleton-like upper bound, which is given by \[\Delta(m,n,d)_q := q^{\max\{m,n\}(\min\{m,n\}-d+1)}.\] When the cardinality of $\cC$ achieves this bound, it is called a maximum rank distance (MRD) code. Furthermore, if $\cC$ forms an $\BF_q$-linear subspace of $\BF_q^{m \times n}$, then it is referred to as a linear code and is denoted by an $[m \times n,d]_q$ RMC. Linear MRD codes exist for all feasible parameters \cite{DELSARTE,1985Theory,Array}. The rank distribution of a linear MRD code can be determined from its parameters.

\begin{theorem}[Delsarte Theorem \cite{DELSARTE}]\label{T-rank-distribution}
Let $m, n, d$, and $i$ be positive integers with $m  \geq d$, $n \geq d$.  Then the number of matrices with rank $i$ in an $[m \times n, d]_q$ MRD code is given by
\[D(m,n,d,i)_q := \begin{bmatrix}
	\min\{m,n\} \\
	i
\end{bmatrix}_{q}
\sum_{j = 0}^{i - d}(-1)^{j} q^{
\frac{j(j-1)}{2}
}
\begin{bmatrix}
	i \\
	j
\end{bmatrix}_{q}
(q^{\max\{m,n\}(i - d - j + 1)} - 1)
\]
for $d \leq i \leq \min\{m,n\}.$
\end{theorem}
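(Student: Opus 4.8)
The plan is to first count, for each subspace $U \subseteq \BF_q^m$, the codewords whose column space is contained in $U$, then recover the exact rank distribution by Möbius inversion over the subspace lattice, and finally collapse the resulting alternating sum into the stated closed form via a $q$-binomial identity. Assume throughout that $m \le n$, so $\min\{m,n\}=m$ and $\max\{m,n\}=n$; the case $m>n$ follows by transposing codewords, which preserves both rank and linearity. View a codeword $\bA \in \cC$ as a matrix whose column space $\operatorname{cs}(\bA) \subseteq \BF_q^m$ has dimension $\rank(\bA)$, and for $U \subseteq \BF_q^m$ set $g(U) := |\{\bA \in \cC : \operatorname{cs}(\bA) \subseteq U\}|$.

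The heart of the argument, and the step I expect to be the main obstacle, is to show that $g(U)$ depends only on $u := \dim U$, namely $g(U) = q^{\,n\max\{0,\,u-d+1\}}$. Fixing coordinates so that $U$ is spanned by the first $u$ unit vectors, $\{\bA \in \cC : \operatorname{cs}(\bA)\subseteq U\}$ equals $\cC \cap V_U$, where $V_U$ is the $un$-dimensional space of matrices whose last $m-u$ rows vanish. On one hand, the elementary dimension inequality together with $\dim_{\BF_q}\cC = n(m-d+1)$ (this is exactly where the MRD hypothesis enters) gives $\dim(\cC \cap V_U) \ge n(m-d+1)+un-mn = n(u-d+1)$. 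On the other hand, every nonzero element of $\cC \cap V_U$, read as a $u\times n$ matrix, still has rank at least $d$, so $\cC \cap V_U$ is a rank-metric code in $\BF_q^{u\times n}$ of minimum distance at least $d$; the Singleton-type bound $\Delta(u,n,d)_q = q^{n(u-d+1)}$ then forces $|\cC \cap V_U| \le q^{n(u-d+1)}$ when $u\ge d$ (and $\cC \cap V_U = \{\bO\}$ when $u<d$). The two bounds coincide, proving the claim. Note that the MRD property is used twice here: the exact dimension for the lower bound, and the Singleton bound on the shortened subcode for the upper bound.

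With $g$ in hand I would disentangle ``contained in'' from ``equal to''. Writing $N(W) := |\{\bA \in \cC : \operatorname{cs}(\bA)=W\}|$, we have $g(U) = \sum_{W \subseteq U} N(W)$, so Möbius inversion in the lattice of subspaces of $U$ yields $N(W) = \sum_{W' \subseteq W} \mu(W',W)\,g(W')$, where $\mu(W',W) = (-1)^{t} q^{\frac{t(t-1)}{2}}$ with $t = \dim W - \dim W'$. Since $g(W')$ depends only on $\dim W'$ and there are $\begin{bmatrix} w \\ s \end{bmatrix}_q$ subspaces of each dimension $s$ inside a $w$-dimensional $W$, the value $N(W)$ depends only on $w := \dim W$; call it $N_w$. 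Each rank-$i$ codeword has a unique $i$-dimensional column space, and there are $\begin{bmatrix} m \\ i \end{bmatrix}_q$ candidates, so $D(m,n,d,i)_q = \begin{bmatrix} m \\ i \end{bmatrix}_q N_i$, which already supplies the leading Gaussian binomial of the target formula.

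It remains to simplify $N_i$. Reindexing by $j=i-s$ and inserting the value of $g$ gives
\[ N_i = \sum_{j=0}^{i} (-1)^j q^{\frac{j(j-1)}{2}} \begin{bmatrix} i \\ j \end{bmatrix}_q q^{\,n\max\{0,\,i-d-j+1\}}. \]
Here I would invoke the specialization $\sum_{j=0}^{i} (-1)^j q^{\frac{j(j-1)}{2}} \begin{bmatrix} i \\ j \end{bmatrix}_q = 0$ (valid for $i\ge 1$) of the $q$-binomial theorem: subtracting this vanishing sum replaces the factor $q^{\,n\max\{0,\,\cdot\}}$ by $q^{\,n\max\{0,\,\cdot\}}-1$ at no cost, which simultaneously annihilates every term with $j > i-d$ (there the maximum is $0$, so the factor becomes $q^0-1=0$) and leaves exactly
\[ N_i = \sum_{j=0}^{i-d} (-1)^j q^{\frac{j(j-1)}{2}} \begin{bmatrix} i \\ j \end{bmatrix}_q \bigl( q^{\,n(i-d-j+1)} - 1 \bigr). \]
Reinstating $n=\max\{m,n\}$ and multiplying by $\begin{bmatrix} m \\ i \end{bmatrix}_q = \begin{bmatrix} \min\{m,n\} \\ i \end{bmatrix}_q$ reproduces the stated expression for $D(m,n,d,i)_q$. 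The only genuinely non-formal input is the Key Lemma of the second paragraph; the Möbius inversion and the final collapse are routine once that exact count is secured.
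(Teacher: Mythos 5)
The paper does not prove this statement at all; it is imported verbatim from Delsarte's 1978 paper as a known result, so there is no in-paper argument to compare against. Your proof is correct and self-contained. The key lemma is sound: for $U$ of dimension $u$, the set $\cC\cap V_U$ is pinned down exactly by playing the dimension inequality $\dim(\cC\cap V_U)\ge \dim\cC+\dim V_U-mn=n(u-d+1)$ (using $\dim\cC=n(m-d+1)$ from the MRD hypothesis) against the Singleton bound applied to the shortened code in $\BF_q^{u\times n}$, and the $u<d$ degenerate case is handled. The Möbius inversion with $\mu(W',W)=(-1)^tq^{t(t-1)/2}$ and the final collapse via the vanishing specialization $\sum_{j=0}^{i}(-1)^jq^{j(j-1)/2}\bigl[\begin{smallmatrix}i\\j\end{smallmatrix}\bigr]_q=0$ (valid since $i\ge d\ge 1$) are both correct, and the truncation of the sum to $j\le i-d$ falls out exactly as you describe. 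Your route differs genuinely from Delsarte's original one, which derives the rank distribution from the eigenvalues of the association scheme of bilinear forms via a MacWilliams-type transform and the dual code; your shortening-plus-inversion argument is the more elementary modern proof and has the advantage of needing nothing beyond the Singleton bound, the subspace-lattice Möbius function, and the $q$-binomial theorem, at the cost of being specific to the "number of codewords with column space inside a fixed subspace" statistic rather than yielding the full MacWilliams duality.
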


If the rank of each codeword in an $(m \times n, d)_q$ RMC is at most $r$, then it is called a \emph{rank-restricted rank-metric code} (RRMC), and is denoted by an $(m \times n, d;r)_q$ RRMC. Let $\cD$ be the set of elements from an $[m \times n,d]_q$ MRD code such that the rank of each matrix is at most $r$. Then $\cD$ is an $(m \times n,d;r)_q$ RRMC with cardinality
\[
 \Delta(m,n,d;r)_q := 1+\sum_{i=d}^r D(m,n,d,i)_q.   
\]



\subsection{Ferrers diagram rank-metric codes}
Etzion and Silberstein \cite{Multilevel} proposed the multilevel construction of CDCs by introducing Ferrers
diagram rank-metric codes, which generalizes the lifted MRD code construction.

Given positive integers $m$ and $n$, an $m \times n$ \emph{Ferrers diagram} $\cF$ is an $m \times n$ array consisting of dots and empty cells, such that
\begin{itemize}
    \item[(1)] all the dots are shifted to the right of the diagram;
    \item[(2)] the number of dots in each row is at most the number of dots in the previous row;
    \item[(3)] the top row contains $n$ dots and the rightmost column contains $m$ dots.
\end{itemize}

Denote $\gamma_i$ as the number of dots in the $i$-th column of the Ferrers diagram $\cF$, where $1 \leq i \leq n$. Given positive integers $m$ and $n$, and $1 \leq \gamma_1 \leq \gamma_2 \leq \cdots \leq \gamma_{n} = m$, there is a unique Ferrers diagram $\cF$ of size $m \times n$, where each $i$-th column has $\gamma_i$ dots for any $1 \leq i \leq n$. In this instance, we write $\cF = [\gamma_1, \gamma_2, \dots, \gamma_{n}]$, inverse Ferrers diagram $\bar{\cF}=[\gamma_n, \gamma_{n-1}, \dots, \gamma_1]$, and transposed Ferrers diagram as 
$\cF^\mathrm{T} =[\rho_{m}, \rho_{m-1}, \dots, \rho_1]$, where $\rho_i$ represents the number of dots in the $i$-th row of $\cF$, for $1 \leq i \leq m$. Note that an $m \times n$ Ferrers diagram $\cF$ is called  full if it has $mn$ dots. 

\begin{example}
    Let $\cF=[1,1,2,3]$. Then 
    \[\cF=\begin{array}{cccc}
\bullet & \bullet &\bullet &\bullet\\
 & & \bullet &\bullet\\
 &&&\bullet
    \end{array}, \bar{\cF}=\begin{array}{cccc}
       \bullet  & \bullet & \bullet &\bullet  \\
        \bullet & \bullet & & \\
        \bullet&&&
    \end{array},  
 \cF^{\mathrm{T}}=\begin{array}{ccc}
    \bullet&\bullet&\bullet\\
    &\bullet&\bullet\\
    &&\bullet\\
    &&\bullet
    \end{array}.\] 
\end{example}

\begin{definition}\label{D-FRcode}
Given an $m \times n$ Ferrers diagram $\cF$, an $(\cF, d)_q$ \emph{Ferrers diagram rank-metric code} (FD code) is an $(m \times n, d)_q$ RMC  in which each  matrix $\bM$ has the form $\cF$, that is, all entries of $\bM$ that do not correspond to dots in  $\cF$ are zeros. Furthermore, an $(\cF, d)_q$ FD code with $M$ codewords is denoted by an $(\cF,M, d)_q$ FD code.
\end{definition}
\begin{remark}
 If $\cF$ is a full $m \times n$ diagram, then its corresponding FD code is simply a classical rank-metric code. 
\end{remark}

By replacing the Ferrers diagram
$\cF$ in Definition \ref{D-FRcode} with inverse Ferrers diagram $\bar{\cF}$, we can similarly define an $(\bar{\cF}, d)_q$ inverse FD code and an $(\bar{\cF}, M,  d)_q$ inverse FD code. Note that if there exists an  $(\cF, M,  d)_q$ FD code, then so does an $(\bar{\cF}, M, d)_q$ inverse FD code and an $(\cF^\mathrm{T}, M, d)_q$ FD code. Etzion and Silberstein \cite{Multilevel} established a Singleton-like
upper bound on FD codes as follows.

\begin{lemma}[\cite{Multilevel}]\label{L-Singleton}
Given an $(\cF, M, d)_q$ Ferrers diagram rank-metric code, let $v_i$ be the number of dots in $\cF$ after removing the top $i$ rows and the rightmost $d - 1 - i$ columns for $0 \leq i \leq d - 1$. Then for any $(\cF,M,d)_q$ FD code, $M \leq q^{\min_i\{v_i\}}$.
\end{lemma}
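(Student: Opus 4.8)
The plan is to prove, for each fixed $i$ with $0 \le i \le d-1$, the sharper inequality $M \le q^{v_i}$; minimizing over $i$ then yields the asserted bound $M \le q^{\min_i\{v_i\}}$. Fix such an $i$, and let $\pi_i$ denote the map that deletes the top $i$ rows and the rightmost $d-1-i$ columns of each $m \times n$ matrix, producing a matrix supported on the truncated Ferrers diagram. By the definition of $v_i$, this truncated diagram has exactly $v_i$ dots, so $\pi_i$ sends every codeword of an $(\cF, M, d)_q$ FD code to a matrix that can be nonzero only in these $v_i$ dot positions. Hence $|\pi_i(\cC)| \le q^{v_i}$, and it suffices to show that $\pi_i$ is injective on $\cC$.

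For injectivity, suppose $\bM_1, \bM_2 \in \cC$ satisfy $\pi_i(\bM_1) = \pi_i(\bM_2)$. Then the difference $\bM = \bM_1 - \bM_2$ vanishes at every entry outside the top $i$ rows and the rightmost $d-1-i$ columns; equivalently, all nonzero entries of $\bM$ lie in the union of these $i$ rows and $d-1-i$ columns. The step I would isolate as a small lemma is the rank estimate: any matrix whose support is contained in the union of $a$ rows and $b$ columns has rank at most $a+b$. I would establish this by writing $\bM = R + (\bM - R)$, where $R$ retains only the $i$ distinguished rows of $\bM$, so $\rank(R) \le i$, while $\bM - R$ then has its nonzero entries confined to the $d-1-i$ distinguished columns, so $\rank(\bM - R) \le d-1-i$; subadditivity of rank gives $\rank(\bM) \le i + (d-1-i) = d-1$.

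Consequently $d_R(\bM_1, \bM_2) = \rank(\bM_1 - \bM_2) \le d-1 < d$, contradicting the minimum rank distance of the code unless $\bM_1 = \bM_2$. This proves that $\pi_i$ is injective, whence $M = |\cC| = |\pi_i(\cC)| \le q^{v_i}$. Since this holds for every $i \in \{0, 1, \dots, d-1\}$, we conclude $M \le q^{\min_i\{v_i\}}$.

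I expect the only genuine obstacle to be the rank estimate for matrices supported on a union of rows and columns; once that is in hand, the injectivity argument and the final counting are immediate. It is worth noting that the entire argument uses only the pairwise distance property of the code, so it applies verbatim to nonlinear FD codes as well, and by symmetry an analogous bound would follow for the inverse and transposed Ferrers diagrams.
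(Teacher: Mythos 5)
Your proof is correct and is essentially the argument from the cited source \cite{Multilevel}, which this paper quotes without reproving: puncture away the top $i$ rows and rightmost $d-1-i$ columns, observe that two codewords with the same puncturing differ by a matrix supported on a union of $i$ rows and $d-1-i$ columns and hence of rank at most $d-1$, and conclude injectivity of the puncturing map. The rank estimate via the decomposition $\bM = R + (\bM - R)$ is exactly the standard justification, so there is nothing to add.
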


FD codes attaining the upper bound in Lemma \ref{L-Singleton} are called optimal. Constructions
of optimal FD codes can be found in \cite{opt-F1,penddot2,Opt-FDRMC,opt-F3,opt-F4,penddot,opt-F6}. The
following lemma provides a construction for optimal FD codes based on subcodes of MRD codes.

\begin{lemma}[\cite{Opt-FDRMC}]\label{T-opt-FDRMC}
Assume $\cF$ is an $m \times n$ $(m \geq n)$ Ferrers diagram,  and each of the rightmost $d - 1$ columns has at least $n$ dots. Let $\gamma_i$ be the number of dots in the $i$-th column of $\cF$. Then there exists an optimal $(\cF, q^{\sum_{i = 1}^{n - d+1}\gamma_{i}}, d)_q$ FD code for any prime power $q$.
\end{lemma}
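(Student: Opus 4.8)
The plan is to prove the two halves of the statement separately: first pin down the exact value of the Singleton bound of Lemma \ref{L-Singleton} for this $\cF$, obtaining the upper bound $M \le q^{\sum_{i=1}^{n-d+1}\gamma_i}$, and then exhibit a code meeting it, realized as a shape-restricted subcode of a linear MRD code. Optimality will then be immediate by comparison.

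For the upper bound I would evaluate the quantities $v_i$ of Lemma \ref{L-Singleton}. Deleting the top $i$ rows and the rightmost $d-1-i$ columns leaves columns $1,\dots,n-d+1+i$, and in the surviving column $j$ exactly $\max(\gamma_j-i,0)$ dots remain, so $v_i=\sum_{j=1}^{n-d+1+i}\max(\gamma_j-i,0)$; in particular $v_0=\sum_{j=1}^{n-d+1}\gamma_j$. The columns $j=n-d+2,\dots,n-d+1+i$ newly included when $i\ge 1$ all lie among the rightmost $d-1$ columns, hence satisfy $\gamma_j\ge n$; since $i\le d-1<n$, on this block $\max(\gamma_j-i,0)=\gamma_j-i\ge n-i$, while on the first block $\max(\gamma_j-i,0)\ge\gamma_j-i$. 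Therefore
\[
v_i \;\ge\; \sum_{j=1}^{n-d+1}(\gamma_j-i)\;+\;i(n-i)\;=\;v_0+i(d-1-i)\;\ge\;v_0,
\]
because $0\le i\le d-1$. Thus $\min_i v_i=v_0=\sum_{i=1}^{n-d+1}\gamma_i$, and Lemma \ref{L-Singleton} gives $M\le q^{\sum_{i=1}^{n-d+1}\gamma_i}$. This part is routine, and the hypothesis on the rightmost $d-1$ columns is used precisely to make $i=0$ the minimizer.

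For the matching construction I would take the subcode of a suitable linear $[m\times n,d]_q$ MRD code $\cM$ (which exists for all admissible parameters) consisting of those codewords that have shape $\cF$, that is, $\cC=\cM\cap\{\bM:\bM\text{ has shape }\cF\}$. Any such $\cC$ inherits minimum rank distance $\ge d$ from $\cM$ and is an FD code, so by the first part $|\cC|\le q^{\sum_{i=1}^{n-d+1}\gamma_i}$; it therefore suffices to produce one $\cM$ for which the $\BF_q$-dimension of this intersection is at least $\sum_{i=1}^{n-d+1}\gamma_i$. The natural candidate is a Gabidulin code in systematic form with $k=n-d+1$: identifying $\BF_q^m$ with $\BF_{q^m}$ via a fixed basis and viewing a codeword as $(f(g_1),\dots,f(g_n))$ for a $q$-linearized polynomial $f$ of $q$-degree below $k$, I would take the $\sum_{i=1}^{n-d+1}\gamma_i$ entries lying in the dots of the leftmost $n-d+1$ columns as free information symbols, with the rightmost $d-1$ columns $f(g_{k+1}),\dots,f(g_n)$ the redundancy fixed by interpolation. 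Counting these free symbols (or invoking the Delsarte distribution of Theorem \ref{T-rank-distribution} to check that no codewords collapse) yields exactly $q^{\sum_{i=1}^{n-d+1}\gamma_i}$ codewords.

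The main obstacle is ensuring that the redundancy columns actually respect the Ferrers shape: each column $j>k$ must vanish below row $\gamma_j$, and since $\gamma_j\ge n$ it is a uniform sufficient condition to force $f(g_j)$ into the top $n$ rows, i.e.\ into the $\BF_q$-span of the first $n$ basis vectors. This is exactly where the hypothesis ``$\ge n$ dots in the rightmost $d-1$ columns'' is essential, and it is the point demanding a careful, non-generic choice of the basis and of the evaluation points $g_1,\dots,g_n$ (for an arbitrary MRD code the intersection with the shape space would in general be too small). I would therefore devote the bulk of the argument to selecting $g_1,\dots,g_n$ and the basis so that the interpolated columns land in the prescribed row support while the leftmost $n-d+1$ columns still range freely over all shape-$[\gamma_1,\dots,\gamma_{n-d+1}]$ matrices; once this compatibility is verified, optimality follows at once from the upper bound above.
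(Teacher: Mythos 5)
Your treatment of the upper bound is correct and complete: evaluating the quantities $v_i$ of Lemma \ref{L-Singleton} and showing $v_i\ge v_0+i(d-1-i)\ge v_0$ is exactly how the hypothesis on the rightmost $d-1$ columns forces $i=0$ to be the minimizer, giving $M\le q^{\sum_{i=1}^{n-d+1}\gamma_i}$. Note, though, that the paper offers no proof of this lemma at all --- it is imported verbatim from \cite{Opt-FDRMC} --- so your argument has to stand entirely on its own, and the burden of the statement is the existence half, not the bound.

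That existence half contains a genuine gap: you correctly identify that the entire difficulty is making the $d-1$ parity columns of a systematic Gabidulin code respect the Ferrers shape for \emph{every} admissible filling of the first $k=n-d+1$ columns, and then you defer exactly that step (``I would devote the bulk of the argument to selecting $g_1,\dots,g_n$ and the basis so that\dots''). This is not a routine verification that can be waved through. Over $\BF_{q^m}$ the completion map is $u\mapsto uA$ with $A=(a_{tj})\in\BF_{q^m}^{k\times(d-1)}$, so the $j$-th parity column is $\sum_{t\le k}a_{tj}u_t$, and requiring $\sum_t a_{tj}V_{\gamma_t}\subseteq V_{\gamma_{k+j}}$ for all admissible inputs forces each multiplier $a_{tj}$ to map $V_{\gamma_t}$ into $V_{\gamma_{k+j}}$. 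The hypotheses permit $\gamma_t=\gamma_{k+j}<m$, in which case $a_{tj}$ must lie in the multiplicative stabilizer of $V_{\gamma_t}$, which for an unstructured choice of basis is just $\BF_q$; but if the $a_{tj}$ all lay in $\BF_q$ the code would contain rank-one codewords (take $u$ supported on a single coordinate) and could not have distance $d\ge 2$. So the most direct reading of your plan is actually obstructed, not merely unfinished, and a correct argument must use the condition $\gamma_j\ge n$ in a structurally different way, as is done in \cite{Opt-FDRMC}. A dimension count does not close the gap either: imposing the shape conditions on the parity columns of an arbitrary linear MRD code costs up to $\sum_{j>k}(m-\gamma_j)$ further linear conditions, which generically pushes the dimension strictly below $\sum_{j\le k}\gamma_j$ whenever $m>n$ and some constrained parity column has $\gamma_j<m$. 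Likewise, appealing to the Delsarte rank distribution of Theorem \ref{T-rank-distribution} cannot certify the size of the shape-restricted subcode, since that distribution concerns the full MRD code. Until the compatibility of the completion map with $\cF$ is actually established, neither the claimed cardinality nor optimality is proved.
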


\begin{definition}\label{D-RFD}
    Let $\cF$ be an $m \times n$ Ferrers diagram.
For nonnegative integers $d$ and $r$, an $(\cF, d;r)_{q}$ \emph{rank-restricted Ferrers diagram rank-metric code} (RFD code) is an $(\cF, d)_q$ FD code such that the rank of each codeword is at most $r$.
\end{definition}

By replacing the Ferrers diagram
$\cF$ in Definition \ref{D-RFD} with inverse Ferrers diagram $\bar{\cF}$, we can similarly define an $(\bar{\cF}, d; r)_q$  RFD code. More constructions on  RFD codes can be found in \cite{DoubleMul}.

\subsection{Echelon Ferrers form and inverse echelon Ferrers form}

Let  $U$ be a subspace in $\mathcal{G}_{q}(n,k)$. A \emph{generator matrix} of $U$ is a $k \times n$ matrix whose  rowspace forms $U$. A $k \times n$ matrix with rank $k$ is in \emph{reduced row (row inverse) echelon form} if the following conditions are satisfied:
\begin{itemize}
    \item[(1)] the leading coefficient of a row is always to the right (left) of the leading coefficient of the previous row;
    \item[(2)] all leading coefficients are ones;
    \item[(3)] every leading coefficient is the only nonzero entry in its column.
\end{itemize}
There is exactly one generator matrix of $U$ in 
reduced row echelon form, and it will be denoted by $E(U)$. Similarly, there is exactly one generator matrix of $U$ in reduced row inverse echelon form, and it will be denoted by $\bar{E}(U)$.

The identifying and inverse identifying vectors of a subspace are defined as follows. For more details, one can refer to \cite{Multilevel,DoubleMul}.

\begin{definition}
 Let  $U \in \mathcal{G}_{q}(n,k)$. The identifying vector of $U$, denoted by $i(U)$, is a binary vector of length $n$ and Hamming weight $k$, where the ones in $i(U)$ are in the positions where $E(U)$  has  pivot columns.  
\end{definition}

\begin{definition}
 Let  $U \in \mathcal{G}_{q}(n,k)$. The inverse identifying vector of $U$, denoted by $\bar{i}(U)$, is a binary vector of length $n$ and Hamming weight $k$, where the ones in $\bar{i}(U)$ are in the positions where $\bar{E}(U)$  has  pivot columns.  
\end{definition}
\begin{example}\label{E(U)}
    Consider a subspace $U \in \cG_2(6,3)$ with reduced row echelon form and reduced row inverse echelon form
    \begin{align*}
 E(U)=\begin{pmatrix}
1&0&0&0&1&0\\
0&0&1&0&1&1\\
0&0&0&1&0&1
        \end{pmatrix}, 
\bar{E}(U)=\begin{pmatrix}
1&0&1&0&0&1\\
1&0&0&0&1&0\\
1&0&1&1&0&0
\end{pmatrix}.
    \end{align*}
    The identifying vector and inverse identifying vector of $U$ are $i(U)=(101100)$ and $\bar{i}(U)=(000111)$, respectively.
\end{example}

The \emph{echelon Ferrers form} of a vector $\mathbf{v}$ of length $n$ and  weight $k$, denoted by $EF(\mathbf{v})$, is the $k \times n$ matrix in reduced row echelon form (in RREF) with leading entries (of rows) in the columns indexed by the nonzero entries of $\mathbf{v}$ and ``$\bullet$" in  all entries which do not have terminals zeros or ones. All dots in $EF(\mathbf{v})$ form a Ferrers diagram, denoted by $\cF_{\bv}$, and we call it the Ferrers diagram of $\bv$. The inverse echelon Ferrers form $\overline{EF}(\bar{\mathbf{v}})$ and the inverse Ferrers diagram $\bar{\cF}_{\bar{\bv}}$ of an inverse identifying vector $\bar{\bv}$ can be defined similarly.
\begin{example}
    For $i(U)=(101100)$ and $\bar{i}(U)=(000111)$ in Example \ref{E(U)}, the echelon Ferrers form $EF(i(U))$ of $i(U)$ and the inverse echelon Ferrers form $\overline{EF}(\bar{i}(U))$ of $\bar{i}(U)$ are the following two  matrices:
    \begin{align*}
EF(i(U))=\begin{pmatrix}
1&\bullet&0&0&\bullet&\bullet\\
0&0&1&0&\bullet&\bullet\\
0&0&0&1&\bullet&\bullet
        \end{pmatrix}, 
\overline{EF}(\bar{i}(U))=\begin{pmatrix}
\bullet&\bullet&\bullet&0&0&1\\
\bullet&\bullet&\bullet&0&1&0\\
\bullet&\bullet&\bullet&1&0&0
\end{pmatrix}.
    \end{align*}
    
The Ferrers diagram $\cF_{i(U)}$ of $i(U)$ and the inverse Ferrers diagram $\bar{\cF}_{\bar{i}(U)}$ of $\bar{i}(U)$ are
\begin{align*}
 \cF_{i(U)}=\begin{matrix}
\bullet&\bullet&\bullet\\
&\bullet&\bullet\\
&\bullet&\bullet
        \end{matrix}, 
\bar{\cF}_{\bar{i}(U)}=\begin{matrix}
\bullet&\bullet&\bullet\\
\bullet&\bullet&\bullet\\
\bullet&\bullet&\bullet
\end{matrix},
\end{align*}
respectively.
\end{example}

\subsection{Double multilevel construction}

This subsection is devoted to presenting an effective construction for CDCs, called the double multilevel construction,  which combines the multilevel construction and the
inverse multilevel construction. The subsequent lemmas are useful in the multilevel
construction and inverse multilevel construction. Let $d_H(\cdot)$ denote the Hamming metric.

\begin{lemma}[\cite{Multilevel}]\label{L-S-H}
Let $U$ and $V$ be two subspaces in 
$\mathcal{G}_{q}(n,k)$. Then \[d_{S}(U,V) \geq d_{H}(i(U), i(V)).\] 
\end{lemma}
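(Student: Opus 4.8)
The plan is to reduce the inequality to a purely combinatorial comparison between $\dim(U\cap V)$ and the number of shared pivot positions, and then to exploit the defining property of reduced row echelon form.

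First I would record the two quantities explicitly. Write $P_U$ and $P_V$ for the pivot-column sets of $E(U)$ and $E(V)$, so that $i(U)$ and $i(V)$ are their indicator vectors; both sets have size $k$ because $U,V\in\cG_q(n,k)$. Since $\dim(U+V)=2k-\dim(U\cap V)$, we get $d_{S}(U,V)=2\big(k-\dim(U\cap V)\big)$, while $d_{H}(i(U),i(V))=|P_U\triangle P_V|=2\big(k-|P_U\cap P_V|\big)$. Hence the claim is equivalent to the single inequality $\dim(U\cap V)\le |P_U\cap P_V|$.

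Second comes the key structural step: for any subspace $W$, the pivot set $P_W$ of $E(W)$ coincides with the set of \emph{leading positions} (the index of the first nonzero coordinate) of the nonzero vectors of $W$, and $|P_W|=\dim W$. This is where the real content lies. Take the RREF rows $r_1,\dots,r_\ell$ of $W$ with pivots $c_1<\cdots<c_\ell$; for a nonzero combination $\sum_i a_i r_i$, let $i_0$ be the least index with $a_{i_0}\neq 0$. Because column $c_{i_0}$ is all-zero except for the $1$ in row $r_{i_0}$, while the rows $r_i$ with $i>i_0$ vanish on all coordinates $\le c_{i_0}$, the combination has leading position exactly $c_{i_0}\in P_W$. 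Conversely each $c_i$ is itself the leading position of $r_i$. This identifies $P_W$ with the leading positions and shows $|P_W|=\dim W$.

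Finally I would apply this to $W=U\cap V$. Every nonzero $w\in W$ lies in $U$, so by the structural step its leading position belongs to $P_U$; likewise it belongs to $P_V$. Therefore $P_W\subseteq P_U\cap P_V$, and consequently $\dim(U\cap V)=|P_W|\le|P_U\cap P_V|$, which is exactly the reduced inequality. The main obstacle is the middle step, namely proving cleanly that the leading positions of $W$ are exactly the pivots of $E(W)$: the containment $P_W\subseteq P_U$ for $W\subseteq U$ is precisely what makes the whole argument work, and it rests on the column-by-column echelon structure rather than on any ad hoc computation.
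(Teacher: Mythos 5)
Your argument is correct and complete: the reduction to $\dim(U\cap V)\le|P_U\cap P_V|$ is exact, and the identification of the pivot set of $E(W)$ with the leading positions of the nonzero vectors of $W$ is precisely the fact that makes $P_{U\cap V}\subseteq P_U\cap P_V$ work. The paper itself gives no proof, citing this as Lemma~2 of the Etzion--Silberstein reference, and your argument is essentially the standard one given there, so there is nothing to add.
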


\begin{lemma}[\cite{DoubleMul}]\label{L-inv-L-S-H}
Let $U$ and $V$ be two subspaces in $\mathcal{G}_{q}(n,k)$. Then  \[ d_{S}(U,V) \geq d_{H}(\bar{i}(U), \bar{i}(V)).\]
\end{lemma}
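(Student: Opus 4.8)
The plan is to deduce this inverse statement directly from Lemma \ref{L-S-H} by means of a coordinate-reversing isometry, rather than redoing any combinatorial analysis from scratch. First I would introduce the invertible linear map $\pi \colon \BF_q^n \to \BF_q^n$ that reverses the order of coordinates, $\pi(x_1,\dots,x_n)=(x_n,\dots,x_1)$. Because $\pi$ is a linear bijection, it preserves dimensions of subspaces as well as sums and intersections, so it is an isometry of the subspace metric: $d_S(\pi(U),\pi(V))=d_S(U,V)$ for all $U,V \in \cG_q(n,k)$. I would likewise record the trivial fact that reversing the $n$ coordinates of a binary vector preserves Hamming weight and hence Hamming distance.

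The heart of the argument is to verify that $\pi$ interchanges the two echelon forms. Concretely, I would show that reversing the columns of $\bar{E}(U)$ produces, after an appropriate reordering of its rows, the reduced row echelon form of $\pi(U)$. This is immediate once one observes that in $\bar{E}(U)$ a pivot is the \emph{rightmost} nonzero entry of its row and is the unique nonzero entry of its column; column reversal turns ``rightmost'' into ``leftmost'' and turns the decreasing pivot pattern of the inverse form into the increasing pivot pattern of the ordinary form, while the uniqueness-in-column property is preserved. Consequently the pivot positions of $E(\pi(U))$ are exactly the mirror images $n+1-j$ of the pivot positions $j$ of $\bar{E}(U)$, which says precisely that $i(\pi(U))$ equals the coordinate reversal of $\bar{i}(U)$.

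Combining the two observations yields the conclusion at once:
\[
d_S(U,V)=d_S(\pi(U),\pi(V)) \ge d_H\!\left(i(\pi(U)),\,i(\pi(V))\right)=d_H\!\left(\bar{i}(U),\,\bar{i}(V)\right),
\]
where the middle inequality is Lemma \ref{L-S-H} applied to $\pi(U),\pi(V)\in\cG_q(n,k)$, and the two equalities come from the isometry property of $\pi$ and from reversal being a Hamming isometry, respectively.

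The step I expect to require the most care is the second one: making rigorous the claim that reversing columns sends $\bar{E}(U)$ to a generator matrix of $\pi(U)$ in reduced row echelon form, and that the inverse identifying vector reverses accordingly. If a reader prefers to avoid the bookkeeping of the reversal map, I would offer the parallel direct argument as an alternative: setting $W'_j=\langle e_1,\dots,e_j\rangle$, an inverse pivot of $U$ sits at position $j$ exactly when $\dim(U\cap W'_j)>\dim(U\cap W'_{j-1})$, and any vector of $U\cap V$ whose rightmost nonzero entry is in position $j$ forces an inverse pivot of \emph{both} $U$ and $V$ at $j$; summing over $j$ bounds $\dim(U\cap V)$ by the number of shared inverse pivots, and then $d_S(U,V)=2k-2\dim(U\cap V)$ gives the desired inequality. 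Either route works, but the reversal reduction is the shorter and, given that Lemma \ref{L-S-H} is already available, the more economical presentation.
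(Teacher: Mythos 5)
The paper states this lemma without proof, simply citing \cite{DoubleMul}, so there is no in-paper argument to compare against; judged on its own, your reduction is correct and complete. The coordinate-reversal map $\pi$ is a linear automorphism of $\BF_q^n$, hence preserves dimensions of sums and intersections and is an isometry for $d_S$; and since a pivot of $\bar{E}(U)$ is the rightmost nonzero entry of its row and the unique nonzero entry of its column, reversing the columns turns $\bar{E}(U)$ into the reduced row echelon form of $\pi(U)$ (in fact no row reordering is needed, because the decreasing pivot pattern of the inverse form becomes increasing in the same row order), so $i(\pi(U))$ is exactly the reversal of $\bar{i}(U)$. Applying Lemma \ref{L-S-H} to $\pi(U),\pi(V)$ and using that reversal is a Hamming isometry then gives the inequality. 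Your fallback direct argument is also sound: with $W_j'=\langle e_1,\dots,e_j\rangle$, the support of $\bar{i}(U)$ records the positions $j$ where $\dim(U\cap W_j')$ jumps, any such jump for $U\cap V$ forces a jump for both $U$ and $V$, so $\dim(U\cap V)\le|\mathrm{supp}(\bar{i}(U))\cap\mathrm{supp}(\bar{i}(V))|$, and $d_S(U,V)=2k-2\dim(U\cap V)\ge d_H(\bar{i}(U),\bar{i}(V))$; this mirrors the standard proof of Lemma \ref{L-S-H} in \cite{Multilevel} and is presumably closer to what \cite{DoubleMul} does, but the reversal reduction is the more economical presentation given that Lemma \ref{L-S-H} is already available.
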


\begin{lemma}[\cite{Multilevel}]
    Let $U, V \in \cG_q(n,k)$. If $i(U) = i(V)$, then 
    \[d_S(U,V)=2d_R(C_{E(U)}, C_{E(V)}),\]
    where $C_{E(U)}$  and  $C_{E(V)}$ denote the submatrices of $E(U)$ and $E(V)$ without the pivot columns, respectively.
\end{lemma}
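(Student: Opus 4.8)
The plan is to reduce the computation of $d_S(U,V)$ to a single rank computation on the stacked generator matrix and then exploit the common pivot structure forced by $i(U)=i(V)$. First I would record the elementary identity for equidimensional subspaces: since $\dim U=\dim V=k$, we have $\dim(U\cap V)=2k-\dim(U+V)$, so
\[
d_{S}(U,V)=\dim(U+V)-\dim(U\cap V)=2\bigl(\dim(U+V)-k\bigr).
\]
Thus it suffices to show that $\dim(U+V)=k+d_R(C_{E(U)},C_{E(V)})$.

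Next I would form the $2k\times n$ matrix $\bM=\begin{pmatrix}E(U)\\ E(V)\end{pmatrix}$, whose row space is exactly $U+V$, so that $\dim(U+V)=\rank(\bM)$. Performing the block row operation that replaces the bottom $k$ rows by $E(V)-E(U)$ does not change the row space, giving $\rank(\bM)=\rank\begin{pmatrix}E(U)\\ E(V)-E(U)\end{pmatrix}$.

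The crux is the hypothesis $i(U)=i(V)$: both $E(U)$ and $E(V)$ are in reduced row echelon form with pivots in the same $k$ columns, so each restricts to the identity $\bI_k$ on those pivot columns. Consequently $E(V)-E(U)$ vanishes on every pivot column, and on the remaining $n-k$ non-pivot columns it is precisely $C_{E(V)}-C_{E(U)}$. I would then argue linear independence across the two blocks: any vanishing linear combination of the rows, restricted to the pivot columns, forces the coefficients of the $E(U)$-rows to be zero (because there $E(U)$ is $\bI_k$ while $E(V)-E(U)$ is zero), after which the remaining combination is a relation among the rows of $E(V)-E(U)$. Hence $\rank(\bM)=k+\rank\bigl(C_{E(V)}-C_{E(U)}\bigr)=k+d_R(C_{E(U)},C_{E(V)})$, and combining this with the displayed identity yields $d_S(U,V)=2d_R(C_{E(U)},C_{E(V)})$.

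I expect the main obstacle to be the clean bookkeeping of the pivot/non-pivot column split together with the independence argument between the two row blocks; once the identity pattern on the pivot columns is invoked, everything collapses to the definition $d_R(\cdot,\cdot)=\rank(\cdot-\cdot)$. A cosmetic alternative that sidesteps the independence step is to permute columns so that the pivots come first, writing $E(U)=(\bI_k\mid C_{E(U)})$ and $E(V)=(\bI_k\mid C_{E(V)})$; since a simultaneous column permutation is an isometry for $d_S$ and leaves $\rank(C_{E(V)}-C_{E(U)})$ unchanged, the same row reduction then visibly produces the block form $\begin{pmatrix}\bI_k & C_{E(U)}\\ \bO & C_{E(V)}-C_{E(U)}\end{pmatrix}$ of rank $k+\rank(C_{E(V)}-C_{E(U)})$.
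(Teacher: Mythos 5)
Your proof is correct. The paper states this lemma only as a citation to Etzion and Silberstein without reproducing a proof, and your argument --- rewriting $d_S(U,V)=2\bigl(\rank\begin{pmatrix}E(U)\\ E(V)\end{pmatrix}-k\bigr)$ and using the shared pivot columns to peel off an identity block so that the residual rank is $\rank(C_{E(V)}-C_{E(U)})$ --- is precisely the standard proof from that reference.
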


\begin{lemma}[\cite{DoubleMul}]
   Let $U, V \in \cG_q(n,k)$. If $\bar{i}(U) = \bar{i}(V)$, then 
    \[d_S(U,V)=2d_R(C_{\bar{E}(U)}, C_{\bar{E}(V)}),\]
    where $C_{\bar{E}(U)}$  and  $C_{\bar{E}(V)}$ denote the submatrices of $\bar{E}(U)$ and $\bar{E}(V)$ without the pivot columns, respectively. 
\end{lemma}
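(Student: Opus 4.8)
The plan is to reduce the computation of $d_S(U,V)$ to a rank computation on the difference of the two inverse echelon generator matrices, in exact parallel with the proof of the preceding lemma for ordinary identifying vectors. Since $\dim(U)=\dim(V)=k$, the subspace distance satisfies
\[
d_S(U,V) = \dim(U+V) - \dim(U \cap V) = 2\dim(U+V) - 2k,
\]
using $\dim(U \cap V) = 2k - \dim(U+V)$. Thus it suffices to show that $\dim(U+V) = k + d_R(C_{\bar{E}(U)}, C_{\bar{E}(V)})$.

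First I would write $U+V$ as the rowspace of the stacked $2k \times n$ matrix $\left(\begin{smallmatrix} \bar{E}(U) \\ \bar{E}(V) \end{smallmatrix}\right)$, so that $\dim(U+V)$ is its rank. The hypothesis $\bar{i}(U)=\bar{i}(V)$ means that $\bar{E}(U)$ and $\bar{E}(V)$ have their pivots in the same $k$ columns; by conditions (2)--(3) of the reduced row inverse echelon form, the restriction of each matrix to these pivot columns is the same $k \times k$ permutation matrix (with the pivot ordering fixed by the inverse echelon convention). Performing the row operations that subtract the rows of $\bar{E}(V)$ from the corresponding rows of $\bar{E}(U)$ does not change the rowspace, so $U+V$ is also the rowspace of $\left(\begin{smallmatrix} \bar{E}(U) \\ \bar{E}(U)-\bar{E}(V) \end{smallmatrix}\right)$. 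Crucially, every entry of $\bar{E}(U)-\bar{E}(V)$ lying in a pivot column is zero.

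The key step, and the one requiring care, is to show that the rank of this second block matrix equals $k + \rank(\bar{E}(U)-\bar{E}(V))$. Suppose a linear combination of all $2k$ rows vanishes. Projecting the relation onto the pivot columns annihilates every row of $\bar{E}(U)-\bar{E}(V)$ and leaves the pivot block of $\bar{E}(U)$, which is invertible; hence all coefficients attached to the rows of $\bar{E}(U)$ must be zero. The surviving relation is then a vanishing combination of the rows of $\bar{E}(U)-\bar{E}(V)$ alone, so the two row blocks are independent and the rank splits as claimed. Since the pivot columns of $\bar{E}(U)-\bar{E}(V)$ are identically zero, deleting them changes nothing, giving $\rank(\bar{E}(U)-\bar{E}(V)) = \rank(C_{\bar{E}(U)}-C_{\bar{E}(V)}) = d_R(C_{\bar{E}(U)}, C_{\bar{E}(V)})$. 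Combining this with $\dim(U+V) = k + \rank(\bar{E}(U)-\bar{E}(V))$ and the distance identity above yields $d_S(U,V) = 2\, d_R(C_{\bar{E}(U)}, C_{\bar{E}(V)})$.

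An alternative, essentially bookkeeping-free route is to invoke the already-established lemma for identifying vectors by symmetry: the coordinate-reversal map $\pi(x_1,\dots,x_n)=(x_n,\dots,x_1)$ is a linear isometry of $(\cP_q(n),d_S)$ that carries reduced row inverse echelon form to reduced row echelon form, so $E(\pi U)=\pi(\bar{E}(U))$ and $i(\pi U)=\pi(\bar{i}(U))$. Then $\bar{i}(U)=\bar{i}(V)$ gives $i(\pi U)=i(\pi V)$, and applying the preceding lemma to $\pi U,\pi V$, together with $d_S(\pi U,\pi V)=d_S(U,V)$ and the evident correspondence of non-pivot columns under $\pi$, delivers the same conclusion. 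I expect the first approach to be the cleanest to write out in full; the only genuine obstacle is the rank-splitting argument, which hinges entirely on the invertible pivot block guaranteed by conditions (2)--(3) of the inverse echelon form.
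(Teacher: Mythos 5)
The paper does not prove this lemma itself (it is quoted from the cited reference), and your argument is exactly the standard one used there and in the analogous identifying-vector lemma: reduce $d_S$ to $\dim(U+V)$, stack the two generator matrices, use the invertible (permutation) pivot block to split the rank as $k+\rank(\bar{E}(U)-\bar{E}(V))$, and delete the zero pivot columns. Your proof is correct and matches that approach; the coordinate-reversal shortcut you mention is also a valid alternative.
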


Lifting construction is an important technique for constructing CDCs. Let $\bv$ be a binary vector of length $n$ and weight $k$. Assume $\cC_{\cF_{\bv}}$ is an $(\cF_{\bv},d)_q$ FD code. Fill the $\cF_{\bv}$ in $EF(\bv)$ with the codewords of $\cC_{\cF_{\bv}}$, and then the rowspaces of these elements constitute the lifted CDC of $\cC_{\cF_{\bv}}$, denoted by $\sL(\cC_{\cF_{\bv}})$. The lifting technique is also applicable to inverse FD codes.
 Using the lifting technique, Etzion and Silberstein \cite{Multilevel} initially introduced the multilevel construction, and subsequently, Liu et al. \cite{DoubleMul} presented the inverse multilevel construction.

\begin{construction}[Multilevel construction 
\cite{Multilevel}]\label{C-Mul}
Let $\cS$ be a binary constant weight code of length $n$, weight $k$ and minimum Hamming distance $2\delta$. If there exists an $(\cF_{\bv},\delta)_q$ FD code $\cC_{\cF_{\bv}}$ for each $\bv \in \cS$, then its lifted code $\sL(\cC_{\cF_{\bv}})$ is an $(n,2\delta,\{k\})_q$ CDC. Moreover, the union of these lifted codes $\cup_{\bv \in \cS} \sL(\cC_{\cF_{\bv}})$ forms an $(n, \sum_{\bv\in \cS} |\cC_{\cF_{\bv}}| , 2\delta, \{k\})_q$ CDC.  
\end{construction}

\begin{construction}[Inverse multilevel construction 
\cite{DoubleMul}]\label{C-InvMul}
Let $\bar{\cS}$ be a binary constant weight code of length $n$, weight $k$ and minimum Hamming distance $2\delta$. Given a non-negative integer $s_{\bar{\bv}}$, if there exists an $(\bar{\cF}_{\bar{\bv}},\delta; s_{\bar{\bv}})_q$ RFD code $\cC_{\bar{\cF}_{\bar{\bv}}}$ for each $\bar{\bv} \in \bar{\cS}$, then its lifted code $\sL(\cC_{\bar{\cF}_{\bar{\bv}}})$ is an $(n,2\delta,\{k\})_q$ CDC. Moreover, the union of these lifted codes $\cup_{\bar{\bv} \in \bar{\cS}}\sL(\cC_{\bar{\cF}_{\bar{\bv}}})$ forms an $(n, \sum_{\bar{\bv} \in \bar{\cS}} |\cC_{\bar{\cF}_{\bar{\bv}}}|, 2\delta, \{k\})_q$ CDC.
\end{construction}

Based on the multilevel construction and the
inverse multilevel construction, the double multilevel construction was proposed in \cite{DoubleMul}.

\begin{construction}[Double multilevel construction \cite{DoubleMul}]\label{DoubMul}
 Let $n \geq 2k \geq 2\delta$. Take an $(n,M_1, 2\delta,\{k\})_q$ CDC $\cC_1$ from Construction \ref{C-Mul} with $\cS$ as the set of its identifying vectors, and take an $(n, M_2, 2\delta,\{k\})_q$ CDC $\cC_2$ from Construction \ref{C-InvMul} with $\bar{\cS}$ as the set of its inverse identifying vectors. Let $s_{\bar{\bv}}$ be an integer defined in Construction \ref{C-InvMul}. If $d_H(\bv,\bar{\bv}) \geq 2(s_{\bar{\bv}}+\delta)$ for any $\bv \in \cS$ and $\bar{\bv} \in \bar{\cS}$, then $\cC_1 \cup \cC_2$ is an $(n,M_1+M_2,2\delta,\{k\})_q$ CDC.
\end{construction}

\subsection{Generalized bilateral multilevel construction}
Let $n$, $n_1$, and $n_2$ be positive integers with $n \geq n_1 + n_2$. A \emph{bilateral identifying vector} $\tbv$ with length $n$ is a binary vector satisfying the following conditions:
\begin{itemize}
    \item[(1)] the subvector $\bv_1$, consisting of the first $n_1$ coordinates of $\tbv$, is an identifying vector;
    \item[(2)] the subvector $\bar{\bv}_2$, consisting of the last $n_2$ coordinates of $\tbv$, is an inverse identifying vector;
    \item[(3)] the subvector $\tbv_3$, consisting of the remaining $n-n_1-n_2$ coordinates of $\tbv$, is a zero vector.
\end{itemize}
The ordered triple $(n_1, n-n_1-n_2, n_2)$ is called the type of this bilateral identifying vector. To distinguish bilateral identifying vectors from identifying vectors and inverse identifying vectors, we denote a bilateral identifying vector by
\[
\tbv=(\overbrace{\bv_1}^{n_1} | \overbrace{\tbv_3}^{n-n_1-n_2} | \overbrace{\bar{\bv}_2}^{n_2} )=(x_1, \dots, x_{n_1}, \tilde{x}_{n_1+1}, \dots,\tilde{x}_{n-n_2}, \bar{x}_{n-n_2+1}, \dots, \bar{x}_{n}).
\]

The generalized bilateral echelon Ferrers form of a bilateral identifying vector is defined
as follows.

\begin{definition}[\cite{GBilateral}]
    Let $\tbv=(\bv_1|\tilde{\bv}_3|\bar{\bv}_2)$ be a bilateral identifying vector of type $(n_1,n-n_1-n_2, n_2)$, where $\wt(\bv_1)=a_1$, $\wt(\bar{\bv}_2)=a_2$, and $a_1+a_2=k$. Then the \emph{generalized bilateral echelon Ferrers form} $\widetilde{EF}(\tbv)$ of $\tbv$ is a $k \times n$ matrix such that
    \begin{itemize}
        \item[(1)] the $a_1 \times n_1$ submatrix in the upper left-hand corner of $\widetilde{EF}(\tbv)$ is $EF(\bv_1)$;
        \item[(2)] the $a_2 \times n_2$ submatrix in the lower right-hand corner of $\widetilde{EF}(\tbv)$ is $\overline{EF}(\bar{\bv}_2)$;
        \item[(3)] the $a_2 \times n_1$ submatrix in the lower left-hand corner of $\widetilde{EF}(\tbv)$ is a zero matrix;
        \item[(4)] in the $a_1 \times n_2$ submatrix in the upper right-hand corner of $\widetilde{EF}(\tbv)$, the entries in the columns of $\widetilde{EF}(\tbv)$ that contain ones of $\overline{EF}(\bar{\bv}_2)$ are zeros and other entries are dots;
        \item[(5)] the remaining $n-n_1-n_2$ columns of $\widetilde{EF}(\tbv)$ form a $k \times (n-n_1-n_2)$ full Ferrers diagram.
    \end{itemize}
\end{definition}

Consider a special bilateral identifying vector such
that all $1$’s are consecutive in the first $n_1$ coordinates, and all $\bar{1}$’s are  consecutive in the last $n_2$ coordinates. We use this vector as an example to illustrate its generalized bilateral echelon Ferrers form, which is shown as follows.
Take a bilateral identifying vector
\[
\tbv=(\overbrace{\underbrace{0\cdots0}_{u_1} \underbrace{1\cdots1}_{a_1} \underbrace{0\cdots0}_{u_2}}^{n_1} \overbrace{\tilde{0}\cdots\tilde{0}}^{n-n_1-n_2} \overbrace{\underbrace{\bar{0}\cdots\bar{0}}_{u_3} \underbrace{\bar{1}\cdots\bar{1}}_{a_2} \underbrace{\bar{0}\cdots\bar{0}}_{u_4}}^{n_2}), 
\]
where $n_1 +n_2 \leq n$, $u_1$, $u_2$, $u_3$, $u_4$ are nonnegative integers and $a_1+a_2=k$. Denote $\bO_{a \times b}$ as the all zero matrix of size $a\times b$, and we will ignore the size if the context is clear.
Denote  $\bI_s$ as the $s \times s$ identity matrix, and $\bar{\bI}_s$ as the $s \times s$ matrix 
\begin{align*}
 \begin{pmatrix}
0 & 0 & \cdots & 0 & 1\\
0 & 0 & \cdots &1 &0\\
\vdots & \vdots &\ddots & \vdots & \vdots \\
1 & 0 & \dots & 0 &0 
\end{pmatrix}.   
\end{align*}
Then the generalized bilateral echelon Ferrers form of $\tbv$ is 
\begin{align*}
\widetilde{EF}(\tbv)=\begin{pmatrix}
  \bO & \bI_{a_1} & \cF_1 & \cF_2 & \cF_3 & \bO & \cF_4 \\
  \bO & \bO & \bO & \cF_5 & \bar{\cF}_6 & \bar{\bI}_{a_2} & \bO
\end{pmatrix},  
\end{align*}
where $\cF_1$, $\cF_2$, $\cF_3$, $\cF_4$, $\cF_5$, and $\bar{\cF}_6$ are full Ferrers diagrams, with sizes  $a_1 \times u_2$, $a_1 \times (n-n_1-n_2)$, $a_1 \times u_3$,  $a_1 \times u_4$, $a_2 \times (n-n_1-n_2)$, and $a_2 \times u_3$, respectively.

The generalized bilateral Ferrers diagram of a bilateral identifying vector is defined as follows.

\begin{definition}[\cite{GBilateral}]
 Let $\tbv=(\bv_1|\tilde{\bv}_3|\bar{\bv}_2)$ be a bilateral identifying vector of type $(n_1,n-n_1-n_2, n_2)$, where $\wt(\bv_1)=a_1$, $\wt(\bar{\bv}_2)=a_2$, and $a_1+a_2=k$. The \emph{generalized bilateral Ferrers diagram} $\tcF_{\tbv}$ of $\tbv$ is formed by all dots in $\widetilde{EF}(\tbv)$. More specifically,
\begin{align*}
    \tcF_{\tbv}= \begin{matrix}
        \cF_{\bv_1} & \cF_1 & \cF_2 \\
        \ & \cF_3 & \bar{\cF}_{\obv_2}
    \end{matrix},
\end{align*}
where $\cF_1$ is an  $a_1 \times (n-n_1-n_2)$ full Ferrers diagram, $\cF_2$ is an $a_1 \times (n_2-a_2)$ full Ferrers diagram, and $\cF_3$ is an $a_2 \times (n-n_1-n_2)$ full Ferrers diagram.
\end{definition}

Consider a bilateral identifying vector $\tbv=(\bv_1 | \tbv_3 | \obv_2)$ of type $(n_1,n-n_1-n_2,n_2)$ with $\wt(\bv_1) = a_1$, $\wt(\obv_2)= a_2$,  and $a_1+a_2=k$. For any matrix $\bM$ with form $\tcF_{\tbv}$, define $\phi_{\tbv}(\bM)$ as the $a_1 \times (n_2-a_2)$ submatrix located in the upper right-hand corner of $\bM$. Clearly, the mapping $\phi_{\tbv}$ is related to $\tbv$.

\begin{example}
    Let $\tbv=(110100 \tilde{0}\tilde{0} \bar{0}\bar{0}\bar{1}\bar{0}\bar{1}\bar{1}\bar{0})$ be a bilateral identifying vector. The generalized bilateral echelon Ferrers form of $\tbv$ is
\begin{align*}
\widetilde{EF}(\tbv)=
\left(\begin{array}{ccccccccccccccc}     
1 & 0 & \bullet & 0 & \bullet  & \bullet & \bullet & \bullet & \bullet &\bullet & 0 & \bullet &0&0 & \bullet \\ 0 & 1 & \bullet & 0 & \bullet  & \bullet & \bullet & \bullet &\bullet& \bullet & 0 & \bullet &0&0& \bullet\\ 0 &0&0&1& \bullet&\bullet&\bullet&\bullet&\bullet&\bullet&0&\bullet&0&0&\bullet \\ 0&0&0&0&0&0&\bullet&\bullet &\bullet&\bullet &0 & \bullet &0 &1&0\\
0&0&0&0&0&0&\bullet &\bullet&\bullet&\bullet&0&\bullet&1&0&0\\
0&0&0&0&0&0&\bullet &\bullet&\bullet&\bullet&1&0&0&0&0
\end{array}\right),
\end{align*}
and the generalized bilateral Ferrers diagram $\tcF_{\tbv}$ is
\begin{align*}
\tcF_{\tbv}=\begin{array}{ccccccccc}
    \bullet&\bullet & \bullet &\bullet&\bullet&\bullet&\bullet&\bullet&\bullet \\
     \bullet&\bullet & \bullet &\bullet&\bullet&\bullet&\bullet&\bullet&\bullet \\ 
     &\bullet&\bullet & \bullet &\bullet&\bullet&\bullet&\bullet&\bullet\\
     & & & \bullet &\bullet&\bullet&\bullet&\bullet&\\
     & & & \bullet &\bullet&\bullet&\bullet&\bullet&\\
     & & & \bullet &\bullet&\bullet&\bullet&&
\end{array}.
\end{align*}
Take a matrix 
\begin{align*}
\bM=\left(
\begin{array}{ccccccccc}
    1 & 1 &0&1&0&0&1&1&1 \\
     1 & 1 &1&0&1&0&1&0&1 \\ 
    0 &1 & 0 &1&1&1&0&1&0\\
     0& 0 & 0 &1&1&0&1&0&0\\
     0& 0 & 0 &1&1&0&0&1&0\\
     0& 0 & 0 &0&0&1&1&0&0
\end{array}\right)
\end{align*}
with form $\tcF_{\tbv}$, then $\phi_{\tbv}(\bM)=\left(\begin{array}{cccc}
   0 &1&1&1  \\
    0 & 1&0&1 \\
    1 & 0&1 &0
\end{array}\right)$.
\end{example}

\begin{definition}
    Let $\tcF$ be an $m \times n$ generalized bilateral Ferrers diagram.
     An $(\tcF, d)_{q}$ \emph{generalized bilateral Ferrers diagram rank-metric code} (GB-FD code) is an $(m \times n, d)_q$ RMC in which each codeword has form $\tcF$. An $(\tcF, d)_{q}$ GB-FD code with $M$ codewords is denoted by an $(\tcF,M, d)_{q}$ GB-FD code.
\end{definition}

The following proposition provides a lower bound on the size of special GB-FD codes, in which the rank of a specific submatrix of the codeword is restricted. Let $\sigma(\cdot)_{a,b}: \mathbb{F}_{q}^{m \times n} \rightarrow \mathbb{F}_{q}^{a \times b}$ be the map from a matrix $\bM$ to the $a \times b$ submatrix in the upper right-hand corner of $\bM$.

\begin{proposition}[\cite{GBilateral}]\label{P-GBFD-Par}
    Let $\delta$ be a positive integer, and let $k_i \geq \delta$, $l_i \geq \delta$ for $1 \leq i \leq 3$,  with $k_2 \geq k_1$ and  $l_2 \geq l_3$. Let $\Lambda_i = \max\{k_i,l_i\}$ and  $\Gamma_i=\min\{k_i,l_i\}$ for $i=1, 3$. Set $d_1 =\begin{cases}
    \lceil \frac{\delta}{2} 
    \rceil & ~\text{if}~\Lambda_1 \geq \Lambda_3;\\
    \lfloor \frac{\delta}{2} 
    \rfloor & \text{otherwise},
\end{cases}$ and $d_2=\delta-d_1$. Suppose 
\begin{align*}
\tcF=\begin{matrix}
    \cF_1 & \cF_2\\
    & \cF_3
\end{matrix}
\end{align*}
is a generalized bilateral Ferrers diagram, where $\cF_i$ is a $k_i \times l_i$ full Ferrers diagram for $i=1,2,3$. Then there exists an $(\tcF,\rho,\delta)_q$ GB-FD code $\cC_{\tcF}$ such that for any codeword $\bM \in \cC_{\tcF}$, $\rank(\sigma(\bM)_{k_2,l_2}) \leq r$, where
\[
\rho=\min\{q^{\Lambda_1 d_2}, q^{\Lambda_3 d_1}\} q^{\Lambda_1(\Gamma_1-\delta+1)+\Lambda_3(\Gamma_3 -\delta+1)} \Delta(k_2, l_2, \delta;r)_q.
\]
\end{proposition}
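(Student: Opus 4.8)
The plan is to fill the three blocks of $\tcF$ independently and to control the rank of a difference block by block. I write a codeword as $\bM$ with diagonal blocks $\bM_1,\bM_3$ filling the full diagrams $\cF_1,\cF_3$ and with $\bM_2=\sigma(\bM)_{k_2,l_2}$ filling $\cF_2$; since $k_1\le k_2$ and $l_3\le l_2$, the portion of $\tcF$ lying in the rows of $\cF_3$ and the columns of $\cF_1$ is empty. The first thing I would record is an elementary inequality: for \emph{any} matrix $\bN$ supported on $\tcF$, with diagonal blocks $\bN_1,\bN_3$, deleting every row and column outside those of $\cF_1$ and $\cF_3$ leaves a block upper-triangular matrix with zero lower-left block, so that $\rank(\bN)\ge \rank(\bN_1)+\rank(\bN_3)$. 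This applies in particular to differences of codewords, since these are again supported on $\tcF$.

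For the upper-right block I would take $\cD_2$ to be the rank-$\le r$ part of a linear $[k_2\times l_2,\delta]_q$ MRD code $\cM_2$ (which exists as $k_2,l_2\ge\delta$), so that $|\cD_2|=\Delta(k_2,l_2,\delta;r)_q$ and any difference of two fillings lies in $\cM_2$, hence equals $\bO$ or has rank $\ge\delta$. For the diagonal blocks I would use nested linear MRD codes on the two full diagrams: $\cB_1\subseteq\cA_1$ on $\cF_1$ with minimum distances $\delta$ and $d_1$, and $\cB_3\subseteq\cA_3$ on $\cF_3$ with minimum distances $\delta$ and $d_2=\delta-d_1$. These exist because $\Gamma_1,\Gamma_3\ge\delta$ and a Gabidulin-type family of increasing $q$-degree realizes every distance on a full diagram in a nested fashion; their sizes are exactly the exponentials appearing in $\rho$, giving $|\cA_1/\cB_1|=q^{\Lambda_1 d_2}$ and $|\cA_3/\cB_3|=q^{\Lambda_3 d_1}$.

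The heart of the argument is to couple $\cA_1$ and $\cA_3$ so that the diagonal pair attains minimum sum-rank distance $\delta$, not merely $d_1$ or $d_2$. Assuming $\Lambda_1 d_2\le\Lambda_3 d_1$ (the reverse case being symmetric, coupling instead through $\cF_3$), I would fix a subspace $\cB_3\subseteq\cA_3'\subseteq\cA_3$ with $|\cA_3'/\cB_3|=q^{\Lambda_1 d_2}$, an $\BF_q$-isomorphism $\psi\colon\cA_1/\cB_1\to\cA_3'/\cB_3$, and set $\cC_{13}=\{(\bM_1,\bM_3)\colon \bM_1\in\cA_1,\ \bM_3\in\cA_3',\ \bM_3+\cB_3=\psi(\bM_1+\cB_1)\}$. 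For a nonzero difference $(\bN_1,\bN_3)\in\cC_{13}$: if $\bN_1\in\cB_1$ then $\bN_3\in\cB_3$ and the nonzero one among $\bN_1,\bN_3$ has rank $\ge\delta$; otherwise $\bN_1\in\cA_1\setminus\cB_1$ and $\bN_3\in\cA_3'\setminus\cB_3$ are both nonzero, of ranks $\ge d_1$ and $\ge d_2$, summing to $\ge d_1+d_2=\delta$. Counting (let $\bM_1$ range over $\cA_1$, forcing $\bM_3$ into one coset of $\cB_3$) gives $|\cC_{13}|=|\cA_1|\,|\cB_3|=\min\{q^{\Lambda_1 d_2},q^{\Lambda_3 d_1}\}\,q^{\Lambda_1(\Gamma_1-\delta+1)+\Lambda_3(\Gamma_3-\delta+1)}$.

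Finally I would let $\cC_{\tcF}$ consist of all $\bM$ with $(\bM_1,\bM_3)\in\cC_{13}$ and $\bM_2\in\cD_2$; then $|\cC_{\tcF}|=|\cC_{13}|\,|\cD_2|=\rho$ and every codeword has $\rank(\sigma(\bM)_{k_2,l_2})=\rank(\bM_2)\le r$. For a nonzero difference $\bN$, either $(\bN_1,\bN_3)\neq(\bO,\bO)$, and the block inequality together with the sum-rank bound yield $\rank(\bN)\ge\delta$, or $(\bN_1,\bN_3)=(\bO,\bO)$, in which case $\bN$ is supported entirely on $\cF_2$ and $\rank(\bN)=\rank(\bN_2)\ge\delta$. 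The hard part is the coupling step: when $\Lambda_1 d_2\neq\Lambda_3 d_1$ only the smaller of the two quotients can be transported injectively by $\psi$, which is exactly what produces the factor $\min\{q^{\Lambda_1 d_2},q^{\Lambda_3 d_1}\}$; and the delicate case to verify is $\bN_1\in\cB_1$, where the coupling must force $\bN_3\in\cB_3$ so that a vanishing diagonal block is compensated by full distance $\delta$ in the other. (Any split $d_1+d_2=\delta$ with $d_1,d_2\ge 1$ works; the stated balanced choice of $d_1$ is the one optimizing the $\min$.)
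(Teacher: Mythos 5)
Your construction is correct: the block upper-triangular rank inequality, the rank-restricted linear MRD code on $\cF_2$, and the coupling of nested MRD codes on $\cF_1$ and $\cF_3$ through an isomorphism of quotients (with the smaller quotient transported, producing the factor $\min\{q^{\Lambda_1 d_2},q^{\Lambda_3 d_1}\}$) together give exactly the claimed size $\rho$ and minimum distance $\delta$. The paper itself only cites this proposition from \cite{GBilateral} without proof, and your argument is essentially the standard one used there and in the earlier parallel/bilateral multilevel constructions, so there is nothing to correct.
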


The following two lemmas are crucial for presenting the generalized bilateral
multilevel construction. Let $\rs(\bU)$ denote the subspace of $\BF_q^n$ spanned by the rows of a matrix $\bU \in \BF_q^{k \times n}$.

\begin{lemma}[\cite{GBilateral}]\label{G1}
Let $\tbu$ and $\tbv$ be two bilateral identifying vectors having the same type $(n_1,n-n_1-n_2,n_2)$ with $n_1+n_2 \leq n$ and weight $k$.   Assume $\bU$ (resp. $\bV$) is a matrix in generalized bilateral echelon form $\widetilde{EF}(\tbu)$ (resp. $\widetilde{EF}(\tbv)$). Let $U=\rs(\bU)$ and $V=\rs(\bV)$, then 
     \[d_S(U,V) \geq d_H(\tbu,\tbv).\]
\end{lemma}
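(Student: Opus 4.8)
The plan is to reduce the statement to a lower bound on $\dim(U+V)$ and then split that dimension into a ``left'' forward-echelon contribution and a ``right'' inverse-echelon contribution by means of a coordinate projection and the rank--nullity theorem. Since $\dim U=\dim V=k$, we have $d_S(U,V)=2\dim(U+V)-2k$, so it suffices to prove $\dim(U+V)\ge k+\tfrac12 d_H(\tbu,\tbv)$. Because the middle blocks of both bilateral identifying vectors are zero, the Hamming distance splits as $d_H(\tbu,\tbv)=d_H(\bu_1,\bv_1)+d_H(\obu_2,\obv_2)$, where $\bu_1,\bv_1$ are the left identifying vectors of length $n_1$ and $\obu_2,\obv_2$ the right inverse identifying vectors of length $n_2$; write $\wt(\bu_1)=a_1$, $\wt(\obu_2)=a_2$, $\wt(\bv_1)=a_1'$, $\wt(\obv_2)=a_2'$ with $a_1+a_2=a_1'+a_2'=k$.

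Let $\pi_L\colon\BF_q^n\to\BF_q^{n_1}$ be the projection onto the first $n_1$ coordinates and let $\cZ=\{x\in\BF_q^n:\pi_L(x)=0\}$. Writing $W=U+V$, the rank--nullity theorem gives $\dim W=\dim\pi_L(W)+\dim(W\cap\cZ)$, and I bound the two summands separately. For the first, I use that projecting a row space onto a set of columns yields the row space of the corresponding submatrix: since the lower-left blocks of $\widetilde{EF}(\tbu)$ and $\widetilde{EF}(\tbv)$ vanish, $\pi_L(U)=\rs(EF(\bu_1))$ and $\pi_L(V)=\rs(EF(\bv_1))$, so $\pi_L(W)=\rs(EF(\bu_1))+\rs(EF(\bv_1))$. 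Invoking the monotonicity of pivot sets under inclusion (the pivot columns of a subspace are exactly the leftmost-nonzero positions of its vectors, so $X\subseteq Y$ forces $\mathrm{piv}(X)\subseteq\mathrm{piv}(Y)$, which is the mechanism behind Lemma \ref{L-S-H}), I get $\dim\pi_L(W)\ge|\mathrm{supp}(\bu_1)\cup\mathrm{supp}(\bv_1)|=\tfrac12(a_1+a_1'+d_H(\bu_1,\bv_1))$.

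For the second summand I note $W\cap\cZ\supseteq(U\cap\cZ)+(V\cap\cZ)$, and I claim $U\cap\cZ$ is exactly the span of the bottom $a_2$ rows of $\widetilde{EF}(\tbu)$. Indeed, any $w\in U$ is a combination of the rows; applying $\pi_L$ annihilates the bottom rows and leaves a combination of the rows of $EF(\bu_1)$, which are linearly independent, so $\pi_L(w)=0$ forces the coefficients of the top $a_1$ rows to vanish. Projecting onto the last $n_2$ coordinates via $\pi_R$ and using that the right blocks of those bottom rows form $\overline{EF}(\obu_2)$, I obtain $\pi_R(U\cap\cZ)=\rs(\overline{EF}(\obu_2))$ with inverse identifying vector $\obu_2$, and likewise for $V$. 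Since $\dim(W\cap\cZ)\ge\dim\pi_R\big((U\cap\cZ)+(V\cap\cZ)\big)$ and the inverse-pivot version of the monotonicity fact (the mirror of Lemma \ref{L-inv-L-S-H}, obtained by reversing the coordinate order) applies in $\BF_q^{n_2}$, I get $\dim(W\cap\cZ)\ge|\mathrm{supp}(\obu_2)\cup\mathrm{supp}(\obv_2)|=\tfrac12(a_2+a_2'+d_H(\obu_2,\obv_2))$.

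Adding the two bounds and using $a_1+a_2=a_1'+a_2'=k$ yields $\dim W\ge k+\tfrac12\big(d_H(\bu_1,\bv_1)+d_H(\obu_2,\obv_2)\big)=k+\tfrac12 d_H(\tbu,\tbv)$, which is exactly what is needed. The main obstacle is the middle step: one must verify cleanly that $U\cap\cZ$ is spanned by precisely the bottom $a_2$ rows, so that its right projection is genuinely an inverse-echelon space carrying $\obu_2$; this is where the echelon (full row rank, distinct pivots) structure of the left block is essential, and it is what makes the forward and inverse contributions add rather than merely compete. The remaining ingredients are the forward and inverse pivot-monotonicity facts together with the dimension bookkeeping.
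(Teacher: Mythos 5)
Your proof is correct, and it is complete at the one point you flag as delicate: since the lower-left $a_2\times n_1$ block of $\widetilde{EF}(\tbu)$ is zero and the upper-left block is a filled $EF(\bu_1)$ of full row rank $a_1$, the kernel of $\pi_L$ restricted to $U$ is exactly the span of the bottom $a_2$ rows, so the decomposition $\dim(U+V)=\dim\pi_L(U+V)+\dim((U+V)\cap\cZ)$ really does separate a forward-pivot contribution of size $\tfrac12(a_1+a_1'+d_H(\bu_1,\bv_1))$ from an inverse-pivot contribution of size at least $\tfrac12(a_2+a_2'+d_H(\obu_2,\obv_2))$. Note that the paper does not prove this lemma itself but cites it from the generalized bilateral multilevel construction paper \cite{GBilateral}; your argument is essentially the proof one finds there, which likewise reduces $d_S(U,V)=2\dim(U+V)-2k$ and bounds $\dim(U+V)$ by counting distinct leftmost-pivot positions contributed by the upper rows and distinct rightmost-pivot positions contributed by the lower rows of the stacked generator matrices. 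Your formulation via the projection $\pi_L$ and rank--nullity is a slightly cleaner packaging of the same idea, making explicit why the two counts add rather than interfere.
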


\begin{lemma}[\cite{GBilateral}]\label{G2}
   Let $\tbv=(\bv_1|\tilde{\bv}_3|\bar{\bv}_2)$ be a bilateral identifying vector of type $(n_1,n-n_1-n_2, n_2)$ with $n_1+n_2 \leq n$ and weight $k$. Assume $\bU$ and $\bV$ are two matrices in generalized bilateral echelon form $\widetilde{EF}(\tbv)$.  Let $U=\rs(\bU)$ and $V=\rs(\bV)$. Let $\tilde{C}_{\bU}$ (resp. $\tilde{C}_{\bV}$)  denote the submatrix of $\bU$ (resp. $\bV$) without the columns of pivots in $EF(\bv_1)$ and $\overline{EF}(\bar{\bv}_2)$. Then
   \[d_S(U,V) = 2d_R(\tilde{C}_{\bU},\tilde{C}_{\bV}).\]
\end{lemma}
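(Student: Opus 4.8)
The plan is to reduce the subspace distance to a single rank computation of the difference of the non-pivot parts, mirroring the classical multilevel argument while accounting for the two separate families of pivots present in the bilateral setting. First I would record the standard identity
\[
d_S(U,V)=\dim(U+V)-\dim(U\cap V)=2\bigl(\dim(U+V)-k\bigr),
\]
which holds because $\dim U=\dim V=k$. This turns the problem into computing $\dim(U+V)=\rank\!\begin{pmatrix}\bU\\ \bV\end{pmatrix}$, the rank of the $2k\times n$ stacked matrix.

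The crucial structural observation is that the $k$ pivot columns of $\widetilde{EF}(\tbv)$ — the $a_1$ columns carrying the pivots of $EF(\bv_1)$ together with the $a_2$ columns carrying the pivots of $\overline{EF}(\bar{\bv}_2)$ — are completely determined by $\tbv$ and hence are identical in $\bU$ and $\bV$. Indeed, by condition (3) the lower-left block vanishes, so each pivot column of $EF(\bv_1)$ is a standard basis vector supported on the top $a_1$ rows; by conditions (3) and (4) each pivot column of $\overline{EF}(\bar{\bv}_2)$ has zeros in the top $a_1$ rows and carries the inverse-echelon pivot in the bottom $a_2$ rows. In either case the entries are the fixed $0$'s and $1$'s prescribed by the form rather than free dot positions, so $\bU$ and $\bV$ agree on these columns: restricted to the pivot set $P$ (with $|P|=k$), both matrices equal one and the same permutation matrix.

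Next I would apply the row operation $\begin{pmatrix}\bU\\ \bV\end{pmatrix}\mapsto\begin{pmatrix}\bU\\ \bV-\bU\end{pmatrix}$, which preserves the row space and hence $\dim(U+V)$. Since $\bU$ and $\bV$ agree on $P$, the block $\bV-\bU$ vanishes on every pivot column, so its support lies entirely in the $n-k$ non-pivot columns, where by definition it equals $\tilde{C}_{\bV}-\tilde{C}_{\bU}$. Projecting any linear relation among the $2k$ rows onto $P$ annihilates the rows of $\bV-\bU$ and leaves an invertible combination of the rows of $\bU$, forcing those coefficients to vanish; the two blocks therefore contribute independently and
\[
\rank\!\begin{pmatrix}\bU\\ \bV-\bU\end{pmatrix}=k+\rank(\bV-\bU)=k+\rank(\tilde{C}_{\bV}-\tilde{C}_{\bU}).
\]
Substituting into the distance identity yields $d_S(U,V)=2\,\rank(\tilde{C}_{\bV}-\tilde{C}_{\bU})=2\,d_R(\tilde{C}_{\bU},\tilde{C}_{\bV})$, as claimed.

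The step I expect to be the main obstacle is the structural observation of the second paragraph: one must verify that in the combined generalized bilateral echelon form \emph{both} families of pivot columns remain clean simultaneously. In the one-sided multilevel lemma there is a single echelon form and the pivot columns are clean by the RREF normalization, whereas here the interaction between $EF(\bv_1)$ sitting in the top-left and $\overline{EF}(\bar{\bv}_2)$ sitting in the bottom-right could a priori place a dot inside a pivot column; it is precisely conditions (3) and (4) of the definition that exclude this. Once this is pinned down, the rank-splitting argument and the final substitution are routine.
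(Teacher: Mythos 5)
Your proof is correct. Note that the paper does not prove this lemma itself but imports it from \cite{GBilateral}; your argument is the standard one for such statements (the direct generalization of the Etzion--Silberstein computation): write $d_S(U,V)=2(\rank\bigl(\begin{smallmatrix}\bU\\ \bV\end{smallmatrix}\bigr)-k)$, observe that conditions (3) and (4) of the definition of $\widetilde{EF}(\tbv)$ force all $k$ pivot columns to be fixed standard basis vectors identical in $\bU$ and $\bV$, and then split the rank of $\bigl(\begin{smallmatrix}\bU\\ \bV-\bU\end{smallmatrix}\bigr)$ as $k+\rank(\tilde{C}_{\bV}-\tilde{C}_{\bU})$. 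You correctly identified and discharged the only point that differs from the one-sided multilevel lemma, namely that both families of pivot columns remain clean simultaneously in the bilateral form.
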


The generalized bilateral multilevel construction is based on Lemmas \ref{G1} and \ref{G2}.
\begin{construction}[Generalized bilateral multilevel construction \cite{GBilateral}]\label{Const-GBMC}
Let $\tilde{\cS}$ be a set of bilateral identifying vectors, which is a binary constant-weight code of length $n$, weight $k$,  and minimum Hamming distance $2\delta$. Assume that all bilateral identifying vectors in $\tcS$ have the same type. If there exists an $(\tcF_{\tbv},\delta)_q$ GB-FD code $\cC_{\tcF_{\tbv}}$ for each $\tbv \in \tilde{\cS}$, then the lifted code $\sL(\cC_{\tcF_{\tbv}})$ is an $(n,2\delta,\{k\})_q$ CDC. 
Moreover, the union of these lifted codes $\cup_{\tbv \in \tilde{\cS}} \sL(\cC_{\tcF_{\tbv}})$ forms an $(n, \sum_{\tbv \in \tilde{\cS}} |\cC_{\tcF_{\tbv}}|, 2\delta, \{k\})_q$ CDC.
\end{construction}

\subsection{Mixed dimension construction}

Lao et al. \cite{MixDD} introduced the mixed dimension construction based on small RMCs and a new family of codes called mixed dimension/distance subspace codes.

\begin{definition}
    Let $\cC \subseteq \cP_q(n)$, $d_1$ and $d_0$ be two positive integers with $d_1 \geq d_0$. For any two distinct subspaces $X,Y \in \cC$,
    \begin{itemize}
        \item[(1)] $d_S(X,Y) \geq d_1$ if $\dim(X)=\dim(Y)$,
        \item[(2)] $d_S(X,Y) \geq d_0$ if $\dim(X) \neq \dim(Y)$,
    \end{itemize}
then $\cC$ is an $(n,d_1,d_0)_q$ mixed dimension/distance subspace code (MDDC).

Let $T$ be a subset of $\{0,1, \dots,n\}$. We say that $\cC$ is an $(n,d_1,d_0,T)_q$ MDDC, if $\eta_i(\cC) >0$ for $i \in T$ and $\eta_i(\cC) = 0$ for $i \notin T$.
\end{definition}
\begin{example}\label{E-Alg}
    By using the $(8,4801,4,\{4\})_2$ CDC and the $(8,1326,4, \{3\})_2$ CDC constructed in \cite{EAlg}, we have an $(8,4,3,\{4,3\})_2$ MDDC $\cC$ given by Algorithm 1 in \cite{MixDD}, which has dimension distribution $\eta_4(\cC)=4801$ and $\eta_3(\cC)=327$.
\end{example}

The following notations will be used in the mixed dimension construction.
\begin{itemize}
\item $(\bA~|~\bB)$ or $(\bA~\bB)$  represents a matrix concatenated from $\bA$ and $\bB$ with compatible sizes and ambient fields, 
\item for a subset $T$ of the positive integers, denote the minimum value in the set $T$  by $    T^{\min}$, and define
\begin{align*}
    l_T:= \begin{cases}
        \min\{u-v:u,v \in T, u>v\} & \text{if}~|T|>1, \\
        $0$ & \text{if}~ |T|=1.
    \end{cases}
\end{align*}
\end{itemize}

The SC-representation of a mixed dimension code is defined as follows, and it is  helpful for describing the mixed dimension construction. 

\begin{definition} \label{D-SC}
A set of matrices $\cH(\cC) \subseteq \BF_q^{k \times n}$ is called  an SC-representation of a constant dimension code $\cC$ in $\cG_q(n,k)$ if the following conditions are satisfied:
\begin{itemize} 
\item[(1)] for each $\bH \in \cH(\cC)$, $\rank(\bH) = k$;
\item[(2)] for any two distinct elements $\bH, \bH' \in \cH(\cC)$, $\rs(\bH) \neq  \rs(\bH')$;
\item[(3)] $\cC = \{\rs(\bH) \mid \bH \in \cH(\cC)\}$.
\end{itemize}
Furthermore, let $\cC$ be an $(n, 2\delta, T)_q$ mixed dimension code such that $ \cC= \cup_{k\in T}\cC^{k}$ with $\cC^{k} \subseteq \cG_q(n,k)$. 
An SC-representation of the mixed dimension code $\cC$ is defined as $\cH(\cC) = \cup_{k \in T} \cH(\cC^{k})$.
\end{definition}

 \begin{theorem}[Mixed dimension construction \cite{MixDD}]\label{Mix}
    Let $n, n_1, n_2, k,$ and $ \delta$ be integers such that $n=n_1+n_2$, $n_1 \geq k$, $n_2 \geq k$, and $k \geq \delta \geq 2$.  Let $T_1 \subseteq [\delta,k]$ with $l_{T_1} <2 \delta$. Let $T_2 \subseteq [k+\delta-T_1^{\min},k]$ with $l_{T_2} < 2\delta$.   For $i = 1,2$, let $\cH_i = \cup_{r \in T_i}\{\bH \in \BF_q^{r \times n_i}: \rank(\bH)=r, \bH~\text{in}~\RREF \}$ be an SC-representation of  an $(n_i, 2\delta, 2\delta-l_{T_i}, T_i)_q$ MDDC $\cX_i$. Let $\cP_t$ be a $(k \times (n_2+t-k),\delta)_q$ MRD code for all $t \in T_1$ and $\cQ_s$ be a $(k \times (n_1+s-k), \delta; T_1^{\min} - \delta -(k-s))_q$ RRMC for all $s\in T_2$.
    Define $\cC_1 = \cup_{t \in T_1} \cC_1^{(t)}$, 
    \begin{align*}
        \cC_1^{(t)} = \left\{
        \rs\left(
        \begin{array}{c|cc}
        \begin{matrix}
            \bH_1 \\ \bO   
        \end{matrix}
        &\begin{matrix}
            \bO \\
            \bI_{k-t}
        \end{matrix}
            & \bP
        \end{array}
        \right) 
        \right\},
        \end{align*}
        where $\bH_1 \in \cH_1, \rank(\bH_1) = t,$ and $\bP \in \cP_t$.
        Define $\cC_2 = \cup_{s \in T_2} \cC_2^{(s)}$,
        \begin{align*}
        \cC_2^{(s)} =  \left\{\rs\left( 
        \begin{array}{cc|c}  
            \bQ &
      \begin{matrix}
                \bO \\
          \bI_{k-s} 
        \end{matrix}
        &
    \begin{matrix}
            \bH_2\\
            \bO
        \end{matrix}
        \end{array}
        \right)
        \right\},
    \end{align*}
    where $\bH_2 \in \cH_2, \rank(\bH_2)=s$, and $\bQ \in \cQ_s$.
    Then  $\cC_1 \cup \cC_2$ is an $(n, 2\delta, \{k\})_q$ CDC with size  $\sum_{t \in T_1} \eta_t(\cX_1)|\cP_t| + \sum_{s \in T_2} \eta_s(\cX_2)|\cQ_s|$.
\end{theorem}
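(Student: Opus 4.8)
The plan is to reduce the entire distance statement to a single rank estimate. For two codewords $U=\rs(\bA)$ and $V=\rs(\bB)$, where $\bA,\bB\in\BF_q^{k\times n}$ are the displayed generator matrices, I would use $d_S(U,V)=2\big(\rank\left(\begin{smallmatrix}\bA\\\bB\end{smallmatrix}\right)-k\big)$, so that establishing $d_S(U,V)\ge 2\delta$ is equivalent to $\rank\left(\begin{smallmatrix}\bA\\\bB\end{smallmatrix}\right)\ge k+\delta$ for $U\ne V$. First I would record that every generator has rank $k$: the embedded identity block $\bI_{k-t}$ (resp. $\bI_{k-s}$) and the rank-$t$ block $\bH_1$ (resp. rank-$s$ block $\bH_2$) sit in disjoint rows and columns, so each codeword is genuinely $k$-dimensional. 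The cardinality formula then follows once the distance bound shows that distinct data $(\bH_1,\bP)$ or $(\bH_2,\bQ)$ produce distinct subspaces. After that I would split into three cases according to whether $U,V$ lie in $\cC_1$, in $\cC_2$, or one in each.

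When $U\in\cC_1^{(t)}$ and $V\in\cC_1^{(t')}$ with $t\ge t'$, I would row-reduce the stacked matrix using the two identity blocks $\bI_{k-t}$ and $\bI_{k-t'}$; they clear the middle columns and contribute exactly $k-t'$ to the rank. What survives is controlled by the first $n_1$ columns through $\rank\left(\begin{smallmatrix}\bH_1\\\bH_1'\end{smallmatrix}\right)=\tfrac12\big(d_S(\rs(\bH_1),\rs(\bH_1'))+t+t'\big)$, plus a contribution $\rank(\bP-\bP')$ in the degenerate subcase $t=t'$, $\bH_1=\bH_1'$. Because $\cX_1$ is an MDDC, $d_S(\rs(\bH_1),\rs(\bH_1'))\ge 2\delta$ when $t=t'$ (and $\rank(\bP-\bP')\ge\delta$ since $\cP_t$ has minimum distance $\delta$), while $d_S\ge 2\delta-l_{T_1}$ when $t\ne t'$; the inequality $t-t'\ge l_{T_1}$ (valid as $t,t'\in T_1$) then pushes the surviving rank to at least $t'+\delta$, giving total $\ge k+\delta$. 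The case $U,V\in\cC_2$ is completely symmetric under interchanging $n_1,n_2$, $\cX_1,\cX_2$, $\cP_t,\cQ_s$, and $l_{T_1},l_{T_2}$; here only the minimum distance $\delta$ of $\cQ_s$ is used, not its rank restriction.

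The crossing case $U\in\cC_1^{(t)}$, $V\in\cC_2^{(s)}$ is where the interval hypothesis on $T_2$ and the rank restriction on $\cQ_s$ are essential, and I expect it to be the main obstacle. Partitioning the $n$ columns into the regions carved out by the two identity blocks and the two MDDC blocks and clearing with $\bI_{k-t}$ and $\bI_{k-s}$, I would reduce the target to $\rank(\bN)\ge t+s-k+\delta$, where $\bN=\left(\begin{smallmatrix}\bG' & \bP_{\mathrm{top}}\\ \bQ_{\mathrm{top}} & \bG\end{smallmatrix}\right)$ with $\bG'=\bH_1^{L}-\bH_1^{M}\bQ_{\mathrm{bot}}$ and $\bG=\bH_2^{R}-\bH_2^{M}\bP_{\mathrm{bot}}$ (superscripts denoting the relevant column sub-blocks, subscripts the top/bottom row-blocks). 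Since $s\ge k+\delta-T_1^{\min}$ forces $t+s-k\ge\delta$, this target is already $\ge 2\delta$. Rather than bound $\rank(\bN)$ directly I would bound its left null space $N$, aiming for $\dim N\le k-\delta$. Projecting $N$ onto its $\alpha$-block, any relation forces $\alpha\bH_1^{L}\in\rs(\bQ)$; combining $\rank(\bH_1^{L})\ge t+s-k$ with the rank restriction $\rank(\bQ)\le T_1^{\min}-\delta-(k-s)$ shows the $\alpha$-image has dimension at most $(t-\rank(\bH_1^{L}))+\rank(\bQ)\le (k-s)+\rank(\bQ)\le T_1^{\min}-\delta$. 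On the kernel of that projection ($\alpha=0$) one is left inside $\{\beta:\beta\bG=0\}$, and writing $\bG=\bH_2\left(\begin{smallmatrix}-\bP_{\mathrm{bot}}\\ \bI\end{smallmatrix}\right)$ as a rank-$s$ matrix times a full-column-rank factor yields $\rank(\bG)\ge t+s-k$ by Sylvester's inequality, so this part contributes at most $k-t$. Hence $\dim N\le (T_1^{\min}-\delta)+(k-t)\le k-\delta$, the final step using $t\ge T_1^{\min}$, which is precisely the needed bound.

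Assembling the three cases gives minimum distance $\ge 2\delta$; in particular distinct parameter choices yield distinct subspaces, so the size is exactly $\sum_{t\in T_1}\eta_t(\cX_1)|\cP_t|+\sum_{s\in T_2}\eta_s(\cX_2)|\cQ_s|$, completing the proof. The hard part is Case C: spotting the correct block reduction to $\bN$ and recognizing that the rank restriction on $\cQ_s$ (pinned down exactly by the requirement $T_2\subseteq[k+\delta-T_1^{\min},k]$) caps the $\alpha$-projection, while Sylvester's inequality handles the complementary direction. Notably, the MRD property of $\cP_t$ enters only through its minimum distance in Case A and through maximizing the count, not in the crossing estimate.
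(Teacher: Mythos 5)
This theorem is imported verbatim from \cite{MixDD}; the present paper states it without proof, so there is no internal argument to compare yours against. Judged on its own, your proposal is correct and essentially the standard argument for this construction: the reduction to $\rank\left(\begin{smallmatrix}\bA\\ \bB\end{smallmatrix}\right)\ge k+\delta$, the linkage-style treatment of the two within-family cases (with the degenerate subcase $t=t'$, $\bH_1=\bH_1'$ settled by $\rank(\bP-\bP')\ge\delta$, and the subcase $t\ne t'$ by $\frac{1}{2}(2\delta-l_{T_1}+t+t')+(k-t')\ge k+\delta$), and the crossing case all check out. In particular, your Case C is sound: after clearing with $\bI_{k-t}$ and $\bI_{k-s}$ the target is $\rank(\bN)\ge t+s-k+\delta$, and the left-kernel estimate $\dim N\le\bigl(t-\rank(\bH_1^{L})\bigr)+\rank(\bQ)+\bigl(s-\rank(\bG)\bigr)\le (k-s)+\bigl(T_1^{\min}-\delta-(k-s)\bigr)+(k-t)\le k-\delta$ invokes the rank restriction on $\cQ_s$ and Sylvester's inequality exactly where the hypotheses $T_2\subseteq[k+\delta-T_1^{\min},k]$ and $\rank(\bH_1^{L})\ge t-(k-s)$ make them available.
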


\section{Main results}\label{Sec-main}

In this section, we first introduce the parallel mixed dimension construction. Later, 
we describe our two generalized bilateral multilevel constructions based on the parallel mixed dimension construction.

\begin{theorem}[\cite{ParllelMixed}]\label{T-ParalMixConstr}
    Let $n$, $n_1$, $n_2$, $k$, and $\delta$ be integers with $n=n_1+n_2,$ $k \geq 2\delta$,  and $ n_i \geq k$ for $i=1,2$. Let $T_1 \subseteq [\delta,k]$ with $l_{T_1} <2 \delta$. Let $T_2 \subseteq [k+\delta-T_1^{\min},k]$ with $l_{T_2} < 2\delta$. Let $n_3$ be a positive integer satisfying $n_1+k-T_1^{\min} \leq n-n_3+T_2^{\min}-k$. 
    Let $\cH_1 = \cup_{t \in T_1}\{\bH_1 \in \BF_q^{t \times n_1}: \rank(\bH_1)=t, \bH_1~\text{in}~\RREF \}$ be an SC-representation of an $(n_1, 2\delta, 2\delta-l_{T_1}, T_1)_q$ MDDC $\cX_1$.
    Let $\cH_3= \cup_{s\in T_2}\{\bH_3 \in \BF_q^{s \times n_3}: \rank(\bH_3)=s, \bH_3~\text{in}~ \RREF\}$ be an SC-representation of an $(n_3, 2\delta, 2\delta-l_{T_2}, T_2)_q$ MDDC $\cX_3$. Let $\cP_t$ be a $(k \times (n_2+t-k), \delta)_q$ MRD code for all $t \in T_1$ and $\cQ_s'$ be a $(k\times(n-n_3+s-k), \delta; k-\delta)_q$ RRMC for all $s \in T_2$. 
    Define $\cC_1 = \cup_{t \in T_1} \cC_1^{(t)}$, 
    \begin{align*}
        \cC_1^{(t)} = \left\{
        \rs\left(
        \begin{array}{c|cc}
        \begin{matrix}
            \bH_1 \\ \bO   
        \end{matrix}
        &\begin{matrix}
            \bO \\
            \bI_{k-t}
        \end{matrix}
            & \bP
        \end{array}
        \right) 
        \right\},
        \end{align*}
        where $\bH_1 \in \cH_1, \rank(\bH_1) = t,$ and $\bP \in \cP_t$.
    Define $\cC_3 = \cup_{s \in T_2} \cC_3^{(s)}$, 
    \[\cC_3^{(s)} =  \left\{\rs\left( 
        \begin{array}{cc|c}  
            \bQ' &
      \begin{matrix}
                \bO \\
          \bI_{k-s} 
        \end{matrix}
        &
    \begin{matrix}
            \bH_3\\
            \bO
        \end{matrix}
        \end{array}
        \right)
        \right\},\]
        where $\bH_3 \in \cH_3, \rank(\bH_3)=s$, and $\bQ' \in \cQ'_s$.
        Then $\cC_1 \cup \cC_3$ is an $(n, 2\delta,\{k\})_q$ CDC with size $\sum_{t \in T_1} \eta_t(\cX_1)|\cP_t| + \sum_{s \in T_2} \eta_s(\cX_3)|\cQ_s'|$.
\end{theorem}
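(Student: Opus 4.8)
The plan is to show that each codeword spans a $k$-dimensional subspace and that any two distinct codewords $U,V$ satisfy $d_S(U,V)\ge 2\delta$; for $U,V\in\cG_q(n,k)$ the latter is equivalent to $\dim(U\cap V)\le k-\delta$. The dimension claim is immediate, since in each generator matrix the identity block ($\bI_{k-t}$ or $\bI_{k-s}$) together with the rank-$t$ matrix $\bH_1$ (resp. rank-$s$ matrix $\bH_3$) in $\RREF$ forces full row rank $k$. The cardinality formula follows once one checks that distinct data $(\bH_1,\bP)$, resp. $(\bH_3,\bQ')$, yield distinct subspaces (the canonical echelon shape makes the data recoverable from the row space) and that $\cC_1\cap\cC_3=\varnothing$ (the two families are at positive distance). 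It then remains to bound $\dim(U\cap V)$ in three cases: $U,V\in\cC_1$; $U,V\in\cC_3$; and $U\in\cC_1$, $V\in\cC_3$.

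For $U,V\in\cC_1$ the block $\cC_1$ is verbatim the $\cC_1$ of the mixed dimension construction (Theorem \ref{Mix}) under the same hypotheses, so its internal minimum distance $2\delta$ is inherited directly: equal $\bH_1$ reduces the distance to twice the rank distance of two codewords of the MRD code $\cP_t$, distinct $\bH_1$ of equal rank invokes the same-dimension distance $2\delta$ of $\cX_1$, and distinct ranks $t\ne t'$ use the reduced distance $2\delta-l_{T_1}$ of $\cX_1$, lifted back to $2\delta$ by the misaligned identity blocks. For $U,V\in\cC_3$ I would run the reasoning for the $\cC_2$ block of Theorem \ref{Mix}, now on length $n_3$ (using Lemma \ref{L-inv-L-S-H} where appropriate): same $\bH_3$ reduces the distance to twice the rank distance of $\cQ'_s$ (at least $2\delta$, as $\cQ'_s$ has minimum rank distance $\delta$), distinct $\bH_3$ of equal rank uses the same-dimension distance of $\cX_3$, and ranks $s\ne s'$ use the reduced distance $2\delta-l_{T_2}$ of $\cX_3$ together with the shifted $\bI_{k-s}$ blocks. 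Since within $\cC_3$ two codewords share the same left region, the weaker rank restriction $k-\delta$ on $\cQ'_s$ is irrelevant here (it is used only in the cross case).

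The cross case is the heart of the matter and is precisely what the positioning hypothesis $n_1+k-T_1^{\min}\le n-n_3+T_2^{\min}-k$ controls. Fix $U=\rs(\bU)\in\cC_1^{(t)}$ and $V=\rs(\bV)\in\cC_3^{(s)}$. As $\bU$ has full row rank, each $\bx\in U\cap V$ has a unique $\bu=(\bu_1,\bu_2)\in\BF_q^k$ with $\bx=\bu\bU$, so $\bx\mapsto\bu$ embeds $U\cap V$ into $\BF_q^k$ and it suffices to bound $\dim\{\bu:\bu\bU\in V\}$. The hypothesis together with $t\ge T_1^{\min}$ and $s\ge T_2^{\min}$ gives $n_1+k-t\le n-n_3+s-k=:B$, so the first $B$ columns of $\bV$ are exactly the block $\bQ'$, while the first $B$ columns of $\bU$ carry $\bH_1$, the identity block $\bI_{k-t}$, and part of $\bP$. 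Whenever $\bu\bU\in V$, writing $\bu\bU=\bw\bV$ and restricting to these $B$ columns yields
\[
L(\bu):=\left(\bu_1\bH_1 \mid \bu_2 \mid (\bu\bP)|_{[n_1+k-t+1,\,B]}\right)=\bw\bQ'\in\rs(\bQ').
\]
The map $L$ is injective: $L(\bu)=\mathbf{0}$ forces $\bu_2=\mathbf{0}$ from the middle block and then $\bu_1\bH_1=\mathbf{0}$, hence $\bu_1=\mathbf{0}$ because $\bH_1$ has full row rank $t$. Thus $L$ restricts to an injection of $\{\bu:\bu\bU\in V\}$ into $\rs(\bQ')$, and $\dim\rs(\bQ')=\rank(\bQ')\le k-\delta$ by the rank restriction defining the RRMC $\cQ'_s$. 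Therefore $\dim(U\cap V)\le k-\delta$, i.e. $d_S(U,V)\ge 2\delta$.

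I expect the main obstacle to be the different-dimension subcases inside each family, where the two MDDCs only guarantee distance $2\delta-l_{T_i}$ between codewords of unequal dimension and one must recover the full $2\delta$ from the shift of the identity blocks; this is exactly where the hypotheses $l_{T_i}<2\delta$ and the precise column positions enter, and for $\cC_3$ it cannot be quoted verbatim from Theorem \ref{Mix} but must be re-derived on length $n_3$. By contrast the cross case---seemingly the most intricate---collapses cleanly once the positioning hypothesis locates the blocks, with the rank bound $\rank(\bQ')\le k-\delta$ doing all the work and the MRD/echelon structure of $\bP$ and $\bH_3$ playing no role.
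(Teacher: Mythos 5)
The paper states this theorem as a quoted result from \cite{ParllelMixed} and gives no proof of its own, so the benchmark is the standard argument from that reference; your reconstruction follows essentially that same route (case split into the two intra-family cases and the cross case, with $\cC_1$ inherited verbatim from Theorem \ref{Mix}, the $\cC_3$ analysis repeated on length $n_3$ where the rank restriction is indeed not needed, and the cross case settled by the positioning inequality $n_1+k-t\le n-n_3+s-k$ together with $\rank(\bQ')\le k-\delta$). Your cross-case argument via the injective restriction map $L$ into $\rs(\bQ')$ is correct and is the key point; the only parts left as sketches are the unequal-dimension subcases within each family, which are legitimately delegated to the already-stated mixed dimension construction.
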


Take $n_3=k$ and $T_2=\{k\}$ in Theorem \ref{T-ParalMixConstr}, then $\cH_3$ degenerates to $\{\bI_k\}$, and the subsequent corollary  follows directly.

\begin{corollary}[\cite{ParllelMixed}]\label{S-ParMixConstr}
Let $n, n_1, n_2, k$, and $\delta$ be integers with $n=n_1+n_2$, $n_1 \geq k$, $n_2 \geq k$, and $k \geq 2\delta \geq 4$. Let $T_1 \subseteq [\delta,k]$ with $l_{T_1} < 2\delta$. Assume that $n_1 + k- T_1^{\min} \leq n-k$. Let $\cH_1 = \cup_{t \in T_1} \{\bH \in \BF_q^{t \times n_1}: \rank(\bH)=t, \bH~\text{in}~\RREF\}$ be an SC-representation of an $(n_1, 2\delta, 2\delta-l_{T_1}, T_1)_q$ MDDC $\cX_1$. Let $\cQ$ be a $(k\times(n-k), \delta; k-\delta)_q$ RRMC and $\cP_t$ be a $(k\times (n_2+t-k), \delta)_q$ MRD code for all $t \in T_1$. Define $\cC_1 = \cup_{t\in T_1} \cC_1^{(t)}$, 
\[\cC_1^{(t)} = \left\{
        \rs\left(
        \begin{array}{c|cc}
        \begin{matrix}
            \bH_1 \\ \bO   
        \end{matrix}
        &\begin{matrix}
            \bO \\
            \bI_{k-t}
        \end{matrix}
            & \bP
        \end{array}
        \right)\right\},\]
where  $\bH_1 \in \cH_1, \rank(\bH_1) = t$, and $\bP \in \cP_t$.
Define $\cC_3' = \{\rs(\bQ \mid \bI_k)\}$, where $\bQ \in \cQ$. Then $\cC_1 \cup \cC_3'$ is an $(n, 2\delta, \{k\})_q$ CDC with size $\sum_{t \in T_1} \eta_t(\cX_1)|\cP_t|+|\cQ|$. 
\end{corollary}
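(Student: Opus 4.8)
The plan is to derive the corollary as the specialization of Theorem \ref{T-ParalMixConstr} obtained by setting $n_3 = k$ and $T_2 = \{k\}$, so the entire argument reduces to checking that the hypotheses of the theorem hold under the corollary's assumptions and that each ingredient of the theorem collapses to the simpler object named in the corollary. I would emphasize at the outset that there is no new mathematical content beyond Theorem \ref{T-ParalMixConstr}; the work is purely a consistent bookkeeping of degenerate blocks.

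First I would verify the parameter constraints. The theorem requires $T_2 \subseteq [k+\delta - T_1^{\min}, k]$ with $l_{T_2} < 2\delta$. With $T_2 = \{k\}$, the condition $l_{T_2} = 0 < 2\delta$ is immediate, and $\{k\} \subseteq [k+\delta - T_1^{\min}, k]$ amounts to $k + \delta - T_1^{\min} \leq k$, i.e.\ $T_1^{\min} \geq \delta$; this is guaranteed because $T_1 \subseteq [\delta, k]$. The theorem also demands $n_1 + k - T_1^{\min} \leq n - n_3 + T_2^{\min} - k$. Substituting $n_3 = k$ and $T_2^{\min} = k$, the right-hand side equals $n - k$, so this inequality is exactly the corollary's hypothesis $n_1 + k - T_1^{\min} \leq n - k$. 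The remaining hypotheses ($n = n_1 + n_2$, $n_i \geq k$, and $k \geq 2\delta$) are carried over verbatim.

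Next I would identify the degenerate components. Since $T_2 = \{k\}$ and $n_3 = k$, the set $\cH_3$ consists of the $k \times k$ rank-$k$ matrices in $\RREF$, of which the identity $\bI_k$ is the only member; thus $\cH_3 = \{\bI_k\}$ and the MDDC $\cX_3$ is the trivial $(k, 2\delta, 2\delta, \{k\})_q$ code $\cG_q(k,k)$ with $\eta_k(\cX_3) = 1$. The RRMC $\cQ'_s$ of the theorem, being $(k \times (n - n_3 + s - k), \delta; k-\delta)_q$, becomes for $s = k$ a $(k \times (n-k), \delta; k-\delta)_q$ RRMC, which is precisely $\cQ$.

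Finally I would confirm that the constructed codes and their sizes match. The family $\cC_1$ is defined identically in both statements, so it transfers unchanged. For $\cC_3 = \cup_{s \in T_2} \cC_3^{(s)}$ the only summand is $s = k$: here the block $\bI_{k-s} = \bI_0$ vanishes and $\bH_3 = \bI_k$ has no zero rows beneath it, so the generator matrix collapses to $(\bQ' \mid \bI_k)$ and $\cC_3 = \cC_3'$. Summing cardinalities, $\sum_{s \in T_2} \eta_s(\cX_3)|\cQ'_s| = \eta_k(\cX_3)|\cQ| = |\cQ|$, giving the claimed total $\sum_{t \in T_1}\eta_t(\cX_1)|\cP_t| + |\cQ|$. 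I expect no genuine obstacle here; the only care required is to interpret every degenerate block (the empty identity $\bI_0$, the singleton $\cH_3 = \{\bI_k\}$, the trivial MDDC $\cX_3$) consistently so that $\cC_3^{(k)}$ reduces cleanly to $\cC_3'$ and the conclusion of Theorem \ref{T-ParalMixConstr} applies directly.
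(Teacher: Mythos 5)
Your proposal is correct and matches the paper exactly: the paper also obtains this corollary by setting $n_3=k$ and $T_2=\{k\}$ in Theorem \ref{T-ParalMixConstr}, noting that $\cH_3$ degenerates to $\{\bI_k\}$ so that the statement follows directly. Your write-up simply makes explicit the hypothesis checks and the collapse of the degenerate blocks, which the paper leaves implicit.
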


We combine the generalized bilateral multilevel construction with the parallel mixed dimension construction to produce CDCs as follows. The Hamming weight of a vector is denoted as $\wt(\cdot)$.

\begin{theorem}\label{Our-T-GBPM}
    Use the same notation as in Theorem \ref{T-ParalMixConstr}. Let $\mu_1=n_1+k-T_1^{\min}$, $\mu_2=n_3+k-T_2^{\min}$, and $\mu_3=n-\mu_1-\mu_2$.
    Let $\tcS \subseteq \BF_2^n$ be a set of bilateral identifying vectors  with  Hamming weight $k$ and minimum Hamming distance $2\delta$. Assume that all bilateral identifying vectors in $\tcS$ have the same type $(\mu_1, \mu_3, \mu_2)$.
    Suppose that for any bilateral identifying vector $\tbv = (\bv_1 | \tbv_3 | \obv_2)$ in $\tcS$, it satisfies $\delta \leq \wt(\bv_1) \leq k-\delta$. If there exists an $(\tcF_{\tbv},\delta)_q$ GB-FD code $\cC_{\tcF_{\tbv}}$ such that $\rank(\phi_{\tbv}(\bM)) \leq \wt(\bv_1)-\delta$  for any $\tbv \in \tcS$ and $\bM \in \cC_{\tcF_{\tbv}}$, then $\tcC_1 = \cup_{\tbv \in \tcS} \sL(\cC_{\tcF_{\tbv}})$ is an $(n, 2\delta,\{k\})_q$ CDC. Moreover, $\cC_1 \cup \cC_3 \cup \tcC_1$ is an $(n, |\cC_1|+|\cC_3|+|\tcC_1|, 2\delta, \{k\})_q$ CDC.
\end{theorem}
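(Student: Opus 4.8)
The plan is to verify the minimum distance and the size separately. Since $\tcC_1$, $\cC_1$, and $\cC_3$ all consist of $k$-dimensional subspaces, it suffices to show that every pair of distinct codewords has subspace distance at least $2\delta$; pairwise disjointness and the size formula then follow automatically once the distance is positive. I would split the pairs into four types: both codewords in $\tcC_1$; both in $\cC_1\cup\cC_3$; one in $\tcC_1$ and one in $\cC_1$; one in $\tcC_1$ and one in $\cC_3$. The first type is exactly Construction \ref{Const-GBMC} applied to $\tcS$ (constant weight $k$, minimum Hamming distance $2\delta$, common type, and an $(\tcF_{\tbv},\delta)_q$ GB-FD code for each $\tbv$; imposing $\rank(\phi_{\tbv}(\bM))\le \wt(\bv_1)-\delta$ merely restricts each GB-FD code to a subcode and preserves $d\ge\delta$), so $\tcC_1$ is an $(n,2\delta,\{k\})_q$ CDC by Lemmas \ref{G1} and \ref{G2}. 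The second type is exactly Theorem \ref{T-ParalMixConstr}. Thus the whole problem reduces to the two cross cases.

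For the cross cases I would use the identity $d_S(U,W)=2k-2\dim(U\cap W)$, so it suffices to prove $\dim(U\cap W)\le k-\delta$ whenever $U\in\tcC_1$ and $W\in\cC_1\cup\cC_3$. Write $\tbv=(\bv_1|\tbv_3|\obv_2)$ for the bilateral identifying vector of $U$, with $a_1=\wt(\bv_1)$, $a_2=\wt(\obv_2)$, $a_1+a_2=k$, and recall the hypothesis $\delta\le a_1\le k-\delta$. Partition the $n$ coordinates into the front block $F$ (the first $\mu_1$), the middle block of size $\mu_3$, and the back block $B$ (the last $\mu_2$); let $\pi_F$ and $\pi_B$ denote the coordinate projections onto $F$ and $B$. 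The engine of both cases is a short injectivity-plus-rank argument: if $\pi_F|_W$ (resp. $\pi_B|_W$) is injective, then $\dim(U\cap W)=\dim\pi_F(U\cap W)\le\dim\pi_F(U)$ (resp. with $B$), so only the rank of $\bU$ restricted to that block remains to be bounded.

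For $U\in\tcC_1$ and $W\in\cC_1^{(t)}$ I would work in $F$. Because $t\ge T_1^{\min}$ forces $n_1+k-t\le\mu_1$, the restriction $\bW|_F$ contains the block $\left(\begin{smallmatrix}\bH_1&\bO\\\bO&\bI_{k-t}\end{smallmatrix}\right)$ and hence has rank $k$, so $\pi_F|_W$ is injective. On the other hand $\bU|_F$ has $EF(\bv_1)$ (rank $a_1$) in its top $a_1$ rows and the zero lower-left block beneath, so $\dim\pi_F(U)=a_1$. Therefore $\dim(U\cap W)\le a_1\le k-\delta$, giving $d_S(U,W)\ge2\delta$; note that this case needs only the upper bound $a_1\le k-\delta$.

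For $U\in\tcC_1$ and $W\in\cC_3^{(s)}$ I would dualize into $B$. Since $s\ge T_2^{\min}$ forces $n_3+k-s\le\mu_2$, the restriction $\bW|_B$ contains the disjoint blocks $\bH_3$ (rank $s$) and $\bI_{k-s}$, hence has rank $k$ and $\pi_B|_W$ is injective. Now $\bU|_B$ carries the inverse-echelon block $\overline{EF}(\obv_2)$ (rank $a_2$) in its bottom $a_2$ rows, whose $a_2$ inverse-pivot columns are zero in the top $a_1$ rows; in the remaining $\mu_2-a_2$ columns the top rows equal $\phi_{\tbv}(\bM)$. Clearing with the inverse pivots yields $\dim\pi_B(U)=\rank(\bU|_B)=a_2+\rank(\phi_{\tbv}(\bM))\le a_2+(a_1-\delta)=k-\delta$, where the inequality is precisely the rank hypothesis on the GB-FD code. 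Hence $\dim(U\cap W)\le k-\delta$ and $d_S(U,W)\ge2\delta$. This last case, where the $\phi_{\tbv}$-rank restriction must be invoked and the inverse-echelon bookkeeping carried out, is the main obstacle; the remaining cases are either direct citations or the easier front projection. Finally, every cross pair has positive distance, so the three families are pairwise disjoint and the total size equals $|\cC_1|+|\cC_3|+|\tcC_1|$.
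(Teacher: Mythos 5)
Your proof is correct, and for the two cross cases it takes a genuinely different route from the paper. The paper handles $U\in\cC_1$ versus $V\in\tcC_1$ by invoking Lemma \ref{L-S-H}: the identifying vector of any codeword of $\cC_1$ has all $k$ ones in the first $\mu_1$ positions, while that of $V$ has only $\wt(\bv_1)\le k-\delta$ ones there, so the Hamming distance is already $\ge 2\delta$. Symmetrically, for $U\in\cC_3$ it invokes Lemma \ref{L-inv-L-S-H}, after computing that the inverse identifying vector of $V$ has weight $\wt(\obv_2)+\rank(\phi_{\tbv}(\bM))\le k-\delta$ in the last $\mu_2$ positions. You instead bound $\dim(U\cap W)$ directly via the identity $d_S=2k-2\dim(U\cap W)$, using injectivity of the coordinate projection onto the front (resp.\ back) block on $W$ together with a rank bound on $\bU$ restricted to that block. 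The two arguments are computationally parallel --- your $\rank(\bU|_B)=a_2+\rank(\phi_{\tbv}(\bM))$ is exactly the paper's weight computation for $\bar{\bv}_2'$ --- but yours is self-contained (it does not pass through the identifying-vector machinery for the cross terms), at the cost of having to verify the injectivity of $\pi_F|_W$ and $\pi_B|_W$ from the block structure of $\cC_1^{(t)}$ and $\cC_3^{(s)}$, which you do correctly using $t\ge T_1^{\min}$ and $s\ge T_2^{\min}$. The paper's route is shorter because those lemmas absorb the projection argument; your route makes explicit where the hypotheses $\wt(\bv_1)\le k-\delta$ and $\rank(\phi_{\tbv}(\bM))\le\wt(\bv_1)-\delta$ enter, and correctly identifies that each is needed in exactly one of the two cross cases.
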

\begin{proof}
    By Construction \ref{Const-GBMC} , $\tcC_1$ is an $(n,2\delta,\{k\})_q$ CDC. It remains to show that $d_S(U, V) \geq 2\delta$ for any $U \in \cC_1 \cup \cC_3$ and $V \in \tcC_1$. We analyze the distance in the following two cases.

\textbf{Case 1:} $U \in \cC_1$.

    Suppose that the identifying vector of $U$ is $\bu=(\overbrace{\bu_1}^{\mu_1} \mid \overbrace{\bu_3}^{\mu_3} \mid \overbrace{\bu_2}^{\mu_2})$ and the bilateral identifying vector of $V$ is $\tbv=(\overbrace{\bv_1}^{\mu_1} \mid \overbrace{\tbv_3}^{\mu_3} \mid \overbrace{\bar{\bv}_2}^{\mu_2})$. It is clear that $\wt(\bu_1)=k$ and $\wt(\bu_3|\bu_2) = 0$. Assume that $\bv$ is the identifying vector of $V$, then we have $\bv=(\overbrace{\bv_1}^{\mu_1} | \overbrace{\bv_3}^{\mu_3} | \overbrace{\bv_2}^{\mu_2})$ with $\wt(\bv_3|\bv_2)= k - \wt(\bv_1)$. By Lemma \ref{L-S-H}, we obtain that
    \begin{align*}
        d_S(U,V) &\geq d_H(\bu,\bv)\\
        &= d_H(\bu_1,\bv_1)+d_H(\bu_3|\bu_2,\bv_3|\bv_2)\\
        &\geq d_H(\bu_1,\bv_1)+|\wt(\bu_3|\bu_2)-\wt(\bv_3|\bv_2)|\\
        &\geq |\wt(\bu_1)-\wt(\bv_1)| + |(k-\wt(\bu_1))-(k-\wt(\bv_1))| \\& \geq 2 \cdot [\wt(\bu_1)-\wt(\bv_1)]\\&\geq 2\delta,
    \end{align*}
where the last inequality holds due to $\wt(\bv_1) \leq k-\delta$.

    \textbf{Case 2:} $U \in \cC_3$.
    
    Suppose that the inverse identifying vector of $U$ is $\obu=(\overbrace{\obu_1}^{\mu_1}|\overbrace{\obu_3}^{\mu_3}|\overbrace{\obu_2}^{\mu_2})$ and the bilateral identifying vector of $V$ is $\tbv=(\overbrace{\bv_1}^{\mu_1} \mid \overbrace{\tbv_3}^{\mu_3} \mid \overbrace{\obv_2}^{\mu_2})$. Let $\cC_{\tcF_{\tbv}}$ be an $(\tcF_{\tbv}, \delta)_q$ GB-FD code, as defined in Theorem \ref{Our-T-GBPM} above.
    Assume that $\bV$ is the generator matrix of $V$, which is obtained by lifting a matrix $\bM \in \cC_{\tcF_{\tbv}}$. Write $\bV = (\bV_1 | \bV_2)$, where $\bV_2$ is the last $\mu_2$ columns of $\bV$. Hence, we can assume that the inverse identifying vector of $V$ is $\bar{\bv}=(\overbrace{\bar{\bv}_1}^{\mu_1}|\overbrace{\bar{\bv}_3}^{\mu_3}|\overbrace{\obv_2'}^{\mu_2})$ with $\wt(\bar{\bv}_2') =\rank(\bV_2)= \wt(\bar{\bv}_2) + \rank(\phi_{\tbv}(\bM))$. It is obvious that $\wt(\obu_2) = k $ and $\wt(\obu_1|\obu_3)=0$. By Lemma \ref{L-inv-L-S-H}, we have
    \begin{align*}
d_S(U,V) &\geq d_H(\bar{\bu},\bar{\bv})\\&= d_H(\bar{\bu}_1|\bar{\bu}_3,\bar{\bv}_1|\bar{\bv}_3)+d_H(\bar{\bu}_2, \bar{\bv}_2') \\&\geq |(k-\wt(\bar{\bu}_2)) - (k-\wt(\obv_2'))|
+
|\wt(\obv_2')-\wt(\obu_2)| \\&=
2 \cdot |\wt(\obv_2')-\wt(\obu_2)|\\
&=2\cdot|k-\wt(\bar{\bv}_2) - \rank(\phi_{\tbv}(\bM))| \\&\geq 2\delta,
    \end{align*}
    where the last inequality holds due to $\wt(\bv_1) = k- \wt(\bar{\bv}_2) \geq \delta$ and $\rank(\phi_{\tbv}(\bM))$ \\ $ \leq \wt(\bv_1)-\delta$.
    
    In conclusion, $\cC_1 \cup \cC_3 \cup \tcC_1$ is an $(n, |\cC_1|+|\cC_3|+|\tcC_1|, 
 2\delta, \{k\})_q$ CDC. This completes the proof.
\end{proof}
\begin{theorem}\label{SOur-T-GBPM}
    Take the same notation as in Corollary \ref{S-ParMixConstr}. Let $\mu_1=n_1+k-T_1^{\min}$, $\mu_2=k$, and $\mu_3=n-\mu_1-\mu_2$.
    Let $\tcS' \subseteq \BF_2^n$ be a set of bilateral identifying vectors  with  Hamming weight $k$, and minimum Hamming distance $2\delta$. Assume that all bilateral identifying vectors in $\tcS'$ have the same type $(\mu_1, \mu_3, \mu_2)$.
    Suppose that for any bilateral identifying vector $\tbv = (\bv_1 | \tbv_3 |\obv_2)$ in $\tcS'$, it satisfies $\delta \leq \wt(\bv_1) \leq k-\delta$.  If there exists an $(\tcF_{\tbv},\delta)_q$ GB-FD code $\cC_{\tcF_{\tbv}}$ such that $\rank(\phi_{\tbv}(\bM)) \leq \wt(\bv_1)-\delta$  for any $\tbv \in \tcS'$ and $\bM \in \cC_{\tcF_{\tbv}}$, then $\tcC_1' = \cup_{\tbv \in \tcS'} \sL(\cC_{\tcF_{\tbv}})$ is an $(n, 2\delta,\{k\})_q$ CDC. Moreover, $\cC_1 \cup \cC_3' \cup \tcC_1'$ is an $(n, |\cC_1|+|\cC_3'|+|\tcC_1'|, 2\delta, \{k\})_q$ CDC. 
\end{theorem}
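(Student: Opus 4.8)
The plan is to mirror the proof of Theorem \ref{Our-T-GBPM} almost verbatim, exploiting that Corollary \ref{S-ParMixConstr} is precisely the specialization of Theorem \ref{T-ParalMixConstr} obtained by taking $n_3=k$ and $T_2=\{k\}$; under this choice $\mu_2 = n_3+k-T_2^{\min}=k$, which is exactly the size of the last block in the declared type $(\mu_1,\mu_3,k)$. By Construction \ref{Const-GBMC}, $\tcC_1'$ is already an $(n,2\delta,\{k\})_q$ CDC, and by Corollary \ref{S-ParMixConstr}, $\cC_1 \cup \cC_3'$ is an $(n,2\delta,\{k\})_q$ CDC. Hence the only remaining task is to bound the cross-distances, i.e.\ to show $d_S(U,V) \geq 2\delta$ for every $U \in \cC_1 \cup \cC_3'$ and every $V \in \tcC_1'$, which I would split into the two cases $U \in \cC_1$ and $U \in \cC_3'$ exactly as before.

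For Case 1 ($U \in \cC_1$) the argument transfers word-for-word. The identifying vector $\bu$ of any $U \in \cC_1$ places all $k$ of its ones in the first $\mu_1$ coordinates, whereas for $V \in \tcC_1'$ with bilateral identifying vector $\tbv=(\bv_1 \mid \tbv_3 \mid \obv_2)$ the ordinary identifying vector $\bv$ has weight $\wt(\bv_1) \leq k-\delta$ on that block. Lemma \ref{L-S-H} then gives $d_S(U,V) \geq d_H(\bu,\bv) \geq 2[\wt(\bu_1)-\wt(\bv_1)] = 2[k-\wt(\bv_1)] \geq 2\delta$. For Case 2 ($U \in \cC_3'$) the only change is to use the simpler shape of $\cC_3'$: every codeword is $U=\rs(\bQ \mid \bI_k)$, whose last $k$ columns form $\bI_k$, so its reduced row inverse echelon form has pivots precisely in the last $\mu_2=k$ coordinates and its inverse identifying vector $\obu$ satisfies $\wt(\obu_2)=k$ and $\wt(\obu_1 \mid \obu_3)=0$. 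For $V \in \tcC_1'$ lifted from some $\bM \in \cC_{\tcF_{\tbv}}$, writing $\bV_2$ for the last $k$ columns of its generator matrix, the inverse identifying vector of $V$ has weight $\wt(\obv_2')=\rank(\bV_2)=\wt(\obv_2)+\rank(\phi_{\tbv}(\bM))$ on that final block. Lemma \ref{L-inv-L-S-H} and the same collapse of the two Hamming terms as in Theorem \ref{Our-T-GBPM} yield $d_S(U,V) \geq 2\,|k-\wt(\obv_2)-\rank(\phi_{\tbv}(\bM))|$, and then $\wt(\bv_1)=k-\wt(\obv_2) \geq \delta$ together with $\rank(\phi_{\tbv}(\bM)) \leq \wt(\bv_1)-\delta$ forces $\wt(\obv_2)+\rank(\phi_{\tbv}(\bM)) \leq k-\delta$, so the bound is at least $2\delta$. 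Combining both cases shows that $\cC_1 \cup \cC_3' \cup \tcC_1'$ is an $(n,2\delta,\{k\})_q$ CDC of the claimed size.

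Because the whole proof is a transcription of Theorem \ref{Our-T-GBPM} under the degeneration $\cH_3 \to \{\bI_k\}$, I do not expect any genuinely new obstacle. The single step deserving explicit care is confirming that the trailing $\bI_k$ block pins the inverse identifying vector of every codeword of $\cC_3'$ entirely into the last $k$ coordinates; this is exactly what makes $\wt(\obu_2)=k$ hold and lets the Case 2 estimate carry over unchanged.
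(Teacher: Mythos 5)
Your proof is correct and follows exactly the route the paper intends: the paper itself omits this proof, stating only that it is ``similar to that of Theorem \ref{Our-T-GBPM}'', and your transcription carries out that adaptation faithfully, including the one point that genuinely needs checking, namely that the trailing $\bI_k$ block of each codeword of $\cC_3'$ forces its inverse identifying vector to have all $k$ ones in the last $\mu_2=k$ coordinates so that the Case 2 estimate goes through unchanged.
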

\begin{proof}
    The proof is similar to that of Theorem \ref{Our-T-GBPM} and is omitted here.
\end{proof}

Next, we apply Theorems \ref{Our-T-GBPM} and \ref{SOur-T-GBPM} to drive some new lower bounds for CDCs.

\begin{theorem}\label{T-BPM-CDC}
    Let $n$, $n_1$, $n_2$, $k$, and $\delta$ be integers with $n=n_1+n_2,$ $k \geq 2\delta$,  and $ n_i \geq k$ for $i=1,2$. Let $T_1 \subseteq [\delta,k]$ with $l_{T_1} <2 \delta$. Let $T_2 \subseteq [k+\delta-T_1^{\min},k]$ with $l_{T_2} < 2\delta$. Let $n_3$ be a positive integer satisfying $n_1+k-T_1^{\min} \leq n-n_3+T_2^{\min}-k$.  Let $\mu_1=n_1+k-T_1^{\min}$, $\mu_2=n_3+k-T_2^{\min}$, and $\mu_3=n-\mu_1-\mu_2$.  Set $\omega_1 =\begin{cases}
    k-\delta & ~\text{if}~\mu_1 > \mu_2;\\
    \delta & \text{otherwise},
\end{cases}$ and $\omega_2=k-\omega_1$. Let $\theta_1 = \lfloor \frac{\mu_1-\omega_1}{\delta} \rfloor -1$ and $\theta_2= \min\{\lfloor \frac{\mu_2-\omega_2}{\delta} \rfloor, \lfloor \frac{\mu_2+\mu_3-\omega_2}{\delta} \rfloor -1\}$. Let $\Lambda_1^i =\max\{\omega_1, \mu_1-\omega_1-i\delta\}$, 
$\Gamma_1^i =\min\{\omega_1, \mu_1-\omega_1-i\delta\}$ for $0 \leq i \leq \theta_1$, and 
$\Lambda_3^j = \max\{\omega_2, \mu_2+\mu_3-\omega_2-j\delta\}$, $\Gamma_3^j = \min\{\omega_2, \mu_2+\mu_3-\omega_2-j\delta\}$ for $0 \leq j \leq \theta_2$. Set $d^{(i,j)}_1 =\begin{cases}
    \lceil \frac{\delta}{2} 
    \rceil & ~\text{if}~\Lambda_1^i \geq \Lambda_3^j;\\
    \lfloor \frac{\delta}{2} 
    \rfloor & \text{otherwise},
\end{cases}$ and $d^{(i,j)}_2=\delta-d^{(i,j)}_1$ for $0 \leq i \leq \theta_1$, $0 \leq j \leq \theta_2$. If there exists an $(n_1,2\delta,2\delta-l_{T_1},T_1)_q$ MDDC $\cX_1$ and an $(n_3,2\delta,2\delta-l_{T_2},T_2)_q$ MDDC $\cX_3$, then 
\begin{align*}
    A_q(n,2\delta,\{k\}) &\geq \sum_{t\in T_1} \eta_t(\cX_1) \cdot \Delta(k,n_2+t-k,\delta)_q \\& \quad + \sum_{s \in T_2}\eta_s(\cX_3) \cdot \Delta(k,n-n_3+s-k,\delta;k-\delta)_q \\& \quad + \Delta(\omega_1 , n_2-2k+T_1^{\min}+\omega_1, \delta;\omega_1-\delta)_q\\& \quad \cdot \sum_{i=0}^{\theta_1}\sum_{j=0}^{\theta_2}q^{\min\{\Lambda_1^i d^{(i,j)}_2,\Lambda_3^j d^{(i,j)}_1\}+\Lambda_1^i(\Gamma_1^i-\delta+1)+\Lambda_3^j(\Gamma_3^j -\delta+1)}.
\end{align*}
\end{theorem}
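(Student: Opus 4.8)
The plan is to obtain the claimed inequality as a direct instance of Theorem \ref{Our-T-GBPM}, matching the three summands on the right-hand side with $|\cC_1|$, $|\cC_3|$, and $|\tcC_1|$. All the cross-distance work between $\tcC_1$ and $\cC_1\cup\cC_3$ is already done inside Theorem \ref{Our-T-GBPM}, so what remains is to instantiate the ingredient codes and count. First I would fix $\cC_1$ and $\cC_3$ exactly as in Theorem \ref{T-ParalMixConstr}, taking each $\cP_t$ to be an MRD code and each $\cQ_s'$ to be the RRMC obtained from an MRD code. Then $|\cP_t|=\Delta(k,n_2+t-k,\delta)_q$ and $|\cQ_s'|=\Delta(k,n-n_3+s-k,\delta;k-\delta)_q$, so that $|\cC_1|=\sum_{t\in T_1}\eta_t(\cX_1)|\cP_t|$ and $|\cC_3|=\sum_{s\in T_2}\eta_s(\cX_3)|\cQ_s'|$ reproduce the first two sums verbatim.

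The core of the argument is the choice of $\tcS$ and of the GB-FD codes that produce the third term. For each pair $(i,j)$ with $0\le i\le\theta_1$ and $0\le j\le\theta_2$ I would take the bilateral identifying vector $\tbv^{(i,j)}$ of type $(\mu_1,\mu_3,\mu_2)$ whose $\bv_1$ is a single block of $\omega_1$ ones preceded by $i\delta$ zeros inside the first $\mu_1$ coordinates, whose $\obv_2$ is a single block of $\omega_2$ ones followed by $j\delta$ zeros inside the last $\mu_2$ coordinates, and whose middle $\mu_3$ coordinates are zero. Each such vector has weight $\omega_1+\omega_2=k$ and satisfies $\wt(\bv_1)=\omega_1\in[\delta,k-\delta]$ because $\omega_1\in\{\delta,k-\delta\}$ and $k\ge2\delta$. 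Moving a consecutive block of length $\omega$ by a multiple $m\delta\ge\delta$ of $\delta$ flips at least $2\min\{m\delta,\omega\}\ge2\delta$ coordinates, so distinct vectors in $\tcS$ lie at Hamming distance at least $2\delta$; hence $\tcS$ is a constant-weight code of the type demanded by Construction \ref{Const-GBMC}. The ranges $i\le\theta_1$ and $j\le\theta_2$ are precisely what keep the free widths $\mu_1-\omega_1-i\delta$ and $\mu_2+\mu_3-\omega_2-j\delta$ at least $\delta$ and keep the inverse block inside its $\mu_2$ coordinates.

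For a fixed $(i,j)$, reading off the echelon and inverse echelon forms of these blocks shows that $\tcF_{\tbv^{(i,j)}}$ is the three-block generalized bilateral Ferrers diagram of Proposition \ref{P-GBFD-Par} with $k_1=k_2=\omega_1$, $l_1=\mu_1-\omega_1-i\delta$, $l_2=\mu_2+\mu_3-\omega_2$, $k_3=\omega_2$, and $l_3=\mu_2+\mu_3-\omega_2-j\delta$; then $\Lambda_1^i,\Gamma_1^i,\Lambda_3^j,\Gamma_3^j$ and $d^{(i,j)}_1,d^{(i,j)}_2$ are exactly the quantities named in the statement. Applying Proposition \ref{P-GBFD-Par} with rank bound $r=\omega_1-\delta$ on the top-right block $\sigma(\bM)_{\omega_1,\,\mu_2+\mu_3-\omega_2}$ yields a GB-FD code $\cC_{\tcF_{\tbv^{(i,j)}}}$ of size
\[
\rho^{(i,j)}=q^{\min\{\Lambda_1^i d^{(i,j)}_2,\Lambda_3^j d^{(i,j)}_1\}+\Lambda_1^i(\Gamma_1^i-\delta+1)+\Lambda_3^j(\Gamma_3^j -\delta+1)}\cdot\Delta(\omega_1,\mu_2+\mu_3-\omega_2,\delta;\omega_1-\delta)_q .
\]
Since $\phi_{\tbv}(\bM)$ is the rightmost $\omega_1\times(\mu_2-\omega_2)$ column-submatrix of $\sigma(\bM)_{\omega_1,\,\mu_2+\mu_3-\omega_2}$, the bound gives $\rank(\phi_{\tbv}(\bM))\le\omega_1-\delta=\wt(\bv_1)-\delta$, which is the hypothesis of Theorem \ref{Our-T-GBPM}. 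Summing $\rho^{(i,j)}$ over the $(\theta_1+1)(\theta_2+1)$ pairs and factoring out the common $\Delta(\omega_1,\mu_2+\mu_3-\omega_2,\delta;\omega_1-\delta)_q$ gives the third summand after the identity $\mu_2+\mu_3-\omega_2=n_2-2k+T_1^{\min}+\omega_1$. Theorem \ref{Our-T-GBPM} then guarantees that $\cC_1\cup\cC_3\cup\tcC_1$ is an $(n,2\delta,\{k\})_q$ CDC of the summed cardinality, which is the asserted lower bound.

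The step I expect to be most delicate is the identification in the third paragraph: verifying that the consecutive-block vectors $\tbv^{(i,j)}$ really do yield the full three-block diagram of Proposition \ref{P-GBFD-Par}, that is, that after deleting the pivot columns of $EF(\bv_1)$ and $\overline{EF}(\obv_2)$ the surviving dots assemble into full rectangles of the stated sizes with an empty bottom-left corner, and that the two competing quantities inside $\theta_2=\min\{\cdots\}$ are exactly the fit-inside-$\mu_2$ condition and the $l_3\ge\delta$ condition. Once this structural matching is in place, the Hamming-distance verification and the cardinality bookkeeping are routine.
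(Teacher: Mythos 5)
Your proposal is correct and follows essentially the same route as the paper: the same consecutive-block bilateral identifying vectors $\tbv_{i,j}$, the same identification of $\tcF_{\tbv_{i,j}}$ with the three-block diagram of Proposition \ref{P-GBFD-Par} under the rank restriction $r=\omega_1-\delta$, and the same appeal to Theorem \ref{Our-T-GBPM} for the union. Your explicit remarks on why the shifted blocks give Hamming distance $\geq 2\delta$, why the two terms in $\theta_2$ arise, and why the bound on $\rank(\sigma(\bM)_{\omega_1,\mu_2+\mu_3-\omega_2})$ transfers to $\phi_{\tbv}(\bM)$ are details the paper leaves implicit, and they check out.
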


\begin{proof}
Set $\cC_1$ and $\cC_3$ to be the same as those in Theorem \ref{T-ParalMixConstr}. Then $|\cC_1 \cup \cC_3|=\sum_{t\in T_1} \eta_t(\cX_1)\Delta(k,n_2+t-k,\delta)_q + \sum_{s \in T_2}\eta_s(\cX_3) \Delta(k,n-n_3+s-k,\delta;k-\delta)_q$.

We choose a bilateral identifying vectors set $\tcS$ as follows:
  \begin{align*}  
   \tcS=\{&\tbv_{i,j}=(\underbrace{\overbrace{0\dots0}^{i\delta}\overbrace{1\dots1}^{\omega_1}0\dots0}_{\mu_1} \underbrace{\tilde{0}\dots\tilde{0}}_{\mu_3}
   \underbrace{\bar{0}\dots\bar{0} \overbrace{\bar{1}\dots\bar{1}}^{\omega_2} \overbrace{\bar{0}\dots\bar{0}}^{j\delta}}_{\mu_2}):\\& 0 \leq i \leq \theta_1, 0 \leq j \leq \theta_2\}.
  \end{align*}
It is obvious that the Hamming distance between any two distinct bilateral identifying vectors in $\tcS$ is at least $2\delta$. For any bilateral identifying vector $\tbv \in \cS$, we write $\tbv=(\overbrace{\bv_1}^{\mu_1}| \overbrace{\tbv_3}^{\mu_3}| \overbrace{\obv_2}^{\mu_2})$. Since $\wt(\bv_1) = \omega_1$ and $k \geq 2\delta$,  it is clear that $\delta \leq \wt(\bv_1) \leq k-\delta$.
For each bilateral identifying vector $\tbv_{i,j}$, where $0 \leq i \leq \theta_1$ and $0 \leq j \leq \theta_2$, its generalized bilateral echelon Ferrers form  is 
\begin{align*}
    \widetilde{\EF}(\tbv_{i,j}
)=
\left(
\begin{array}{ccc|c|ccc}
 \bO & \bI_{\omega_1} & \cF_1^i & \cF_2 & \cF_3^j &  \bO &\cF_4^j\\
 \bO & \bO & \bO  & \cF_5 & \bar{\cF}_6^j & \bar{\bI}_{\omega_2} & \bO
\end{array}
\right),
\end{align*}
and the generalized bilateral Ferrers diagram of $\tbv_{i,j}$ is
\begin{align*}
\tcF_{\tbv_{i,j}}=
\begin{array}{c|ccc}
        \cF_1^i & \cF_2 & \cF_3^j & \cF_4^j\\
        \hline
       \ &  \cF_5 & \bar{\cF}_6^j & \ 
\end{array},
\end{align*}
where $\cF_1^i$ is an $\omega_1 \times (\mu_1-\omega_1-i\delta)$ full Ferrers diagram, $\cF_2| \cF_3^j | \cF_4^j$ is an $\omega_1 \times (\mu_2+\mu_3-\omega_2)$ full Ferrers diagram, and $\cF_5|\bar{\cF}_6^j$ is an $\omega_2 \times (\mu_2+\mu_3-\omega_2 -j\delta)$ full Ferrers diagram.

By Proposition \ref{P-GBFD-Par}, for $0 \leq i \leq \theta_1$ and $0 \leq j \leq \theta_2$, we can construct an $(\tcF_{\tbv_{i,j}}, \rho_{i,j}, \delta)_q$ GB-FD code $\cC_{\tcF_{\tbv_{i,j}}}$ such that for any codeword $\bM \in \cC_{\tcF_{\tbv_{i,j}}}$, $\rank(\sigma(\bM)_{\omega_1, \mu_2+\mu_3-\omega_2}) \leq \omega_1-\delta$, where
\begin{align*}
\rho_{i,j}&=\min\{q^{\Lambda_1^i d^{(i,j)}_2}, q^{\Lambda_3^j d^{(i,j)}_1}\}\cdot q^{\Lambda_1^i(\Gamma_1^i-\delta+1)+\Lambda_3^j(\Gamma_3^j -\delta+1)} \\& \quad \cdot \Delta(\omega_1,\mu_2 +\mu_3-\omega_2, \delta;\omega_1-\delta)_q.
\end{align*}
Let $\tcC_1= \cup_{i=0}^{\theta_1} \cup_{j=0}^{\theta_2} \sL(\cC_{\tcF_{\tbv_{i,j}}})$ with cardinality $\sum_{i=0}^{\theta_1} \sum_{j=0}^{\theta_2} \rho_{i,j}$. By Theorem \ref{Our-T-GBPM}, $\cC_1 \cup \cC_3 \cup \tcC_1$ is an $(n,|\cC_1|+|\cC_3|+|\tcC_1|,2\delta, \{k\})_q$ CDC.
\end{proof}

 Below we give an example to illustrate Theorem \ref{T-BPM-CDC}.

\begin{example}
We adopt the notation in Theorem \ref{T-BPM-CDC}.
     Set $q=2$, $n=18$, $n_1=8$, $n_2=10$, $n_3=8$, $k=4$, $\delta=2$, $T_1=\{4,3\}$, and $T_2=\{4,3\}$. It is easy to check that $n_1+k-T_1^{\min} \leq n-n_3+T_2^{\min}-k$. If we take the $(8,4,3,\{4,3\})_2$ MDDC from Example \ref{E-Alg}, then in the proof of Theorem \ref{T-BPM-CDC}, we have
    \begin{align*}
        |\cC_1 \cup \cC_3|&=4801 \cdot 2^{30}+ 327 \cdot 2^{27}+4801 \cdot \Delta(4,10,2;2)_2 + 327 \cdot \Delta(4,9,2;2)_2\\
        &=5199101447408.
    \end{align*}
    Set $\mu_1=9$, $\mu_2=9$, $\mu_3=0$, $\omega_1=2$, $\omega_2=2$, and $\theta_1=\theta_2=2$. Let $\Lambda_1^i= 7-2i$, $\Lambda_3^j=7-2j$, $\Gamma_1^i=\Gamma_3^j =2$, and $d_1^{(i,j)}=d_2^{(i,j)}=1$ for $0 \leq i \leq 2$, $0 \leq j \leq 2$. 
Then in the proof of Theorem \ref{T-BPM-CDC}, we obtain that
\begin{align*}
    |\tcC_1|&=\Delta(2,7,2;0)_2 \cdot \sum_{i=0}^2\sum_{j=0}^2  2^{\min\{7-2i,7-2j\}+(7-2i)+(7-2j)}\\&=2^{21}+2\cdot 2^{17}+2^{15}+2\cdot2^{13}+2\cdot2^{11}+2^9\\&=2413056.
\end{align*}
As a result, we have
\begin{align*}
    A_2(18,4,\{4\}) &\geq |\cC_1 
    \cup \cC_3|+ |\tcC_1|=
5199101447408+ 2413056\\&
=5199103860464,
\end{align*}
which is better than the previously best-known lower bound $5199101447408$ in \cite{ParllelMixed}.
\end{example}

\begin{theorem}\label{T-SPar-CDC}
Let $n, n_1, n_2, k,$ and $\delta$ be integers with $n=n_1+n_2$, $n_1 \geq k$, $n_2\geq k$, 
 and $k \geq 2\delta \geq 4$. Let $T_1 \subseteq [\delta,k]$ with $l_{T_1} < 2\delta$. Assume that $n_1+k-T_1^{\min} \leq n-k$. Let $\theta_1=\lfloor\frac{n_1-T_1^{\min}}{\delta} \rfloor$ and $\theta_2 = \min\{\lfloor \frac{k}{\delta} \rfloor-1, \lfloor \frac{n_2-k+T_1^{\min}}{\delta} \rfloor-2\}$.  Let $\Lambda_1^i= \max\{k-\delta, n_1-T_1^{\min} +(1-i)\delta \}$, $\Gamma^i=\min\{k-\delta,n_1-T_1^{\min}+(1-i)\delta\}$ for $0 \leq i \leq \theta_1$, and  $\Lambda_3^j=n_2-k+T_1^{\min}-(j+1)\delta$ for $0 \leq j \leq \theta_2$. Set $d^{(i,j)}_1 =\begin{cases}
    \lceil \frac{\delta}{2} 
    \rceil & ~\text{if}~\Lambda_1^i \geq \Lambda_3^j;\\
    \lfloor \frac{\delta}{2} 
    \rfloor & \text{otherwise},
\end{cases}$ and $d^{(i,j)}_2=\delta-d^{(i,j)}_1$ for $0 \leq i \leq \theta_1, 0 \leq j \leq \theta_2$. If there exists an $(n_1, 2\delta, 2\delta-l_{T_1}, T_1)_q$ MDDC $\cX_1$, then
 \begin{align*}
     A_q(n,2\delta,\{k\}) &\geq \sum_{t\in T_1}\eta_t(\cX_1) \cdot \Delta(k,n_2+t-k, \delta)_q+\Delta(k,n-k, \delta; k-\delta)_q\\& \quad + \Delta(k-\delta, n_2-k+T_1^{\min}-\delta, \delta; k-2\delta)_q \\ &\quad \cdot \sum_{i=0}^{\theta_1} \sum_{j=0}^{\theta_2}q^{\min\{\Lambda_1^i d^{(i,j)}_2,\Lambda_3^j d^{(i,j)}_1\}+ \Lambda_1^i(\Gamma^i-\delta+1)+\Lambda_3^j}.
 \end{align*}
\end{theorem}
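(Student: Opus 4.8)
The plan is to realize Theorem~\ref{T-SPar-CDC} as a concrete instance of Theorem~\ref{SOur-T-GBPM}, exactly paralleling the proof of Theorem~\ref{T-BPM-CDC} but with the simpler parallel piece of Corollary~\ref{S-ParMixConstr} and the fixed choice $\omega_1=k-\delta$, $\omega_2=\delta$. First I would take $\cC_1$ and $\cC_3'$ as in Corollary~\ref{S-ParMixConstr}, with $\cP_t$ an optimal $(k\times(n_2+t-k),\delta)_q$ MRD code and $\cQ$ an optimal $(k\times(n-k),\delta;k-\delta)_q$ RRMC. Since distinct $\bQ\in\cQ$ give distinct rowspaces of $(\bQ\mid\bI_k)$, this contributes $|\cC_1\cup\cC_3'|=\sum_{t\in T_1}\eta_t(\cX_1)\,\Delta(k,n_2+t-k,\delta)_q+\Delta(k,n-k,\delta;k-\delta)_q$, the first two summands. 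With the stated $\mu_1=n_1+k-T_1^{\min}$, $\mu_2=k$, we get $\mu_3=n_2-2k+T_1^{\min}\ge 0$, the last inequality being exactly the hypothesis $n_1+k-T_1^{\min}\le n-k$ of Corollary~\ref{S-ParMixConstr}.

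Next I would exhibit the explicit family of bilateral identifying vectors
\[
\tcS'=\Big\{\tbv_{i,j}=\big(\underbrace{\overbrace{0\cdots0}^{i\delta}\overbrace{1\cdots1}^{k-\delta}0\cdots0}_{\mu_1}\ \underbrace{\tilde0\cdots\tilde0}_{\mu_3}\ \underbrace{\bar0\cdots\bar0\,\overbrace{\bar1\cdots\bar1}^{\delta}\,\overbrace{\bar0\cdots\bar0}^{j\delta}}_{\mu_2}\big):0\le i\le\theta_1,\ 0\le j\le\theta_2\Big\},
\]
i.e.\ a run of $k-\delta$ ones pushed right by $i\delta$ inside the first block and a run of $\delta$ bar-ones pushed left by $j\delta$ inside the last block. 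Each $\tbv_{i,j}$ has weight $k$, type $(\mu_1,\mu_3,\mu_2)$, and $\wt(\bv_1)=k-\delta$, so $\delta\le\wt(\bv_1)\le k-\delta$ holds because $k\ge 2\delta$. For the minimum distance, distinct indices force a shift of the left run (length $k-\delta\ge\delta$) or of the right run (length $\delta$) by a positive multiple of $\delta$; a length-$L$ run shifted by $m\delta$ changes the support by $2\min\{m\delta,L\}\ge 2\delta$, so $d_H(\tbv_{i,j},\tbv_{i',j'})\ge 2\delta$ whenever $(i,j)\ne(i',j')$.

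Then I would build the GB-FD codes via Proposition~\ref{P-GBFD-Par}. As in the proof of Theorem~\ref{T-BPM-CDC}, for each $\tbv_{i,j}$ the diagram $\tcF_{\tbv_{i,j}}$ regroups into the two-block-row shape of Proposition~\ref{P-GBFD-Par} with $k_1=k_2=k-\delta$, $k_3=\delta$, $l_1=n_1-T_1^{\min}+(1-i)\delta$, $l_2=n_2-k+T_1^{\min}-\delta$, and $l_3=\Lambda_3^j$. The index ranges are precisely the validity thresholds: $l_1\ge\delta\Leftrightarrow i\le\theta_1$, $l_3\ge\delta\Leftrightarrow j\le\lfloor(n_2-k+T_1^{\min})/\delta\rfloor-2$, and $j\le\lfloor k/\delta\rfloor-1$ is what keeps the $\delta$ bar-ones inside the last block; meanwhile $k_2\ge k_1$ and $l_2\ge l_3$ are immediate. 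Here $\Lambda_1^i=\max\{k-\delta,\,n_1-T_1^{\min}+(1-i)\delta\}$ and $\Gamma^i$ is the corresponding minimum, while $\Lambda_3^j=\max\{\delta,\Lambda_3^j\}=n_2-k+T_1^{\min}-(j+1)\delta$ since $\Lambda_3^j\ge\delta$, whence $\min\{k_3,l_3\}=\delta$. Applying Proposition~\ref{P-GBFD-Par} with $r=\omega_1-\delta=k-2\delta$ gives a code of size
\[
\rho_{i,j}=q^{\min\{\Lambda_1^i d^{(i,j)}_2,\ \Lambda_3^j d^{(i,j)}_1\}+\Lambda_1^i(\Gamma^i-\delta+1)+\Lambda_3^j}\cdot\Delta(k-\delta,\,n_2-k+T_1^{\min}-\delta,\,\delta;\,k-2\delta)_q,
\]
the final exponent term being $\Lambda_3^j(\delta-\delta+1)=\Lambda_3^j$ because the bottom block has height $\delta$.

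Finally I would invoke Theorem~\ref{SOur-T-GBPM}. The one point needing care is the transfer of the rank restriction: Proposition~\ref{P-GBFD-Par} controls $\rank\big(\sigma(\bM)_{k-\delta,\,n_2-k+T_1^{\min}-\delta}\big)\le k-2\delta$, whereas Theorem~\ref{SOur-T-GBPM} asks for $\rank(\phi_{\tbv}(\bM))\le\wt(\bv_1)-\delta=k-2\delta$. Now $\phi_{\tbv}(\bM)$ is the $(k-\delta)\times(\mu_2-\delta)=(k-\delta)\times(k-\delta)$ upper-right block, and since $k-\delta\le n_2-k+T_1^{\min}-\delta$ (equivalent to $\mu_3\ge 0$), its columns form a subset of those of the wider block governed by $\sigma$; hence it is a submatrix and its rank is no larger, so the hypothesis of Theorem~\ref{SOur-T-GBPM} holds. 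Summing $\rho_{i,j}$ over $0\le i\le\theta_1$, $0\le j\le\theta_2$ yields exactly the third summand $|\tcC_1'|$, and Theorem~\ref{SOur-T-GBPM} guarantees that $\cC_1\cup\cC_3'\cup\tcC_1'$ is an $(n,2\delta,\{k\})_q$ CDC of size $|\cC_1|+|\cC_3'|+|\tcC_1'|$, which is the claimed bound. I expect this rank-transfer step, together with verifying that $\theta_1,\theta_2$ are precisely the thresholds securing $k_i,l_i\ge\delta$, to be the main obstacle; the distance verification and the bookkeeping of exponents are routine.
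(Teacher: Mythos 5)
Your proposal is correct and follows essentially the same route as the paper: take $\cC_1\cup\cC_3'$ from Corollary \ref{S-ParMixConstr}, use exactly the same family $\tcS'$ of shifted bilateral identifying vectors, build the GB-FD codes via Proposition \ref{P-GBFD-Par} with $k_1=k_2=k-\delta$, $k_3=\delta$, and conclude by Theorem \ref{SOur-T-GBPM}. The extra checks you supply (the minimum-distance count for $\tcS'$, the thresholds $\theta_1,\theta_2$ as the conditions $l_1,l_3\geq\delta$, and the transfer of the rank bound from $\sigma(\bM)_{k-\delta,\,n_2-k+T_1^{\min}-\delta}$ to its submatrix $\phi_{\tbv}(\bM)$) are details the paper leaves implicit, and they are all verified correctly.
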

\begin{proof}
Let $\cC_1$ and $\cC_3'$ be constructed from Corollary \ref{S-ParMixConstr}. Then $|\cC_1 \cup \cC_3'|= \sum_{t\in T_1}\eta_t(\cX_1)\Delta(k,n_2+t-k, \delta)_q+\Delta(k,n-k, \delta; k-\delta)_q$.

    We choose a set of bilateral identifying vectors $\tcS'$ as follows:
  \begin{align*}  
   \tcS'=\{&\tbv_{i,j}=(\underbrace{\overbrace{0\dots0}^{i\delta}\overbrace{1\dots1}^{k-\delta}0\dots0}_{n_1+k-T_1^{\min}} \underbrace{\tilde{0}\dots\tilde{0}}_{\mu}
   \underbrace{\bar{0}\dots\bar{0} \overbrace{\bar{1}\dots\bar{1}}^{\delta} \overbrace{\bar{0}\dots\bar{0}}^{j\delta}}_{k}):\\& 0 \leq i \leq \theta_1, 0 \leq j \leq \theta_2\},
  \end{align*}
  where $\mu=n_2-2k+T_1^{\min}$.
  Then for $0 \leq i \leq \theta_1$ and $0 \leq j \leq \theta_2$, the generalized bilateral echelon Ferrers form $\widetilde{\EF}(\tbv_{i,j})$ of the bilateral identifying vector $\tbv_{i,j}$ is
\begin{align*}
    \widetilde{\EF}(\tbv_{i,j}
)=
\left(
\begin{array}{ccc|c|ccc}
 \bO & \bI_{k-\delta} & \cF_1^i & \cF_2 & \cF_3^j &  \bO &\cF_4^j\\
 \bO & \bO & \bO  & \cF_5 & \bar{\cF}_6^j & \bar{\bI}_{\delta} & \bO
\end{array}
\right),
\end{align*}
and the generalized bilateral Ferrers diagram of $\tbv_{i,j}$ is
\begin{align*}
\tcF_{\tbv_{i,j}}=
\begin{array}{c|ccc}
        \cF_1^i & \cF_2 & \cF_3^j & \cF_4^j\\
        \hline
       \ &  \cF_5 & \bar{\cF}_6^j & \ 
\end{array},
\end{align*}
where $\cF_1^i$ is a $(k-\delta) \times (n_1-T_1^{\min}-(i-1)\delta)$ full Ferrers diagram, $\cF_2|\cF_3^j|\cF_4^j$ is a $(k-\delta) \times (n_2-k+T_1^{\min}-\delta)$ full Ferrers diagram, and $\cF_5|\bar{\cF}_6^j$ is a $\delta \times (n_2-k+T_1^{\min} -(1+j)\delta)$ full Ferrers diagram.

By Proposition \ref{P-GBFD-Par}, for $0 \leq i \leq \theta_1$ and $0 \leq j \leq \theta_2$, we can construct an $(\tcF_{\tbv_{i,j}}, \rho_{i,j}, \delta)_q$ GB-FD code $\cC_{\tcF_{\tbv_{i,j}}}$ such that for any codeword $\bM \in \cC_{\tcF_{\tbv_{i,j}}}$, $\rank(\sigma(\bM)_{k-\delta, n_2-k+T_1^{\min}-\delta}) \leq k-2\delta$, where
\begin{align*}
\rho_{i,j}&=\min\{q^{\Lambda_1^i d^{(i,j)}_2}, q^{\Lambda_3^j d^{(i,j)}_1}\}\cdot q^{\Lambda_1^i(\Gamma^i-\delta+1)+\Lambda_3^j} \\&\quad \cdot \Delta(k-\delta, n_2-k+T_1^{\min}-\delta, \delta;k-2\delta)_q.
\end{align*}
Let $\tcC_1'= \cup_{i=0}^{\theta_1} \cup_{j=0}^{\theta_2} \sL(\cC_{\tcF_{\tbv_{i,j}}})$ with cardinality $\sum_{i=0}^{\theta_1} \sum_{j=0}^{\theta_2} \rho_{i,j}$. By Theorem \ref{SOur-T-GBPM}, $\cC_1 \cup \cC_3' \cup \tcC_1'$ is an $(n,|\cC_1| + |\cC_3'|+ |\tcC_1'|, 2\delta, \{k\})_q$ CDC.
\end{proof}

 Below we give an example to illustrate Theorem \ref{T-SPar-CDC}.

\begin{example}
We adopt the notation in Theorem \ref{T-SPar-CDC}.
Set $q=2$, $n=15$, $n_1=8$, $n_2=7$, $k=4$, $\delta=2$, and $T_1=\{4,3\}$. It is easy to verify that $n_1+k-T_1^{\min} \leq n-k$. Set  $\theta_1=2$ and $\theta_2=1$.
Set   $\Lambda_1^i=7-2i$, $\Gamma^i=2$ for $0 \leq i \leq 2$, and $\Lambda_3^j=4-2j$ for $0 \leq j \leq 1$. Set $d^{(i,j)}_1=d^{(i,j)}_2=1$ for $0 \leq i \leq 2$ and $0 \leq j \leq 1$. If we take the $(8,4,3,\{4,3\})_2$ MDDC from Example \ref{E-Alg}, then $ |\cC_1 \cup \cC_3'| = 10154219486$ and  $ |\tcC_1'| =44672$ in the proof of Theorem \ref{T-SPar-CDC}. As a result, we have
    \begin{align*}
        A_2(15,4,\{4\}) \geq |\cC_1 \cup \cC_3'|+ |\tcC_1'|=10154219486+44672=10154264158,
    \end{align*}
     which is better than the previously best-known lower bound $10154219486$ in \cite{ParllelMixed}.
\end{example}

\section{New lower bounds for constant dimension codes}\label{Sec-Result}

In this section, we compare our lower
bounds for CDCs derived from Theorems \ref{T-BPM-CDC} and \ref{T-SPar-CDC} with the best-known lower bounds for CDCs in \cite{ParllelMixed}.

\begin{corollary}\label{2-12}
Let $h$ be a positive integer. Then
\begin{align*}
    A_2(12+h,4,\{4\}) &\geq 4801\cdot 2^{12+3h} + 327 \cdot 2^{9+3h} \\& \quad +(1+(2^{(8+h)}-1)\cdot 35)\\& \quad + \sum_{i=0}^2 \sum_{j=0}^{\min\{1,\lfloor \frac{h-1}{2} \rfloor\}}2^{\min\{7-2i,h+1-2j\}+8+h-2i-2j}.
\end{align*} 
\end{corollary}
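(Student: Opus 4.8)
The plan is to derive Corollary \ref{2-12} as a direct specialization of Theorem \ref{T-SPar-CDC}, selecting parameters so that the length becomes $n=12+h$ while keeping $k=4$, $\delta=2$, and the distance $2\delta=4$ fixed. I would set $q=2$, $k=4$, $\delta=2$, $T_1=\{4,3\}$ so that $T_1^{\min}=3$ and $l_{T_1}=1<2\delta=4$, and I would take $n_1=8$ together with $n_2=4+h$ so that $n=n_1+n_2=12+h$. One checks the hypotheses of Theorem \ref{T-SPar-CDC}: $n_1=8\ge k=4$, $n_2=4+h\ge k=4$, $k=4\ge 2\delta=4$, and the requirement $n_1+k-T_1^{\min}=8+4-3=9\le n-k=8+h$, i.e. $h\ge 1$, which is exactly the standing assumption that $h$ is a positive integer. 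For the input MDDC I would use the $(8,4,3,\{4,3\})_2$ MDDC from Example \ref{E-Alg}, so that $\cX_1$ has dimension distribution $\eta_4(\cX_1)=4801$ and $\eta_3(\cX_1)=327$.

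With these choices the three summands of the bound in Theorem \ref{T-SPar-CDC} become explicit. The first term $\sum_{t\in T_1}\eta_t(\cX_1)\cdot\Delta(k,n_2+t-k,\delta)_2$ splits as $4801\cdot\Delta(4,4+h,2)_2+327\cdot\Delta(4,3+h,2)_2$; using $\Delta(m,n,d)_q=q^{\max\{m,n\}(\min\{m,n\}-d+1)}$ with $\max=4+h$ (resp.\ $3+h$), $\min=4$, $d=2$ gives $\Delta(4,4+h,2)_2=2^{(4+h)\cdot 3}=2^{12+3h}$ and $\Delta(4,3+h,2)_2=2^{(3+h)\cdot3}=2^{9+3h}$, matching $4801\cdot 2^{12+3h}+327\cdot 2^{9+3h}$. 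The second term $\Delta(k,n-k,\delta;k-\delta)_2=\Delta(4,8+h,2;2)_2$ should equal the rank-restricted count $1+(2^{8+h}-1)\cdot 35$; I would verify this via $\Delta(m,n,d;r)_q=1+\sum_{i=d}^r D(m,n,d,i)_q$, here with $d=r=2$, so only the single Delsarte term $D(4,8+h,2,2)_2$ survives, and a short computation of the Gaussian binomials in Theorem \ref{T-rank-distribution} gives $\begin{bmatrix}4\\2\end{bmatrix}_2(q^{(8+h)\cdot1}-1)=35\,(2^{8+h}-1)$, as claimed.

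For the third (double-sum) term I would instantiate the auxiliary quantities from Theorem \ref{T-SPar-CDC}. With $n_1=8$, $T_1^{\min}=3$, $\delta=2$ we get $\theta_1=\lfloor(n_1-T_1^{\min})/\delta\rfloor=\lfloor 5/2\rfloor=2$, and $\theta_2=\min\{\lfloor k/\delta\rfloor-1,\ \lfloor(n_2-k+T_1^{\min})/\delta\rfloor-2\}=\min\{1,\ \lfloor(3+h)/2\rfloor-2\}=\min\{1,\lfloor(h-1)/2\rfloor\}$, matching the upper limit in the corollary. Likewise $\Lambda_1^i=\max\{k-\delta,\ n_1-T_1^{\min}+(1-i)\delta\}=\max\{2,\,7-2i\}=7-2i$ for $0\le i\le 2$, with $\Gamma^i=\min\{2,7-2i\}=2$, and $\Lambda_3^j=n_2-k+T_1^{\min}-(j+1)\delta=(4+h)-4+3-(j+1)\cdot2=h+1-2j$. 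Since $\delta=2$ forces $d_1^{(i,j)}=d_2^{(i,j)}=1$, the exponent $\min\{\Lambda_1^i d_2^{(i,j)},\Lambda_3^j d_1^{(i,j)}\}+\Lambda_1^i(\Gamma^i-\delta+1)+\Lambda_3^j$ collapses to $\min\{7-2i,\,h+1-2j\}+(7-2i)\cdot 1+(h+1-2j)$, which is exactly $\min\{7-2i,h+1-2j\}+8+h-2i-2j$; and the prefactor $\Delta(k-\delta,n_2-k+T_1^{\min}-\delta,\delta;k-2\delta)_2=\Delta(2,1+h,2;0)_2=1$ since $r=k-2\delta=0$ forces the trivial RRMC, so the prefactor disappears. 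The main obstacle I anticipate is purely bookkeeping rather than conceptual: one must carefully confirm that $\Lambda_3^j$ stays nonnegative and that the $\Delta$-formula is being applied with the correct $\max/\min$ orientation for every summand (particularly checking $\max\{2,7-2i\}=7-2i$ on the full range $0\le i\le 2$ and that the rank-restriction $k-2\delta=0$ genuinely trivializes the prefactor), after which the three assembled terms reproduce the stated inequality verbatim.
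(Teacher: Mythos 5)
Your proposal is correct and follows exactly the paper's own proof: both specialize Theorem \ref{T-SPar-CDC} with $q=2$, $n_1=8$, $n_2=4+h$, $k=4$, $\delta=2$, $T_1=\{4,3\}$ and the $(8,4,3,\{4,3\})_2$ MDDC of Example \ref{E-Alg}, yielding the same $\theta_1,\theta_2,\Lambda_1^i,\Lambda_3^j,\Gamma^i,d^{(i,j)}_1,d^{(i,j)}_2$. Your explicit evaluation of the $\Delta$-terms and the trivialization of the prefactor $\Delta(2,1+h,2;0)_2=1$ is just the bookkeeping the paper leaves implicit.
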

\begin{proof}
    Let $q=2$, $n=12+h$, $n_1=8$, $n_2=4+h$, $k=4$, $\delta=2$, and $T_1=\{4,3\}$. Obviously, $n_1+k-T_1^{\min} \leq n-k$. Set $\theta_1=2$ and $\theta_2=\min\{1, \lfloor \frac{h-1}{2} \rfloor\}$. Let $\Lambda_1^i =7-2i$, $\Lambda_3^j= 1+h-2j$, $\Gamma^i=2$, and $d_1^{(i,j)}=d_2^{(i,j)}=1$ for $0 \leq i \leq \theta_1$ and $0 \leq j \leq  \theta_2$. If we take the $(8,4,3,\{4,3\})_2$ MDDC from Example \ref{E-Alg}, then by Theorem \ref{T-SPar-CDC}, the conclusion follows.
\end{proof}

\begin{corollary}\label{2-18}
    Let $h$ be an integer with $h \geq 0$. Then
    \begin{align*}
        A_2(18+h, 4, \{4\}) &\geq 4801 \cdot 2^{30+3h}+327 \cdot 2^{27+3h} \\&\quad +4801 \cdot (1+35(2^{10+h}-1) ) +327\cdot(1+35(2^{9+h}-1))\\&\quad +  \sum_{i=0}^2 \sum_{j=0}^{\min\{3,\lfloor \frac{5+h}{2} \rfloor \}}  2^{\min\{7-2i, 7-2j+h\}+14+h-2i-2j}.
    \end{align*}
\end{corollary}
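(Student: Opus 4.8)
The plan is to derive Corollary \ref{2-18} as a direct specialization of Theorem \ref{T-BPM-CDC}. First I would fix the parameters $q=2$, $n=18+h$, $n_1=8$, $n_2=10+h$, $n_3=8$, $k=4$, $\delta=2$, and $T_1=T_2=\{4,3\}$, taking the $(8,4,3,\{4,3\})_2$ MDDC of Example \ref{E-Alg} (with $\eta_4=4801$ and $\eta_3=327$) for both $\cX_1$ and $\cX_3$. The hypotheses of Theorem \ref{T-BPM-CDC} are then routine to check: $k\geq 2\delta$ holds with equality, $n_i\geq k$ is clear, both $T_1\subseteq[\delta,k]=[2,4]$ and $T_2\subseteq[k+\delta-T_1^{\min},k]=[3,4]$ have gap $l_{T_i}=1<2\delta$, and the crucial inequality $n_1+k-T_1^{\min}\leq n-n_3+T_2^{\min}-k$ reads $9\leq 9+h$, which is exactly the standing assumption $h\geq 0$.

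Next I would compute the derived quantities. Here $\mu_1=n_1+k-T_1^{\min}=9$, $\mu_2=n_3+k-T_2^{\min}=9$, and $\mu_3=n-\mu_1-\mu_2=h$. Since $\mu_1=\mu_2$ we fall into the ``otherwise'' branch, so $\omega_1=\delta=2$ and $\omega_2=k-\omega_1=2$; this also confirms the required $\delta\leq\wt(\bv_1)=\omega_1\leq k-\delta$. A short floor computation gives $\theta_1=\lfloor(\mu_1-\omega_1)/\delta\rfloor-1=2$ and $\theta_2=\min\{\lfloor 7/2\rfloor,\lfloor(7+h)/2\rfloor-1\}=\min\{3,\lfloor(5+h)/2\rfloor\}$, where the parity identity $\lfloor(7+h)/2\rfloor-1=\lfloor(5+h)/2\rfloor$ rewrites $\theta_2$ into the form in the statement. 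One then reads off $\Lambda_1^i=\max\{2,7-2i\}=7-2i$ and $\Gamma_1^i=2$ for $0\leq i\leq 2$, and, using that $j\leq\theta_2$ forces $7+h-2j\geq 2$, also $\Lambda_3^j=7-2j+h$ and $\Gamma_3^j=2$; finally $\delta=2$ makes $d_1^{(i,j)}=d_2^{(i,j)}=1$ throughout.

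With these values substituted, each of the three summands of the Theorem \ref{T-BPM-CDC} bound collapses to the matching line of the corollary. For the first sum, $\Delta(4,n_2+t-4,2)_2=2^{3(n_2+t-4)}$ gives $4801\cdot 2^{30+3h}+327\cdot 2^{27+3h}$. For the second sum I would invoke the Delsarte formula (Theorem \ref{T-rank-distribution}) to evaluate the rank-restricted size $\Delta(4,n-n_3+s-4,2;2)_2=1+D(4,n-n_3+s-4,2,2)_2=1+\begin{bmatrix}4\\2\end{bmatrix}_2(2^{n-n_3+s-4}-1)$; the single Gaussian binomial $\begin{bmatrix}4\\2\end{bmatrix}_2=35$ produces $4801(1+35(2^{10+h}-1))+327(1+35(2^{9+h}-1))$. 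For the third term, the leading factor $\Delta(\omega_1,n_2-2k+T_1^{\min}+\omega_1,\delta;\omega_1-\delta)_2=\Delta(2,7+h,2;0)_2$ equals $1$, since the restricting bound $r=\omega_1-\delta=0$ lies below $d=2$ and its defining sum is empty; the exponent then simplifies to $\min\{7-2i,\,7-2j+h\}+14+h-2i-2j$, reproducing the claimed double sum.

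I do not expect a genuine obstacle here: the argument is a bookkeeping verification that the general bound specializes correctly. The two points demanding care are the floor identity $\lfloor(7+h)/2\rfloor-1=\lfloor(5+h)/2\rfloor$, which aligns the upper index of the $j$-sum, and the vanishing of the rank-restriction parameter $\omega_1-\delta=0$, which collapses the otherwise-present factor $\Delta(\cdots;0)_2$ to $1$. Once these are noted, the three lines of the stated lower bound follow termwise, and summing them yields the asserted estimate for $A_2(18+h,4,\{4\})$.
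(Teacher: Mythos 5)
Your proposal is correct and follows exactly the same route as the paper's own proof: specialize Theorem \ref{T-BPM-CDC} with $q=2$, $n=18+h$, $n_1=n_3=8$, $n_2=10+h$, $k=4$, $\delta=2$, $T_1=T_2=\{4,3\}$, using the MDDC of Example \ref{E-Alg}, which yields $\mu_1=\mu_2=9$, $\mu_3=h$, $\omega_1=\omega_2=2$, $\theta_1=2$, $\theta_2=\min\{3,\lfloor(5+h)/2\rfloor\}$. The paper merely lists these parameter values and cites the theorem, whereas you additionally verify the hypotheses and carry out the termwise evaluation (including $\Delta(2,7+h,2;0)_2=1$ and the floor identity), all of which checks out.
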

\begin{proof}
    Let $q=2$, $n=18+h$, $n_1=8$, $n_2=10+h$, $n_3=8$, $k=4$, $\delta=2$, $T_1=\{4,3\}$, and $T_2=\{4,3\}$. It is easy to verify that  $n_1+k-T_1^{\min} \leq n-n_3+T_2^{\min}-k$.  Set $\mu_1=9$, $\mu_2=9$, $\mu_3=h$, $\omega_1=\omega_2=2$, $\theta_1=2$, and $\theta_2=\min\{3,\lfloor \frac{5+h}{2} \rfloor\}$. Let $\Lambda_1^i=7-2i$, $\Lambda_3^j=7-2j+h$, $\Gamma_1^i=\Gamma_3^j=2$, and $d_1^{(i,j)}=d_2^{(i,j)}=1$ for $0 \leq i \leq \theta_1$, $0 \leq j \leq  \theta_2$. If we take the $(8,4,3,\{4,3\})_2$ MDDC from Example \ref{E-Alg}, then by Theorem \ref{T-BPM-CDC}, the conclusion follows.
\end{proof}

The following theorem gives a lower bound for the size of $(n,2\delta,\delta+1)_q$ MDDCs, which is useful for our constructions of CDCs.
\begin{theorem}[\cite{MixDD}]\label{Constr-MDDC}
    Let $n$, $k$, and $\delta$ be integers with $\delta \geq 2$ and $n > k \geq 2\delta-1$. Let $\cC_0$ be an $(n, |\cC_0|, 2\delta, \{k\})_q$ CDC, and \[ N= \left\lfloor 
     \frac{[\begin{smallmatrix}
         n\\ k-\delta+1
     \end{smallmatrix}]_q - |\cC_0|[\begin{smallmatrix}
         k\\\delta-1
         \end{smallmatrix}]_q}{\sum_{i=0}^{\delta-1}q^{i^2}[\begin{smallmatrix}
             k-\delta+1\\i
         \end{smallmatrix}]_q [\begin{smallmatrix}
             n-(k-\delta+1)\\ i
         \end{smallmatrix}]_q} \right\rfloor.\]
         Then there exists an $(n, 2\delta, \delta+1)_q$ MDDC $\cC$ with $\eta_k(\cC) = |\cC_0|$ and $\eta_{k-\delta+1}(\cC)= \max\{N,0\}$.
\end{theorem}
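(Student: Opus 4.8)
The plan is to realize the desired code as $\cC = \cC_0 \cup \cC'$, where $\cC' \subseteq \cG_q(n,k-\delta+1)$ is a family of $(k-\delta+1)$-dimensional subspaces to be selected, and then to check the two defining distance conditions of an $(n,2\delta,\delta+1)_q$ MDDC. The same-dimensional distance among the codewords of dimension $k$ is immediate, since $\cC_0$ is already an $(n,2\delta,\{k\})_q$ CDC. Hence only two requirements remain: distance $\geq 2\delta$ between any two subspaces in $\cC'$, and distance $\geq \delta+1$ between every $X \in \cC_0$ and every $Y \in \cC'$.

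First I would translate these requirements into intersection-dimension constraints. For $X \in \cC_0$ and $Y \in \cC'$, since $d_S(X,Y) = 2k-\delta+1 - 2\dim(X \cap Y)$, the bound $d_S(X,Y) \geq \delta+1$ holds exactly when $\dim(X \cap Y) \leq k-\delta$; as $\dim Y = k-\delta+1$, this is the same as requiring $Y \not\subseteq X$. Similarly, for $Y, Y' \in \cC'$ the bound $d_S(Y,Y') \geq 2\delta$ is equivalent to $\dim(Y \cap Y') \leq k-2\delta+1$, i.e. $\cC'$ must itself be a constant dimension code of minimum distance $2\delta$. Thus the problem reduces to packing $(k-\delta+1)$-subspaces that avoid all codewords of $\cC_0$ and are pairwise $2\delta$-separated.

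Next I would carry out the counting that produces the two terms of $N$. Let $\cA$ denote the set of all $Y \in \cG_q(n,k-\delta+1)$ with $Y \not\subseteq X$ for every $X \in \cC_0$. Each $X$ contains exactly $\begin{bmatrix} k \\ \delta-1 \end{bmatrix}_{q}$ subspaces of dimension $k-\delta+1$, so a union bound gives $|\cA| \geq \begin{bmatrix} n \\ k-\delta+1 \end{bmatrix}_{q} - |\cC_0|\begin{bmatrix} k \\ \delta-1 \end{bmatrix}_{q}$, the numerator of $N$. (In fact no $(k-\delta+1)$-subspace can lie in two distinct codewords of $\cC_0$, since that would force their distance below $2\delta$, so this count is exact; only the inequality is needed.) For the denominator, I would invoke the standard sphere-size formula: for a fixed $Y$ and $0 \leq i \leq \delta-1$, the number of $(k-\delta+1)$-subspaces $Y'$ with $\dim(Y \cap Y') = k-\delta+1-i$, i.e. $d_S(Y,Y') = 2i$, equals $q^{i^2}\begin{bmatrix} k-\delta+1 \\ i \end{bmatrix}_{q}\begin{bmatrix} n-(k-\delta+1) \\ i \end{bmatrix}_{q}$. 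Summing over $0 \leq i \leq \delta-1$ shows that the ``conflict ball'' of all $Y'$ with $d_S(Y,Y') < 2\delta$ has size exactly the denominator of $N$.

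Finally I would apply a greedy packing argument: starting from the pool $\cA$, repeatedly pick a subspace $Y$, place it in $\cC'$, and delete from the pool every element within subspace distance $< 2\delta$ of $Y$. Each step removes at most (the denominator) subspaces, so the process runs for at least $\lceil |\cA| / (\text{denominator}) \rceil \geq N$ steps, producing $N$ pairwise $2\delta$-separated subspaces, all lying in $\cA$. Then $\cC = \cC_0 \cup \cC'$ is the required MDDC with $\eta_k(\cC) = |\cC_0|$ and $\eta_{k-\delta+1}(\cC) = N$; when the numerator is nonpositive one simply takes $\cC' = \emptyset$, which accounts for the $\max\{N,0\}$. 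The hard part will not be any single step but the exact identification of the two combinatorial quantities — especially verifying the distance-$2i$ sphere-size formula and checking that its truncation at $i = \delta-1$ is precisely the greedy conflict count — since it is this matching that makes the packing bound reproduce the floor appearing in $N$.
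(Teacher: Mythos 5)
The paper does not prove this statement itself; it is imported from \cite{MixDD}, where it is established by precisely the greedy argument you describe (this is the content of Algorithm~1 there). Your proposal is correct and essentially identical to that source's proof: the reduction of the cross-dimension condition to $Y \not\subseteq X$ and of the within-$\cC'$ condition to $\dim(Y\cap Y')\leq k-2\delta+1$, the union-bound count $[\begin{smallmatrix} n\\ k-\delta+1 \end{smallmatrix}]_q - |\cC_0|[\begin{smallmatrix} k\\ \delta-1 \end{smallmatrix}]_q$ of admissible $(k-\delta+1)$-spaces, the Grassmann sphere-size formula $q^{i^2}[\begin{smallmatrix} k-\delta+1\\ i \end{smallmatrix}]_q[\begin{smallmatrix} n-(k-\delta+1)\\ i \end{smallmatrix}]_q$ for distance $2i$, and the resulting packing bound all check out.
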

In the following corollaries, 
we denote \[N_q(x,y,z)= \left\lfloor 
    \frac{[\begin{smallmatrix}
         x\\ z-y+1
     \end{smallmatrix}]_q - S_q(x,2y,z)[\begin{smallmatrix}
         z\\ y-1
         \end{smallmatrix}]_q}{\sum_{i=0}^{y-1}q^{i^2}[\begin{smallmatrix}
             z-y+1\\ i
         \end{smallmatrix}]_q [\begin{smallmatrix}
             x-(z-y+1)\\ i
         \end{smallmatrix}]_q} \right\rfloor,\]
where $S_q(x,2y,z)$ represents the best-known size of the $(x,2y, \{z\})_q$ CDC.

With the help of Theorem \ref{Constr-MDDC}, we can obtain general results of Corollaries \ref{2-12} and \ref{2-18}.

\begin{corollary}\label{q-18}
    Let $h$ be an integer with $h \geq 0$. Then
    \begin{align*}
        A_q(18+h,4,\{4\}) 
  &\geq A_q(8,4,\{4\}) \cdot q^{30+3h}+N_q(8,2,4) \cdot q^{27+3h} \\& \quad + A_q(8,4,\{4\}) \cdot (1+(q^{10+h}-1)(q^2+1) (q^2+q+1))\\& \quad +N_q(8,2,4)\cdot(1+(q^{9+h}-1)(q^2+1)(q^2+q+1))\\& \quad+  \sum_{i=0}^2 \sum_{j=0}^{ \min\{3,\lfloor\frac{5+h}{2} \rfloor \}} q^{\min\{7-2i, 7-2j+h\}+14+h-2i-2j}.
    \end{align*}
\end{corollary}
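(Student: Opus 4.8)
The plan is to obtain Corollary \ref{q-18} as a direct specialization of Theorem \ref{T-BPM-CDC}, running exactly the binary argument of Corollary \ref{2-18} but keeping $q$ generic and supplying the two mixed-dimension ingredients through Theorem \ref{Constr-MDDC}. First I would fix the parameters $n=18+h$, $n_1=n_3=8$, $n_2=10+h$, $k=4$, $\delta=2$, and $T_1=T_2=\{4,3\}$, and verify that every hypothesis of Theorem \ref{T-BPM-CDC} holds: $n=n_1+n_2$, $k=2\delta$, $n_i\geq k$, $l_{T_1}=l_{T_2}=1<2\delta$, $T_1\subseteq[\delta,k]=[2,4]$, $T_2\subseteq[k+\delta-T_1^{\min},k]=[3,4]$, and the crucial inequality $n_1+k-T_1^{\min}=9\leq 9+h=n-n_3+T_2^{\min}-k$, which is exactly where the assumption $h\geq 0$ is used.

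Next I would construct the two required $(8,2\delta,2\delta-l_{T_1},T_1)_q=(8,4,3,\{4,3\})_q$ MDDCs $\cX_1$ and $\cX_3$ from Theorem \ref{Constr-MDDC} applied with $(n,k,\delta)=(8,4,2)$: taking $\cC_0$ to be a largest known $(8,4,\{4\})_q$ CDC yields an MDDC whose dimension distribution is $\eta_4=A_q(8,4,\{4\})$ and $\eta_3=N_q(8,2,4)$. These two counts are precisely the coefficients appearing in the first two lines of the asserted bound, and the same MDDC serves for both $\cX_1$ and $\cX_3$.

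The remaining work is the (routine) evaluation of the three summands in Theorem \ref{T-BPM-CDC}. I would compute the derived quantities $\mu_1=\mu_2=9$, $\mu_3=h$; since $\mu_1=\mu_2$ the ``otherwise'' branch gives $\omega_1=\omega_2=\delta=2$; then $\theta_1=\lfloor (9-2)/2\rfloor-1=2$ and, via the identity $\lfloor (7+h)/2\rfloor-1=\lfloor (5+h)/2\rfloor$, $\theta_2=\min\{3,\lfloor (5+h)/2\rfloor\}$; finally $\Lambda_1^i=7-2i$, $\Lambda_3^j=7-2j+h$, $\Gamma_1^i=\Gamma_3^j=2$, and $d_1^{(i,j)}=d_2^{(i,j)}=1$ because $\delta=2$. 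Substituting these, the first term collapses through $\Delta(k,n_2+t-k,\delta)_q=q^{3(n_2+t-k)}$ to $A_q(8,4,\{4\})\,q^{30+3h}+N_q(8,2,4)\,q^{27+3h}$; the second uses the Delsarte count $D(4,N,2,2)_q=\begin{bmatrix}4\\2\end{bmatrix}_q(q^{N}-1)$ with $\begin{bmatrix}4\\2\end{bmatrix}_q=(q^2+1)(q^2+q+1)$ to give $\Delta(4,N,2;2)_q=1+(q^{N}-1)(q^2+1)(q^2+q+1)$ for $N\in\{10+h,9+h\}$; and in the third the leading factor $\Delta(\omega_1,\,n_2-2k+T_1^{\min}+\omega_1,\,\delta;\omega_1-\delta)_q=\Delta(2,7+h,2;0)_q$ equals $1$ (an empty Delsarte sum, since $r=0<d$), while the exponent simplifies to $\min\{7-2i,\,7-2j+h\}+14+h-2i-2j$. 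Assembling the three pieces reproduces the stated inequality.

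The step demanding the most care -- rather than any genuine obstacle -- is confirming that the abstract parameters $\theta_1,\theta_2,\Lambda_1^i,\Lambda_3^j,\Gamma_1^i,\Gamma_3^j,d_1^{(i,j)},d_2^{(i,j)}$ defined in Theorem \ref{T-BPM-CDC} indeed take the clean closed forms listed in the corollary. In particular one must check the floor identity $\lfloor (7+h)/2\rfloor-1=\lfloor (5+h)/2\rfloor$ and verify that throughout the index ranges $0\leq i\leq\theta_1$, $0\leq j\leq\theta_2$ the maxima $\Lambda_1^i=\max\{2,\,7-2i\}$ and $\Lambda_3^j=\max\{2,\,7+h-2j\}$ are attained at their second arguments, which is exactly what forces $\Gamma_1^i=\Gamma_3^j=\omega=2$ and keeps every exponent in the double sum in the simple form above.
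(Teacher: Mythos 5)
Your proposal is correct and follows exactly the paper's route: the paper proves Corollary \ref{q-18} by repeating the proof of Corollary \ref{2-18} for general $q$ (same parameters $n=18+h$, $n_1=n_3=8$, $n_2=10+h$, $k=4$, $\delta=2$, $T_1=T_2=\{4,3\}$, hence $\mu_1=\mu_2=9$, $\omega_1=\omega_2=2$, $\theta_1=2$, $\theta_2=\min\{3,\lfloor(5+h)/2\rfloor\}$), with the $(8,4,3,\{4,3\})_q$ MDDC supplied by Theorem \ref{Constr-MDDC}, and then invoking Theorem \ref{T-BPM-CDC}. Your parameter checks and the evaluations of the $\Delta$-terms match the paper's computation.
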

\begin{proof}
The proof of the Corollary \ref{q-18} is similar to the proof of Corollary \ref{2-18}. Construct the $(8,4,3,\{4,3\})_q$ MDDC from Theorem \ref{Constr-MDDC}, and then the conclusion follows by Theorem \ref{T-BPM-CDC}.
\end{proof}

\begin{corollary}\label{q-12}
Let $\delta \geq 2$ and $h \geq \delta-1$ be two integers. \\
Set $d^{(i,j)}_1 =\begin{cases}
    \lceil \frac{\delta}{2} 
    \rceil & ~\text{if}~j-i \geq \frac{h+2}{\delta}-4;\\
    \lfloor \frac{\delta}{2} 
    \rfloor &~\text{otherwise},
\end{cases}$ and  $d^{(i,j)}_2=\delta-d^{(i,j)}_1$ for $0 \leq i \leq 2$ and $0 \leq j \leq  \min\{1,\lfloor \frac{h+1}{\delta} 
\rfloor -1\}$. If $N_q(4\delta,\delta,2\delta) > 0$, then 
\begin{align*} A_q(6\delta+h,2\delta,\{2\delta\}) &\geq A_q(4\delta,2\delta,\{2\delta\})\cdot q^{(2\delta+h)(\delta+1)}\\&\quad+N_q(4\delta,\delta,2\delta)\cdot q^{\max\{\delta+h+1,2\delta\}(\min\{\delta+h+1,2\delta\}-\delta+1)} \\&\quad +(1+\left[\begin{smallmatrix}
    2\delta\\ \delta
\end{smallmatrix}\right]_q \cdot (q^{4\delta+h}-1)) \\& \quad +\sum_{i=0}^2 \sum_{j=0}^{\min\{1,\lfloor \frac{h+1}{\delta} \rfloor-1\}} [q^{\min\{((4-i)\delta-1)d_2^{(i,j)},(h+1-j\delta)d_1^{(i,j)}\}} \\&\quad \cdot q^{(4-i)\delta+h-j\delta}].
\end{align*}
\end{corollary}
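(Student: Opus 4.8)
The plan is to obtain the bound as a single application of Theorem~\ref{T-SPar-CDC}, with the input MDDC supplied by Theorem~\ref{Constr-MDDC}; Corollary~\ref{q-12} is exactly the $q$-ary, general-$\delta$ version of Corollary~\ref{2-12} (the case $q=2$, $\delta=2$), so I would follow the same route. Specifically, I would invoke Theorem~\ref{T-SPar-CDC} with
\[ k = 2\delta, \quad n = 6\delta + h, \quad n_1 = 4\delta, \quad n_2 = 2\delta + h, \quad T_1 = \{2\delta, \delta+1\}, \]
so that $T_1^{\min} = \delta+1$ and $l_{T_1} = \delta-1 < 2\delta$. The routine hypotheses ($n = n_1 + n_2$, $n_1 \ge k$, $n_2 \ge k$, $k = 2\delta \ge 4$, $T_1 \subseteq [\delta,2\delta]$) hold at once, and the one nontrivial condition $n_1 + k - T_1^{\min} \le n - k$ reduces to $5\delta - 1 \le 4\delta + h$, i.e.\ precisely the standing hypothesis $h \ge \delta - 1$.

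Next I would produce the required MDDC $\cX_1$. The construction needs an $(n_1, 2\delta, 2\delta - l_{T_1}, T_1)_q = (4\delta, 2\delta, \delta+1, \{2\delta, \delta+1\})_q$ MDDC, which is exactly the output of Theorem~\ref{Constr-MDDC} applied with its parameters set to $n = 4\delta$, $k = 2\delta$, and the same $\delta$ (here $4\delta > 2\delta \ge 2\delta - 1$ and $\delta \ge 2$), starting from a $(4\delta, 2\delta, \{2\delta\})_q$ CDC $\cC_0$. This gives $\eta_{2\delta}(\cX_1) = A_q(4\delta, 2\delta, \{2\delta\})$ and $\eta_{\delta+1}(\cX_1) = \max\{N_q(4\delta, \delta, 2\delta), 0\}$; the hypothesis $N_q(4\delta, \delta, 2\delta) > 0$ removes the maximum, matching the two multiplicities appearing in the statement.

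I would then specialize the auxiliary data of Theorem~\ref{T-SPar-CDC}. Short floor estimates give $\theta_1 = \lfloor (3\delta - 1)/\delta \rfloor = 2$ (valid for every $\delta \ge 2$) and $\theta_2 = \min\{1, \lfloor (h+1)/\delta \rfloor - 1\}$. Since $(4-i)\delta - 1 \ge \delta$ for $0 \le i \le 2$, we get $\Lambda_1^i = (4-i)\delta - 1$ and $\Gamma^i = \delta$, while $\Lambda_3^j = h + 1 - j\delta$; the comparison $\Lambda_1^i \ge \Lambda_3^j$ rearranges to $j - i \ge (h+2)/\delta - 4$, which is the stated rule for $d_1^{(i,j)}$. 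Because $\Gamma^i - \delta + 1 = 1$, the exponent $\min\{\Lambda_1^i d_2^{(i,j)}, \Lambda_3^j d_1^{(i,j)}\} + \Lambda_1^i(\Gamma^i - \delta + 1) + \Lambda_3^j$ collapses to $\min\{((4-i)\delta - 1)d_2^{(i,j)}, (h+1-j\delta)d_1^{(i,j)}\} + (4-i)\delta + h - j\delta$, i.e.\ the summand in the fourth term.

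The remaining — and most delicate — step is to simplify the three closed-form rank-metric factors. For the first lifting block, $\Delta(2\delta, 2\delta + h, \delta)_q = q^{(2\delta+h)(\delta+1)}$ (at $t = 2\delta$) and $\Delta(2\delta, \delta + h + 1, \delta)_q = q^{\max\{\delta+h+1, 2\delta\}(\min\{\delta+h+1, 2\delta\} - \delta + 1)}$ (at $t = \delta+1$), giving the first two terms. For the $\cC_3'$ block, $\Delta(2\delta, 4\delta + h, \delta; \delta)_q$ has only its rank-$\delta$ layer, so Theorem~\ref{T-rank-distribution} evaluates it as $1 + \left[\begin{smallmatrix} 2\delta \\ \delta \end{smallmatrix}\right]_q (q^{4\delta + h} - 1)$, the third term. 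Finally the prefactor of the double sum is $\Delta(k - \delta, n_2 - k + T_1^{\min} - \delta, \delta; k - 2\delta)_q = \Delta(\delta, h+1, \delta; 0)_q = 1$, since the rank bound $r = k - 2\delta = 0$ leaves only the zero matrix — this is why no factor multiplies the sum in the statement. Assembling the four contributions through Theorem~\ref{T-SPar-CDC} yields the claimed inequality. I expect the only real care to be needed in the floor computations for $\theta_1, \theta_2$, the max/min collapses defining $\Lambda_1^i, \Gamma^i$, and the recognition that the rank-restricted factor degenerates to $1$; the rest is bookkeeping.
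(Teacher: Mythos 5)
Your proposal is correct and follows essentially the same route as the paper: a single application of Theorem~\ref{T-SPar-CDC} with $n_1=4\delta$, $n_2=2\delta+h$, $k=2\delta$, $T_1=\{2\delta,\delta+1\}$, feeding in the $(4\delta,2\delta,\delta+1,T_1)_q$ MDDC from Theorem~\ref{Constr-MDDC}; your parameter specializations ($\theta_1=2$, $\theta_2$, $\Lambda_1^i$, $\Gamma^i$, $\Lambda_3^j$) and the evaluations of the three $\Delta$-factors all agree with the paper's (terser) proof.
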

\begin{proof}
    Set $n=6\delta+h$, $n_1=4\delta$, $n_2=2\delta+h$, $k=2\delta$, $d=2\delta$, and $T_1=\{2\delta,\delta+1\}$. Since $h \geq \delta-1$, it is obvious that $n_1+k-T_1^{\min} \leq n-k$. Set $\theta_1=2$ and $\theta_2=\min\{1,\lfloor \frac{h+1}{\delta} \rfloor -1\}$. Set $\Lambda_1^i=(4-i)\delta-1$, $\Gamma^i=\delta$,  for $0 \leq i \leq \theta_1$, and $\Lambda_3^j=h+1-j\delta$ for $0 \leq j \leq \theta_2$. If we construct the $(4\delta,2\delta,\delta+1, T_1)_q$ MDDC from Theorem \ref{Constr-MDDC}, then the conclusion follows by Theorem 
\ref{T-SPar-CDC}.
\end{proof} 

\begin{example}
    To illustrate the difference between our constructions and the parallel mixed dimension codes in \cite{ParllelMixed}, take $h=2$ and $\delta=2$ in Corollary \ref{q-12}. By the parallel mixed dimension construction, we have
    \begin{align*}
        A_q(14,4,\{4\})\geq A_q(8,4,\{4\}) \cdot q^{18}+N_q(8,2,4) \cdot q^{15}+(1+\left[\begin{smallmatrix}
    4\\ 2
\end{smallmatrix}\right]_q \cdot (q^{10}-1)).
    \end{align*}
    While from our construction, we have 
    \begin{align*}
        A_q(14,4,\{4\})&\geq A_q(8,4,\{4\}) \cdot q^{18}+N_q(8,2,4) \cdot q^{15}+(1+\left[\begin{smallmatrix}
    4\\ 2
\end{smallmatrix}\right]_q \cdot (q^{10}-1))\\&
+q^{13}+q^{11}+q^9.
    \end{align*}
Here, $\left[\begin{smallmatrix}
    4\\ 2
\end{smallmatrix}\right]_q=(q^2+1)(q^2+q+1)$. 
    Therefore, 
    \begin{align*}
       A_q(14,4,\{4\}) &\geq  A_q(8,4,\{4\}) \cdot q^{18}+N_q(8,2,4) \cdot q^{15}+q^{14} + 2q^{13} \\&+ 2q^{12} + 2q^{11} + q^{10} + q^9 - q^4 - q^3 - 2q^2 - q.
    \end{align*}
\end{example}

Corollaries \ref{2-12}, \ref{2-18}, \ref{q-18}, and \ref{q-12}
provide many new CDCs with larger sizes
than those of the previously best-known codes in \cite{ParllelMixed}. We list some of our new lower bounds in Table \ref{NewBound}. Overall at least $49$ new lower bounds are provided. Table \ref{NewBound1} shows some examples of these new lower bounds.

\begin{table}[!htb]
\caption{New lower bounds of CDCs}

\label{NewBound}

\begin{tabular*}{\hsize}{@{}@{\extracolsep{\fill}}ll@{}}
 \toprule
New lower bounds $(q=2,3,4,5,7,8,9)$ & Corollary \\
\midrule
$A_2(12+h,4,\{4\})~(1 \leq h \leq 5)$ & Corollary \ref{2-12} \\ 

  $A_2(18+h,4,\{4\})~(0 \leq h \leq 1)$ & Corollary \ref{2-18} \\ 
  
  $ A_q(n,4,\{4\})~(q\geq3, 13 \leq n \leq 17)$ & Corollary \ref{q-12} \\ 
  
 $ A_q(n,4,\{4\})~(q\geq3, n=18,19)$ & Corollary \ref{q-18} \\ \bottomrule
\end{tabular*}
\end{table}

\begin{table}[!htp]
  \caption{Comparison of cardinalities of our codes with the codes in \cite{ParllelMixed}  when $q=2, 3$}
  \label{NewBound1}
  \begin{tabular*}{\hsize}{@{}@{\extracolsep{\fill}}lll@{}}
  \toprule
   $A_q(n,d,k)$ & New & Old \\ \midrule
  $A_2(13,4,\{4\})$ & 158679134 & 158676446\\ 
  
  $A_2(14,4,\{4\})$& 
  1269315038 &1269304286\\

  $A_2(15,4,\{4\})$& 10154264158&10154219486 \\
  
  $A_2(16,4,\{4\})$& 81233502686&81233326046 \\
  
  $A_2(17,4,\{4\})$& 649866384350&649865748446 \\ 
  
  $A_2(18,4,\{4\})$& 5199103860464&5199101447408\\
  
  $A_2(19,4,\{4\})$& 41591750345072&41591745233648\\
  
  $A_3(13,4,\{4\})$ & 7793875720905& 7793875521888\\ 
  
  $A_3(14,4,\{4\})$ & 210434579474523&
  210434577683370 \\ 
  
  $A_3(15,4,\{4\})$ & 
  5681733429422760&
  5681733413221464 \\ 
  
  $A_3(16,4,\{4\})$ & 153406801749741632
  &
  153406801604284256 \\ 
  
  $A_3(17,4,\{4\})$ &4141983642879423488
  &4141983641657581568 \\ 
  
  $A_3(18,4,\{4\})$ & 111833562501139316736&111833562490271858688 \\ 
  
  $A_3(19,4,\{4\})$ &3019506087163758772224
&3019506087130763231232 \\ \bottomrule
  \end{tabular*}
\end{table}

\section{Conclusion}\label{Sec-Conclusion}

In this paper, we apply the generalized bilateral multilevel construction to the parallel mixed dimension construction. Our method leads to new lower bounds of CDCs for all feasible parameters in the initial parallel mixed dimension construction \cite{ParllelMixed}. MDDCs and GB-FD codes play important roles in the parallel mixed dimension construction and the generalized bilateral multilevel construction, respectively. For further research, constructing larger MDDCs and GB-FD codes can lead to better lower bounds for CDCs. Our methods focus on constructing CDCs for $k\geq d$. In the future, we aim to extend this to the case where $k < d$.

\end{document}